\numberwithin{equation}{section}
\newtheorem{theorem}{Theorem}[section]
\newtheorem{lemma}[theorem]{Lemma}
\newtheorem{proposition}[theorem]{Proposition}
\newtheorem{corollary}[theorem]{Corollary}
\theoremstyle{definition}
\newtheorem{definition}[theorem]{Definition}
\newtheorem{remark}[theorem]{Remark}
\def\E{{\mathbb E}}
\def\R{{\mathbb R}}
\def\P{{\mathcal P}}
\def\H{{\mathcal H}}
\def\X{{\mathcal X}}
\def\G{{\mathcal G}}
\def\T{{\mathcal T}}
\def\F{{\mathcal F}}
\title{Liquidity, risk measures, and concentration of measure}
\author{Daniel Lacker}
\thanks{This material is based upon work supported by the National Science Foundation under Award No. DMS-1502980}
\address{\noindent Brown University, Division of Applied Mathematics, 182 George St, Providence, RI 02906}
\email{daniel\_lacker@brown.edu}
\begin{document}

\begin{abstract}
Expanding on techniques of concentration of measure, we develop a quantitative framework for modeling liquidity risk using convex risk measures. The fundamental objects of study are curves of the form $(\rho(\lambda X))_{\lambda \ge 0}$, where $\rho$ is a convex risk measure and $X$ a random variable, and we call such a curve a \emph{liquidity risk profile}. The shape of a liquidity risk profile is intimately linked with the tail behavior of the underlying $X$ for some notable classes of risk measures, namely shortfall risk measures. We exploit this link to systematically bound liquidity risk profiles from above by other real functions $\gamma$, deriving tractable necessary and sufficient conditions for \emph{concentration inequalities} of the form $\rho(\lambda X) \le \gamma(\lambda)$, for all $\lambda \ge 0$. These concentration inequalities admit useful dual representations related to transport inequalities, and this leads to efficient uniform bounds for liquidity risk profiles for large classes of $X$. On the other hand, some modest new mathematical results emerge from this analysis, including a new characterization of some classical transport-entropy inequalities. Lastly, the analysis is deepened by means of a surprising connection between time consistency properties of law invariant risk measures and the tensorization of concentration inequalities.
\end{abstract}

\maketitle

\section{Introduction}
The main goal of this paper is to develop a quantitative model of liquidity risk in terms of convex risk measures. As the title suggests, this is accomplished using ideas from the theory of concentration of measure. 
Conversely, we hope to illustrate that the well-developed convex duality for risk measures sheds new light on some problems of concentration, specifically pertaining to transport inequalities, although these modest results are not the main thrust of the paper.
Throughout the paper, a \emph{risk measure} is any convex functional $\rho : L^1 \rightarrow (-\infty,\infty]$ satisfying the following axioms:
\begin{enumerate}
\item Normalization: $\rho(0)=0$.
\item Cash additivity: $\rho(X + c) = \rho(X) + c$ for all $X \in L^1$, $c \in \R$.
\item Monotonicity: $\rho(X) \le \rho(Y)$ whenever $X,Y \in L^1$ with $X \le Y$ a.s.
\end{enumerate}
The space $L^1$ is defined relative to a fixed probability space $(\Omega,\F,P)$. Note that risk measures are more often assumed to be \emph{decreasing}, whereas we assume they are \emph{increasing}, which makes no difference beyond a sign change.

We think of a random variable $X$ as the value of a financial \emph{loss} (i.e., positive numbers are losses, negatives are gains), quoted in units of a safe or liquid asset, realized at the end of some fixed trading period. (Since our sign convention is opposite of the usual, we think of $X$ as a loss and not a gain.)
As usual, $\rho(X)$ quantifies the risk of the loss $X$, or more precisely the minimal capital (denominated in terms of a safe or risk-free reference instrument) that must be subtracted from the loss $X$ to make it acceptable. If $\rho(X) \le 0$, the position is \emph{acceptable}, while $\rho(X) > 0$ means the position is risky. Originally, risk measures were introduced by Artzner et al. \cite{artzner1999coherent} with an additional axiom of \emph{coherence}, meaning $\rho(\lambda X) = \lambda \rho(X)$ for all $\lambda \ge 0$ and $X \in L^1$. This ignores liquidity risk, in the sense that doubling a loss doubles its risk. For precisely this reason, convex risk measures were introduced in \cite{follmer-schied-convex,frittelli2002putting}, rendering the curve $(\rho(\lambda X))_{\lambda \ge 0}$ nonlinear. 
Nonetheless, explicit use of convex risk measures to quantify liquidity risk has been quite limited, and all such studies go well beyond the traditional setting by using more complex types of risk measures, such as liquidity-adjusted risk measures \cite{acerbi-scandolo-liquidity,weber-liquidityadjusted,jarrow-protter-liquidity} or set-valued risk measures \cite{jouini-meddeb-touzi,hamel-heyde-rudloff}.

In this paper we take the convexity axiom seriously by developing some quantitative tools with which classical convex risk measures can be used to model liquidity risk. We do so not by attaching to each position $X$ a \emph{single quantity} capturing the liquidity of the position, but rather by studying how to manipulate and bound curves of the form $(\rho(\lambda X))_{\lambda \ge 0}$. In this sense, liquidity to us is  infinite-dimensional, not one-dimensional.
Specifically, we undertake a thorough study of the functions $(\rho(\lambda X))_{\lambda \ge 0}$, where $X \in L^1$ is a given loss and $\rho$ a given risk measure. We call this function the \emph{liquidity risk profile} associated to $X$, in light of the fact that this curve describes exactly how the risk of a financial loss $X$ scales with its size. Of course, if the risk measure is coherent, then the linear liquidity risk profile $\rho(\lambda X) = \lambda \rho(X)$ is not very interesting. More generally, $\lambda \mapsto \rho(\lambda X)$ is always convex. Since $\rho$ is normalized, when $\lambda \in [0,1]$ we have $\rho(\lambda X) \le \lambda\rho(X)$, and when $\lambda > 1$ we have $\rho(\lambda X) \ge \lambda\rho(X)$. In particular, if a position $X$ is not acceptable (i.e., $\rho(X) > 0$), then the risk $\rho(\lambda X)$ scales super-linearly with the size of the position. This is quite natural, since in illiquid markets a constant price is typically unavailable for large volumes.

The proposed framework is intentionally ambiguous about the mechanisms behind liquidity, and we even remain agnostic about the very definition of the term. Rather, we find it more agreeable to model directly \emph{liquidity risk} as a concept distinct from any precise notion of market liquidity. Illiquidity could come from (or even be defined by) a wide variety of market frictions, such as a scarcity of counterparties or prohibitive search costs or transaction costs. No matter the source of illiquidity, we can agree that leverage is riskier in less liquid markets. Liquidity risk profiles, as we have defined them, indeed describe an interplay between leverage and risk, in a manner that adapts to any choice of risk measure. 
To distinguish between liquidity risk and market liquidity is no less reasonable than the widespread practice of abstracting the source of randomness in probabilistic models: The precise mechanisms behind the randomness are typically hidden in ``omega,'' that is in the dependence of a random variable $X$ on the underlying source of uncertainty, while only the distribution of $X$ is relevant for many modeling purposes. 

Using ideas from concentration of measure, we will derive tractable descriptions of and criteria for bounding liquidity risk profiles. Primarily we study \emph{concentration inequalities} of the form
\begin{align}
\rho(\lambda X) \le \gamma(\lambda), \text{ for all } \lambda \ge 0, \label{intro:main-inequality}
\end{align}
where $\gamma \ge 0$ is a given increasing convex function.
This inequality means that the capital required to cover the risk of a loss of $\lambda X$ is no more than $\gamma(\lambda)$, for each $\lambda \ge 0$. Values of $\gamma(\lambda)$ for \emph{large} $\lambda$ tell you how risky it is to leverage the position, whereas the behavior as $\lambda \downarrow 0$ tells you more about the marginal risk of $X$. 
Suppose for the moment that $\rho(Y) \ge \E Y$ for all $Y \in L^1$, which indeed is well known to hold if $\rho$ is law invariant (see Proposition \ref{pr:convexorder}). Then
\begin{align}
\rho(\lambda(X - \E X)) \ge 0, \text{ for all } \lambda \in \R. \label{intro:def:risklowerbound}
\end{align}
Thinking of the underlying $P$ as a pricing measure, $\E X$ is the (liquid) price of the position $X$, and $X - \E X$ is the payoff minus its price. The more relevant liquidity risk profile may then be that of $X - \E X$, which is the loss faced when selling the claim $X$ to an external party for the price of $\E X$;
unless equality holds in \eqref{intro:def:risklowerbound}, this incurs a nontrivial liquidity risk.  Hence, we focus on concentration inequalities of the form (rewritten using cash additivity)
\begin{align}
\rho(\lambda (X - \E X)) \le \gamma(\lambda) \quad \Leftrightarrow \quad \rho(\lambda X) \le \lambda \E X + \gamma(\lambda), \ \forall \lambda \ge 0. \label{intro:main-inequality2}
\end{align}
Of course, this is a special case of \eqref{intro:main-inequality}, simply with another choice of $\gamma$. It should be noted that functionals of the form $X \mapsto \rho(X - \E X)$ have themselves been studied extensively in recent years under the name \emph{deviation measures}, following Rockafellar et al. \cite{rockafellar2006generalized}, but we find the language of risk measures to be better suited to our purposes.

What are some reasonable and informative choices of $\gamma$? In light of \eqref{intro:def:risklowerbound}, we assume $\gamma$ is nonnegative, which will also simplify many calculations later.
Additionally, as $\lambda \downarrow 0$, we expect some form of continuity to enforce $\rho(\lambda X) \rightarrow \rho(0) = 0$. Thus, a sharp concentration inequality would have $\gamma(0)=0$, unless one is only concerned with controlling \emph{large} scalings of the loss $X$. The right-derivative
\begin{align}
\lim_{\lambda \downarrow 0}\lambda^{-1}\rho(\lambda(X - \E X)) \label{intro:rightderivative}
\end{align}
captures the marginal or transactional risk of selling $X$ (again at the price $\E X$). Indeed, if $\rho \ge \E$ on $L^1$ as in the previous paragraph, then $\rho(-X) \le \E X \le \rho(X)$ for all $X$, and we may think of the difference between the right-derivative in \eqref{intro:rightderivative} and the analogous left-derivative as a bid-ask spread.
These derivatives are zero precisely when liquidity risk vanishes with position size, in the sense that $\rho(\lambda X)$ behaves like $\lambda \E X$ at the first order. That is, when buying or selling a small amount of $X$, the risk should be roughly the price. With this in mind, we will pay special attention to the case $\gamma'(0)=0$.
If again $\rho \ge \E$ on $L^1$, and if $\gamma(0)=0$, then the bound \eqref{intro:main-inequality2} implies
\[
\E X \le \lim_{\lambda \downarrow 0}\lambda^{-1}\rho(\lambda X) \le \E X +  \gamma'(0).
\]
It was shown by Barrieu and El Karoui \cite{barrieu-elkaroui-infconvolution} that the decreasing limit $M\rho(X) := \lim_{\lambda \downarrow 0}\lambda^{-1}\rho(\lambda X)$ defines a coherent risk measure, which is naturally interpreted as a limit of vanishing liquidity risk. When $\gamma(0)=\gamma'(0)=0$, we see that if $X$ satisfies the concentration inequality \eqref{intro:main-inequality2} then $M\rho(X) = \E X$, or equivalently $M\rho(X - \E X) = 0$.

Before discussing the mathematical ideas any further, let us address some critiques leveled on convex risk measures by Acerbi and Scandolo in their influential recent paper \cite{acerbi-scandolo-liquidity}. First, they make a compelling case that convex risk measures are best interpreted as evaluating risk associated to \emph{marked-to-market portfolio values}, distinct from \emph{portfolio content}. Nonetheless, we hope to illustrate a meaningful theory of liquidity which is both simple and implementable, and which avoids recourse to the additional nonlinear \emph{value functions} of \cite{acerbi-scandolo-liquidity}. Moreover, we disagree somewhat with their primary critique:
\begin{quotation}
Violations [of positive homogeneity and subadditivity] are introduced at the level of the chosen risk measures and therefore they apply to all portfolios irrespective of their size or content. This in turn has the consequence that in the asymptotic limit of vanishing liquidity risk we will not be able to recover the fully coherent scheme, which we consider the appropriate one in this limit. \cite{acerbi-scandolo-liquidity}
\end{quotation}
This we contest on two grounds. First, as mentioned in the previous paragraph, it was shown in \cite{barrieu-elkaroui-infconvolution} that the (decreasing) limit $M\rho(X) = \lim_{\lambda \downarrow 0}\lambda^{-1}\rho(\lambda X)$ always defines a \emph{coherent} risk measure. This limit should indeed be interpreted as vanishing liquidity risk, as it draws out the marginal risk. Second, we point out that the liquidity risk profile $\lambda \mapsto \rho(\lambda X)$ depends quite sensitively on the loss $X$. Interpreting a random variable $X$ as a marked-to-market portfolio value, the precise structure of $X$ (namely its dependence on $\omega$) \emph{hides} but certainly does not \emph{remove} the dependence on the ``size or content'' of the underlying portfolio. This is even more clear when we introduce an \emph{initial position} $Y$, and study the liquidity risk profile $(\rho(Y + \lambda X) - \rho(Y))_{\lambda \ge 0}$ of $X$ relative to $Y$. We subtract $\rho(Y)$ for normalization purposes, noting that $X \mapsto \rho(X + Y) - \rho(Y)$ is again a risk measure by our definition. We will see in Corollary \ref{co:sensitiveinitialposition} that this liquidity risk profile depends heavily on the choice of $Y$ as well as $X$; in particular, if $\gamma \ge 0$ is nondecreasing and finite with $\gamma(0)=\gamma'(0)=0$, then the bound $\rho(Y + \lambda X) - \rho(Y) \le \gamma(\lambda)$ holds for all $\lambda \ge 0$ and all $Y \in L^1$ if and only if $X \le 0$ a.s.

Let us finally elucidate the first simple connection between risk measures and concentration of measure announced at the beginning of the paper. When $\rho$ is the \emph{entropic risk measure} $\rho(X) = \log\E[e^X]$, the concentration inequalities as we have defined them above are simply exponential bounds on the moment generating function or Laplace transform of the random variable $X$. For example, taking $\gamma(\lambda) = \sigma^2\lambda^2/2$ for some $\sigma > 0$, a random variable is called \emph{subgaussian} precisely when the it satisfies the inequality $\rho(\lambda X) \le \gamma(\lambda)$ for all $\lambda \in \R$. This can be characterized alternatively by a tail bound or an integral criterion:
\begin{enumerate}
\item There exist $c, \kappa > 0$ such that $P(|X| > t) \le c\exp(-\kappa t^2)$ for all $t > 0$.
\item There exists $c > 0$ such that $\E[\exp(c|X|^2)] < \infty$.
\end{enumerate}
A central example in this paper is the class of \emph{shortfall risk measures}, parameterized by a nondecreasing convex function $\ell \ge 0$ satisfying $\ell(0)=1$, defined by
\[
\rho(X) := \inf\left\{c \in \R : \E[\ell(X-c)] \le 1\right\}.
\]
When $\ell(x) = e^x$, this reduces to the entropic risk measure.
It is easy to check that $\rho(\lambda X) \le \gamma(\lambda)$ for all $\lambda \ge 0$ implies $P(X > t) \le 1/\ell(\gamma^*(t))$ for all $t > 0$, where $\gamma^*(t)=\sup_{\lambda \ge 0}(\lambda t - \gamma(\lambda))$. We will show in Theorem \ref{th:bigequivalence} that under certain additional assumptions on $\ell$ and $\gamma$, the following statements are equivalent in analogy with the subgaussian case:
\begin{enumerate}
\item There exists $c > 0$ such that $\E[\ell(\gamma^*(c |X|))] < \infty$.
\item There exist $c,\kappa > 0$ such that $P(|X| > t) \le c/\ell(\gamma^*(\kappa t))$ for all $t > 0$.
\item There exists $c > 0$ such that $\rho(\lambda X) \le \gamma(c|\lambda|)$ for all $\lambda \in \R$.
\end{enumerate}
The moment condition (1) is arguably the most tractable of these conditions, and we use it to derive a number of examples of random variables satisfying (1-3).
We then apply Theorem \ref{th:bigequivalence} to bound liquidity risk profiles of a large class of options in terms of just a few calls and puts. Namely, the liquidity risk profile of any Lipschitz function of $n$ underlying assets is bounded (uniformly in the choice of such function) by the maximum of $2n$ liquidity risk profiles: one call option and one put per underlying asset, each struck at the same price. See Section \ref{se:options} for details.
We also study briefly how another family of risk measures, namely optimized certainty equivalents, are connected to tail behavior in Section \ref{se:optcerttail}.

The connection between risk measures and concentration deepens with the study of \emph{tensorization}, which allows us to deduce concentration results for a product measure $\mu^n$ on a product space $E^n$ from concentration properties of $\mu$ on $E$. In other words, if we understand concentration properties of $X$ and $Y$, or more specifically various functions $f(X)$ and $g(Y)$ thereof, we can sometimes deduce concentration properties of combinations $h(X,Y)$ of the two. More generally, a financial position of the form $Y=f(X_1,\ldots,X_n)$ can be interpreted as a combination of $n$ underlying positions, and it is only natural to estimate the risk of $Y$ in terms of the risk of $X_1,\ldots,X_n$. Gaussian concentration is again a classical example: If $\rho(X) = \log\E[e^X]$ is the entropic risk measure, if $X_i$ are i.i.d. standard Gaussians, and if $f$ is $1$-Lipschitz on $\R^n$, then we know $\rho(\lambda(f(X_1,\ldots,X_n) - \E f(X_1,\ldots,X_n))) \le \lambda^2/2$ for all $\lambda \in \R$. This bound is efficient in that it is independent of the dimension $n$, and we can see this as an instance of diversification decreasing risk: For example, the average of $(x_1,\ldots,x_n)$ is a $n^{-1/2}$-Lipschitz function, and so
\[
\rho\left(\frac{\lambda}{n}\sum_{i=1}^nX_i\right) \le \lambda^2/2n, \text{ for all } \lambda \in \R.
\]
A well known concentration heuristic has financial meaning as well: If $X_i$ are independent, and if $Y=f(X_1,\ldots,X_n)$ does not depend \emph{too much} on any single one of the $X_i$, then we think of the risk $Y$ as diversified, and we expect $Y$ to concentration around its mean.

We study tensorization for \emph{law invariant} risk measures, that is satisfying $\rho(X) = \rho(Y)$ whenever $X$ and $Y$ have the same distribution. Classically, tensorization arguments rely on the chain rule for relative entropy, which reads
\[
H(\nu_1(dx)\nu_2(x,dy) | \mu_1(dx)\mu_2(x,dy)) = H(\nu_1 | \mu_1) + \int \nu_1(dx)H(\nu_2(x,\cdot) | \mu_2(x,\cdot)),
\]
for any (disintegrated) probability measures on any product of two measurable spaces. Here $H(\nu | \mu) = \int \log(d\nu/d\mu)d\nu$ if $\nu \ll \mu$, and it is infinite otherwise.
To extend these arguments we naturally seek a substitute for the chain rule for penalty functions (i.e., convex conjugates) of convex risk measures, and such a substitute is developed in \cite{lacker-lawinvariant}. It turns out, perhaps surprisingly, that a workable inequality form the chain rule is equivalent to a so-called \emph{time-consistency} property of a the risk measure, defined carefully in Section \ref{se:tensorization}. With this link in mind, we base our tensorization arguments on this time consistency property, instead of working with chain rules.
Our new tensorization results are modest, as it turns out that very few risk measures other than the entropic one have the right time consistency property. Nonetheless, this approach illustrates how the method must be altered in order to obtain sharper mathematical results, as will be explored in future work.

Lastly, we formulate risk measure concentration inequalities in terms of so-called \emph{transport inequalities}, which permits a connection with a more classical form of concentration in terms of enlargements of sets of metric measure spaces \cite{ledoux-concentration}. Financially, we will see that transport inequalities also provide bounds on liquidity risk profiles which are conveniently uniform across large families of losses $X$.
The success of this approach originated in the papers of Marton \cite{marton-bounding,marton-blowingup} and Talagrand \cite{talagrand-transportation}, and many more recent papers have ironed out the relations between transport inequalities, various other functional inequalities, concentration of measure, and even large deviations. We favor the perspective of Bobkov and Gotze \cite{bobkov-gotze}, who first noticed that following are equivalent for a probability measure $\mu$ on a complete separable metric space $(E,d)$ and a number $c > 0$:
\begin{enumerate}
\item For every $1$-Lipschitz function $f$ on $E$ and every $\lambda \ge 0$,  $\log\int e^{\lambda f}\,d\mu \le \lambda \int f\,d\mu + c\lambda^2/2$.
\item The transport inequality holds: $W_1(\mu,\nu) \le \sqrt{2cH(\nu | \mu)}$ for $\nu \ll \mu$, where $H$ is the relative entropy and $W_1$ is the Wasserstein distance (defined precisely in \eqref{def:wasserstein}).
\end{enumerate}
The easy proof of this fact is essentially an instance of the order-reversing property of convex conjugation, using only the duality between the entropic risk measure and the relative entropy (as well as the Kantorovich duality for the Wasserstein distance).
Borrowing these ideas, we derive dual forms of the concentration inequalities \eqref{intro:main-inequality}. Namely, \eqref{intro:main-inequality} is equivalent to
\[
\gamma^*(\E^Q[X]) \le \alpha(Q), \ \forall Q \ll P,
\]
where $\alpha$ is a so-called \emph{penalty function} for $\rho$, which will be defined precisely in Section \ref{se:riskmeasures}. A similar result holds for concentration inequalities with non-zero initial positions, that is, involving liquidity risk profiles of the form $\sup_{Y \in \Phi}(\rho(\lambda X + Y) - \rho(Y))_{\lambda \ge 0}$, and this provides some insights on the sensitivity of liquidity risk profiles to initial positions.

We extend some well known characterizations of transport inequalities to the risk measure setting, revisiting a generalization of the equivalence $(1) \Leftrightarrow (2)$ above which was essentially observed in \cite[Theorem 3.5]{gozlan-leonard-survey}, albeit without the language of risk measures. 
In fact, it is well known \cite{bobkov-gotze} that the conditions (1) and (2) above are also equivalent (up to a change in constant) to the existence of an exponential moment, or $\int\mu(dx)\exp(cd(x,x_0)^2) < \infty$ for some $x_0 \in E$, $c > 0$. We prove a similar integral criterion for our modified transportation inequalities involving shortfall risk measures, namely the finiteness of $\int \mu(dx)\ell(\gamma^*(cd(x,x_0)))$. 
An interesting mathematical point emerges here:
The integral criterion described above essentially depends only on the \emph{composition} $\ell \circ \gamma^*$, not on either function individually, whereas the transport inequalities seem to depend separately on $\ell$ and $\gamma^*$. Using this observation, we can show for example that the transportation inequality (2) above is equivalent (again up to a change in constant) to the seemingly weaker inequality
\begin{enumerate}
\item[(2')] For $\nu \ll \mu$, $W_1(\mu,\nu) \le c\sqrt{\log\left\|d\nu/d\mu\right\|_{L^\infty(\mu)}}$.
\end{enumerate}
A similar equivalence holds when $W_1$ is replaced by the quadratic  Wasserstein distance $W_2$; see Corollary \ref{co:tpequivalence}. Although we have no applications yet of this modest result (in particular we found no examples where it is easier to verify (2') than (2)), it seems interesting and nontrivial enough to warrant its brief discussion. Indeed, one reason transport inequalities are useful is because they can be verified directly in many important cases; see \cite{talagrand-transportation} for transport inequalities involving normal and exponential laws, \cite{cordero2002some} for log-concave densities, and \cite{feyel-ustunel} for infinite-dimensional Gaussian measures.

The paper is organized as follows. In Section \ref{se:riskmeasures}, we discuss the relevant background material on convex risk measures. Section \ref{se:integralcriteria} details the connection between concentration inequalities of the form \eqref{intro:main-inequality}, tail bounds, and moment estimates, mostly focusing on the class of shortfall risk measures. An application to options is given in Section \ref{se:options}, and several examples illustrating the theory are provided in Section \ref{se:examples1}. Section \ref{se:tensorization} discusses some results on tensorization. Finally, Section \ref{se:dualinequalities} turns to more abstract properties of concentration inequalities and their duals as well as connections with a more traditional form of concentration of measure on a metric space. Notable here is a new characterization of transport inequalities (Corollary \ref{co:tpequivalence}). Some additional examples of dual inequalities and possibilities for future research are discussed briefly in Section \ref{se:examples}, with a curious example coming from \emph{martingale optimal transport}, initiated recently in \cite{beiglbock-henrylabordere-penkner}. The appendix \ref{ap:proof} is devoted to the proof of Theorem \ref{th:integralcriterion}.

\section{Risk measure preliminaries} \label{se:riskmeasures}
First, let us fix some notation. Throughout the paper, $(\Omega,\F,P)$ is a fixed probability space. Abbreviate $L^p=L^p(\Omega,\F,P)$ as usual for the set of (equivalence classes of) $p$-integrable (or essentially bounded if $p=\infty$) real-valued measurable functions on $\Omega$. Let $\P(\Omega)$ denote the set of probability measures on $(\Omega,\F)$, and let $\P_P(\Omega)$ denote the subset consisting of those measures which are absolutely continuous with respect to $P$. For $q \in [1,\infty]$, let $\P_P^q(\Omega)$ denote the set of $Q \in \P_P(\Omega)$ with $dQ/dP \in L^q$. Given $Q \in \P(\Omega)$, we write $\E^Q$ for expectation under $Q$; the notation $\E$ is reserved for expectation under the reference measure $P$, although we will occasionally write $\E^P$ for emphasis.

Let $\X$ be a (linear) subspace of $L^1$ with $L^\infty \subset \X$. We will work mostly with $\X = L^1$ and occasionally with $\X = L^\infty$.
To us, a \emph{risk measure on $\X$} is a functional $\rho : \X \rightarrow (-\infty,\infty]$ satisfying
\begin{enumerate}
\item[(R1)] Monotonicity: If $X,Y \in \X$ and $X \le Y$ a.s. then $\rho(X) \le \rho(Y)$.
\item[(R2)] Cash additivity: If $X \in \X$ and $c \in \R$ then $\rho(X + c) = \rho(X) + c$.
\item[(R3)] Normalization: $\rho(0)=0$.
\item[(R4)] Convexity: If $X,Y \in \X$ and $t \in (0,1)$ then $\rho(tX + (1-t)Y) \le t\rho(X) + (1-t)\rho(Y)$.
\end{enumerate}
We say $X \in L^1$ is \emph{acceptable} if $\rho(X) \le 0$. 
An important additional property is satisfied by many but not all risk measures, and this is related to the dual representations of risk measures:
\begin{enumerate}
\item[(R5)] Fatou property: If $X,X_n,Y \in \X$ satisfy $|X_n| \le Y$ and $X_n \rightarrow X$ a.s., then $\rho(X) \le \liminf_{n\rightarrow\infty}\rho(X_n)$.
\end{enumerate}
We will work mostly with risk measures on $L^1$. Naturally, if $(\Omega',\F',P')$ is another probability space, a \emph{risk measure on $L^1(\Omega',F',P')$} is defined in the obvious way. Note that as a consequence of convexity and normalization, for any $X$ we have
\begin{align}
\rho(\lambda X) \le \lambda X \text{ for } \lambda \le 1, \quad \rho(\lambda X) \ge \lambda X \text{ for } \lambda \ge 1. \label{def:convexityinequality}
\end{align}

The convex conjugates of risk measures are central to the dual inequalities of Section \ref{se:dualinequalities}. Given a risk measure $\rho$ on $L^p$ for some $p \in [1,\infty]$ with conjugate exponent $q = p/(p-1)$, a function $\alpha : \P_P^q(\Omega) \rightarrow [0,\infty]$ is called a \emph{penalty function for $\rho$} if
\begin{align}
\rho(X) = \sup_{Q \in \P_P^q(\Omega)}\left(\E^Q[X] - \alpha(Q)\right), \ \text{ for all } X \in L^p. \label{def:penalty}
\end{align}
In other words, $\rho$ is the convex conjugate of the function defined on $L^q$ to equal $\alpha$ on $\{Z \in L^q : Z \ge 0, \ \E Z = 1\}$ (identied with the set $\P^q_P(\Omega)$) and to equal $\infty$ elsewhere. Note that $\P_P^1(\Omega) = \P_P(\Omega)$. We say a penalty function $\alpha$ is the \emph{minimal penalty function} for $\rho$ if every other penalty function $\alpha'$ satisfies $\alpha' \ge \alpha$. The existence of penalty functions is by now well understood:\footnote{If we permit finitely additive measures in the dual formula \eqref{def:penalty}, then every risk measure has a penalty function. While some of the arguments of this paper may work in this context, none of our examples require this level of generality.}

\begin{theorem}[Theorem 4.31 of \cite{follmer-schied-book}, Theorem 3.4 of \cite{kaina-ruschendorf}] \label{th:follmerschied}
Let $p \in [1,\infty]$, and let $q = p/(p-1)$ denote the conjugate exponent.
The following are equivalent for a risk measure $\rho$ on $L^p$:
\begin{enumerate}
\item $\rho$ has the Fatou property.
\item $\rho$ is lower semicontinuous with respect to $\sigma(L^p,L^q)$.
\item There exists a penalty function for $\rho$.
\item The convex function $\alpha : \P_P^q(\Omega) \rightarrow [0,\infty]$ defined by
\begin{align}
\alpha(Q) : = \sup_{X \in L^p}\left\{\E^Q[X] - \rho(X)\right\} = \sup\left\{\E^Q[X] : X \in L^p, \ \rho(X) \le 0\right\} \label{def:minimalpenalty}
\end{align}
is a penalty function for $\rho$.
\end{enumerate}
Moreover, the penalty function given in \eqref{def:minimalpenalty} is minimal, in the sense that every other penalty function for $\rho$ dominates it. 
\end{theorem}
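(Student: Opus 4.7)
The plan is to establish the circular implications (2) $\Rightarrow$ (4) $\Rightarrow$ (3) $\Rightarrow$ (1) $\Rightarrow$ (2), with minimality of the formula \eqref{def:minimalpenalty} falling out for free at the end. The implications (2) $\Rightarrow$ (4) $\Rightarrow$ (3) are an application of Fenchel--Moreau biconjugation; (3) $\Rightarrow$ (1) is a direct consequence of dominated convergence; and (1) $\Rightarrow$ (2), the deep step, rests on standard functional-analytic closure arguments for convex sets in $L^p$.

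First I would prove (2) $\Rightarrow$ (4). Extend $\rho$ by $+\infty$ outside its effective domain to obtain a proper convex $\sigma(L^p,L^q)$-lower semicontinuous function on $L^p$, and form its convex conjugate
\[
\rho^{*}(Z) := \sup_{X \in L^p}\bigl(\E[ZX] - \rho(X)\bigr), \quad Z \in L^q.
\]
The crucial step is identifying the effective domain of $\rho^*$ using the risk-measure axioms. (R1) forces $Z \ge 0$ a.s.: if $P(Z < 0) > 0$, testing with $X = -n\mathbf{1}_{\{Z<0\}}$ gives $\rho(X) \le \rho(0) = 0$ by monotonicity and normalization, while $\E[ZX] \to +\infty$. (R2) forces $\E[Z]=1$, because replacing $X$ by $X + c$ shifts the defining supremum by $(\E[Z]-1)c$, which is unbounded in $c$ unless $\E[Z]=1$. (R3) gives $\rho^*(Z) \ge \E[Z\cdot 0] - \rho(0) = 0$. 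Thus setting $\alpha(Q) := \rho^*(dQ/dP)$ for $Q \in \P^q_P(\Omega)$ yields a well-defined $[0,\infty]$-valued functional, and the Fenchel--Moreau theorem gives $\rho = \rho^{**}$, which is exactly the representation \eqref{def:penalty}. The minimality assertion is automatic: if $\alpha'$ is any other penalty function, then for each $Q$ and each $X \in L^p$ we have $\alpha'(Q) \ge \E^Q[X] - \rho(X)$, so $\alpha'(Q) \ge \alpha(Q)$ by taking the supremum over $X$. The implication (4) $\Rightarrow$ (3) is trivial.

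Next (3) $\Rightarrow$ (1): any representation \eqref{def:penalty} writes $\rho$ as a supremum of affine functionals $X \mapsto \E^Q[X] - \alpha(Q)$. If $|X_n| \le Y \in L^p$ and $X_n \to X$ a.s., then dominated convergence (applied under each $Q \in \P^q_P(\Omega)$, noting that $dQ/dP \in L^q$ gives $Y\, dQ/dP \in L^1$) implies $\E^Q[X_n] \to \E^Q[X]$, so each affine functional is continuous under dominated a.s. convergence. Taking the supremum preserves lower semicontinuity, which is the Fatou property.

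The main obstacle is (1) $\Rightarrow$ (2). The goal is to show each sub-level set $C_c := \{X \in L^p : \rho(X) \le c\}$ is $\sigma(L^p,L^q)$-closed. For $p \in [1,\infty)$, I would use Mazur's theorem to reduce the claim to norm-closedness of the convex set $C_c$, then show that a norm-convergent sequence in $C_c$ has an a.s.\ convergent subsequence dominated by a common $L^p$-function (passing to a sufficiently fast subsequence and using $\sum \|X_{n_k}-X\|_p < \infty$), at which point the Fatou property applies. For $p=\infty$ the argument is more delicate because $\sigma(L^\infty,L^1)$ is not metrizable; the Krein--\v{S}mulian theorem reduces the task to proving $C_c \cap \{\|X\|_\infty \le n\}$ is closed for each $n$, and on such norm-bounded sets one can again extract dominated a.s.\ convergent subsequences from any $\sigma(L^\infty,L^1)$-convergent net (after convex combinations) and invoke (1). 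This $p=\infty$ case is where all the functional-analytic subtleties concentrate, and it is essentially the content of \cite[Theorem 4.31]{follmer-schied-book}, with the $p \in [1,\infty)$ generalization treated in \cite[Theorem 3.4]{kaina-ruschendorf}.
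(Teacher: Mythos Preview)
The paper does not actually supply a proof of this theorem: it is stated as a known background result and attributed to \cite[Theorem 4.31]{follmer-schied-book} and \cite[Theorem 3.4]{kaina-ruschendorf}, so there is no ``paper's own proof'' to compare against. Your sketch is correct and is precisely the standard argument found in those references: Fenchel--Moreau for (2)$\Rightarrow$(4), dominated convergence for (3)$\Rightarrow$(1), and Mazur (for $p<\infty$) or Krein--\v{S}mulian (for $p=\infty$) for (1)$\Rightarrow$(2). Two small points worth tightening. First, you never verify the second equality in \eqref{def:minimalpenalty}, namely that $\sup_{X}\{\E^Q[X]-\rho(X)\}$ equals $\sup\{\E^Q[X]:\rho(X)\le 0\}$; this follows immediately from cash additivity by replacing $X$ with $X-\rho(X)$, but it should be said. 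Second, in the $p=\infty$ step your phrase ``extract dominated a.s.\ convergent subsequences from any $\sigma(L^\infty,L^1)$-convergent net (after convex combinations)'' hides the real work: what is used in \cite{follmer-schied-book} is that a convex subset of $L^\infty$ is $\sigma(L^\infty,L^1)$-closed if and only if its intersection with every ball is closed under dominated a.s.\ convergence of \emph{sequences}, a fact which itself requires Krein--\v{S}mulian together with a separation/exhaustion argument and is not a one-line consequence of compactness. Your deferral to the reference there is appropriate, but the sentence as written overstates how directly the subsequence is obtained.
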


In this paper we work almost exclusively with $\X = L^1$, but we briefly discuss the $\X = L^\infty$ below in the context of extensions of law invariant risk measures.
Risk measures on other spaces $\X$ do indeed appear in applications, particularly when $\X$ is an Orlicz space \cite{cheridito-li-orlicz,biagini-frittelli}. At such a level of generality, convex duality and the connections between lower semicontinuity and the Fatou property are both more delicate but are developed carefully in \cite{biagini-frittelli}. We will not worry about these generalities here, since all of our examples are (or extend to) risk measures on $L^1$.
The most important examples of risk measures in this paper happen to be \emph{law invariant}, in the sense that $\rho(X) = \rho(Y)$ whenever $X$ and $Y$ have the same law.
Law invariant risk measure possess some nice additional structure, highlighted by the results of Jouini, Touzi, and Schachermayer \cite{jouini-touzi-schachermayer} as well as Filipovi\'c and Svindland \cite{filipovic-svindland}:

\begin{theorem}[Theorem 2.1 of \cite{jouini-touzi-schachermayer}, Proposition 1.1 of \cite{svindland-continuity}] \label{th:JTS}
Every law invariant risk measure on $L^\infty$ has the Fatou property.
\end{theorem}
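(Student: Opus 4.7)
The strategy is a three-step reduction: first to atomless spaces, then to a dilatation monotonicity lemma, then to Fatou via a martingale-type approximation.

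If $(\Omega,\F,P)$ has atoms, I would first enlarge it to $(\Omega\times[0,1], \F\otimes\B([0,1]), P\otimes\mathrm{Leb})$ and extend $\rho$ in a law-invariant way by setting $\tilde\rho(\tilde X) := \rho(X)$ for any $X\in L^\infty(\Omega)$ with $X \stackrel{d}{=} \tilde X$; law invariance makes this unambiguous, and Fatou on the enlargement restricts to Fatou on the original space. So I may assume throughout that $(\Omega,\F,P)$ is atomless.

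On the atomless space, the central lemma I would prove is the \emph{dilatation inequality}
\[
\rho(\E[X\mid\G]) \le \rho(X), \qquad X\in L^\infty,\ \G\subseteq\F.
\]
The argument rests on the classical fact that, on an atomless probability space, for any sub-$\sigma$-algebra $\G$ and any $X\in L^\infty$, one can construct finitely many measure-preserving bijections $\tau_1,\ldots,\tau_n$ of $\Omega$ such that the average $n^{-1}\sum_{k=1}^n X\circ\tau_k$ approximates $\E[X\mid\G]$ arbitrarily well in $L^\infty$-norm. Law invariance gives $\rho(X\circ\tau_k)=\rho(X)$, convexity yields $\rho(n^{-1}\sum_k X\circ\tau_k)\le\rho(X)$, and the automatic $\|\cdot\|_\infty$-Lipschitz continuity of $\rho$ on $L^\infty$ (which follows from (R1) and (R2) alone, since $\rho(X)\le\rho(Y+\|X-Y\|_\infty)=\rho(Y)+\|X-Y\|_\infty$) permits passage to the limit without any circular appeal to Fatou.

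To extract Fatou, take $X_n\to X$ a.s. with $|X_n|\le c\in L^\infty$. By cash additivity and monotonicity, and after replacing the sequence with $\sup_{k\ge n}X_k$ and $\inf_{k\ge n}X_k$, it suffices to show that for a uniformly bounded decreasing sequence $Y_n\downarrow Y$ one has $\rho(Y_n)\downarrow\rho(Y)$. I would then fix an increasing filtration of finite $\sigma$-algebras $\G_n\uparrow\F$ together with $\G_n$-measurable variables $Y_n'$ equidistributed with $Y_n$, arranged (using atomlessness) so that $Y_n'\to Y$ in $L^\infty$-norm; dilatation monotonicity together with law invariance then delivers $\rho(Y)\le\liminf_n\rho(Y_n)$. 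The main obstacle is precisely the coordination between the equidistribution construction and the filtration so that convergence is in $L^\infty$-norm rather than merely almost sure: this is where the $L^\infty$ setting is essential and the reason the proof does not extend naively to $L^1$.
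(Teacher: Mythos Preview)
The paper does not prove this theorem; it is quoted from \cite{jouini-touzi-schachermayer} and \cite{svindland-continuity} as a known background result, so there is no in-paper argument to compare against. Your three-step outline (atomless reduction, dilatation monotonicity, then Fatou) does match the strategy of those cited references.

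That said, your Step 3 is not a proof but an acknowledgment of the obstacle. The construction you propose---$\G_n$-measurable $Y_n'$ equidistributed with $Y_n$ and converging to $Y$ in $L^\infty$---cannot work in general: take $Y_n=1_{A_n}$ with $A_n=[0,1/n]\subset[0,1]$ and $Y=0$, so $Y_n\downarrow Y$ a.s., yet any $Y_n'\sim Y_n$ has $\mathrm{ess\,sup}\,Y_n'=1$ and hence $\|Y_n'-Y\|_\infty=1$. The argument in the cited references runs the double limit the other way. One fixes finite $\sigma$-algebras $\G_m$ adapted to the \emph{limit} $X$ (for instance generated by level sets of $X$ at mesh $1/m$), so that $\E[X\mid\G_m]\to X$ in $L^\infty$. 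For each fixed $m$, bounded a.s.\ convergence $X_n\to X$ gives $L^1$-convergence, and because $\G_m$ has finitely many atoms this upgrades to $\E[X_n\mid\G_m]\to\E[X\mid\G_m]$ in $L^\infty$. Dilatation monotonicity and $L^\infty$-continuity then give $\rho(\E[X\mid\G_m])=\lim_n\rho(\E[X_n\mid\G_m])\le\liminf_n\rho(X_n)$, and sending $m\to\infty$ yields $\rho(X)\le\liminf_n\rho(X_n)$. Your dilatation-monotonicity step also needs tightening: the blanket $L^\infty$-approximation of $\E[X\mid\G]$ by finite averages $n^{-1}\sum_k X\circ\tau_k$ is not available for arbitrary $\G$; the standard route first reduces to finite $\G$ with rational atom masses (where an exact representation as such an average holds) and extends via the same $L^\infty$-Lipschitz continuity.
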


\begin{theorem}[Theorem 2.2 of \cite{filipovic-svindland}] \label{th:filipovic-svindland}
A law invariant risk measure $\rho$ on $L^\infty$ admits a unique extension to $L^1$. Precisely, there exists a risk measure $\bar{\rho}$ on $L^1$ such that $\bar{\rho} = \rho$ on $L^\infty$, and it satisfies the duality relation
\[
\bar{\rho}(X) = \sup\left\{\E^Q[X] - \alpha(Q) : Q \in \P_P(\Omega), \ \frac{dQ}{dP} \in L^\infty\right\},
\]
where $\alpha$ is the minimal penalty function of $\rho$, defined on $\P_P(\Omega)$ by
\[
\alpha(Q) := \sup_{X \in L^\infty}\left(\E^Q[X] - \rho(X)\right).
\]
\end{theorem}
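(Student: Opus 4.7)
My plan is to construct the extension $\bar\rho$ explicitly via the dual formula in the statement, verify the risk-measure axioms, show it agrees with $\rho$ on $L^\infty$, and argue uniqueness via continuity. First, by Theorem \ref{th:JTS} the law invariant risk measure $\rho$ on $L^\infty$ automatically has the Fatou property, so Theorem \ref{th:follmerschied} (applied with $p=\infty$, $q=1$) furnishes a minimal penalty function $\alpha : \P_P(\Omega) \to [0,\infty]$ given by $\alpha(Q) = \sup_{X \in L^\infty}(\E^Q[X] - \rho(X))$ and the full dual representation $\rho(X) = \sup_{Q \in \P_P(\Omega)}(\E^Q[X] - \alpha(Q))$ on $L^\infty$. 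Define $\bar\rho : L^1 \to (-\infty,\infty]$ by the formula in the statement. This is well-defined, because whenever $X \in L^1$ and $dQ/dP \in L^\infty$ we have $\E^Q[X] = \E[X\,dQ/dP]$ finite, and $\bar\rho(X) > -\infty$ by taking any such $Q$.

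Next I would verify axioms (R1)--(R4) for $\bar\rho$. Monotonicity, cash additivity, and convexity are immediate, since $\bar\rho$ is a supremum of affine functions of $X$. For normalization, $\alpha \geq 0$ (take $X=0$ in the definition of $\alpha$) gives $\bar\rho(0) \leq 0$, while the classical fact that a law invariant risk measure satisfies $\rho \geq \E$ on $L^\infty$ (Proposition \ref{pr:convexorder}) forces $\alpha(P) = 0$, so plugging $Q=P$ yields $\bar\rho(0) \geq 0$.

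The main obstacle is showing $\bar\rho = \rho$ on $L^\infty$; the inequality $\bar\rho \leq \rho$ is trivial by restriction, so the content lies in $\rho \leq \bar\rho$. Given $X \in L^\infty$ and $Q \in \P_P(\Omega)$ nearly optimal in the full dual representation of $\rho(X)$, I would approximate $Q$ by measures $Q_n$ with bounded densities, e.g.\ by setting $dQ_n/dP = (Z \wedge n)/\E[Z \wedge n]$ where $Z=dQ/dP$, and show $\E^{Q_n}[X] - \alpha(Q_n) \to \E^Q[X] - \alpha(Q)$. Convergence of $\E^{Q_n}[X]$ to $\E^Q[X]$ is dominated convergence since $X \in L^\infty$. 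The delicate step is $\limsup_n \alpha(Q_n) \leq \alpha(Q)$, and here law invariance is essential: since $\alpha$ depends on $Q$ only through the distribution of $dQ/dP$ under $P$, one can invoke a Hardy--Littlewood monotone-rearrangement argument (or a Kusuoka-type representation) to replace near-maximizing $Y$ in $\alpha(Q_n)=\sup\{\E^{Q_n}[Y] : \rho(Y) \leq 0\}$ by comonotone rearrangements, reducing the problem to a one-dimensional convergence statement on the real line. I expect this to be the hardest part of the proof.

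Finally, for uniqueness, any extension $\tilde\rho$ to $L^1$ that is law invariant is $L^1$-continuous on the interior of its domain (a consequence of the Jouini--Touzi--Schachermayer and Svindland theory for law invariant convex functionals). Since $L^\infty$ is dense in $L^1$, approximating $X \in L^1$ by the truncations $X_n := (X \vee (-n)) \wedge n$ which lie in $L^\infty$ and converge to $X$ in $L^1$, continuity gives $\tilde\rho(X) = \lim_n \rho(X_n) = \bar\rho(X)$, so $\bar\rho$ is the unique such extension.
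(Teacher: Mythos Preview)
The paper does not prove this theorem at all: Theorem~\ref{th:filipovic-svindland} is stated as a background result and attributed to \cite{filipovic-svindland} without proof, just like Theorems~\ref{th:follmerschied} and~\ref{th:JTS} and Proposition~\ref{pr:convexorder}. There is therefore no ``paper's own proof'' to compare your proposal against.

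That said, two comments on your sketch. First, your hardest step---showing $\limsup_n \alpha(Q_n) \le \alpha(Q)$ for the truncated densities---is genuinely the crux, and your Hardy--Littlewood/comonotone-rearrangement idea is the right lever; in Filipovi\'c--Svindland the corresponding work is packaged as a strengthened lower semicontinuity statement (law invariance upgrades $\sigma(L^\infty,L^1)$-l.s.c.\ to $\sigma(L^\infty,L^\infty)$-l.s.c.), which is equivalent to what you need. Second, your uniqueness argument has a small gap: you assume the competing extension $\tilde\rho$ is itself law invariant in order to invoke $L^1$-continuity, but the theorem as stated asserts uniqueness among \emph{all} risk measures on $L^1$ (axioms (R1)--(R4)), not just law-invariant ones. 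Either the uniqueness claim needs that extra hypothesis (which is how it is typically phrased in the source), or you need a separate argument---e.g., sandwiching $\tilde\rho(X)$ between $\sup_n \rho(X \wedge n)$ and $\inf_n \rho(X \vee (-n))$ via monotonicity and then showing these coincide with $\bar\rho(X)$.
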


This tells us that when dealing with law-invariant risk measures, we may without loss of generality assume they are defined on all of $L^1$. One final notably result is that a law invariant risk measure is automatically increasing with respect to convex order:

\begin{proposition}[Corollary 4.65 of \cite{follmer-schied-book}, Theorem 2.1 of \cite{svindland-convexorder}] \label{pr:convexorder}
Suppose $\rho$ is a law invariant risk measure on $L^\infty$, extended canonically to $L^1$. If $X,Y \in L^1$ satisfy $\E[\phi(X)] \le \E[\phi(Y)]$ for every increasing convex function $\phi$, then $\rho(X) \le \rho(Y)$. In particular, $\rho(\E[X | \G]) \le \rho(X)$ for every $X \in L^1$ and every $\sigma$-field $\G \subset \F$.
\end{proposition}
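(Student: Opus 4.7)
The plan is to deduce both assertions from a single conditional inequality, $\rho(\E[Y|\G])\le \rho(Y)$, combined with Strassen's theorem for the increasing convex order. The second assertion of the proposition is precisely this conditional inequality, so I would first establish it directly and then use it to obtain the main claim.

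For the conditional inequality, I would exploit the dual representation from Theorem~\ref{th:filipovic-svindland} via the minimal penalty function $\alpha$. For any $Q\ll P$ with bounded density $W=dQ/dP$, the tower property yields
\[
\E^Q[\E[Y|\G]] = \E[W\,\E[Y|\G]] = \E[\E[W|\G]\,Y] = \E^{Q'}[Y], \qquad \frac{dQ'}{dP}:=\E[W|\G],
\]
so it suffices to show $\alpha(Q')\le \alpha(Q)$. Two properties of $\alpha$ together do the job: (a) law invariance, i.e.\ $\alpha(Q_1)=\alpha(Q_2)$ whenever $dQ_1/dP$ and $dQ_2/dP$ have the same law under $P$, which one reads off the formula $\alpha(Q)=\sup\{\E^Q[X]:\rho(X)\le 0\}$ using law invariance of $\rho$ on an atomless probability space (the canonical setting for such results); and (b) convexity and lower semicontinuity, automatic since $\alpha$ is a convex conjugate. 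Together with the classical Hardy--Littlewood--P\'olya characterization of convex order---any $W'$ with $W'\le_{cx}W$ lies in the $L^1$-closed convex hull of random variables equidistributed with $W$---(a) and (b) imply that $\alpha$ is monotone with respect to convex order on densities. Since conditional Jensen gives $\E[W|\G]\le_{cx}W$, the desired $\alpha(Q')\le\alpha(Q)$ follows.

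For the main statement, I would invoke Strassen's theorem for the increasing convex order: if $\E[\phi(X)]\le \E[\phi(Y)]$ for every increasing convex $\phi$, then on some enriched probability space there exist $\tilde X\sim X$, $\tilde Y\sim Y$ and a sub-$\sigma$-field $\G$ with $\tilde X \le \E[\tilde Y|\G]$ almost surely. Transferring back to $(\Omega,\F,P)$ using law invariance (again relying on atomlessness), the monotonicity axiom (R1) and the conditional inequality just established combine to give
\[
\rho(X) = \rho(\tilde X) \le \rho(\E[\tilde Y|\G]) \le \rho(\tilde Y) = \rho(Y).
\]
The principal obstacle is the monotonicity of $\alpha$ in convex order of densities, which pairs Hardy--Littlewood--P\'olya with law invariance and convexity; both this step and the Strassen coupling implicitly use that the ambient space is atomless, which is the standard setting for Theorems~\ref{th:JTS} and~\ref{th:filipovic-svindland} and is thus available to us.
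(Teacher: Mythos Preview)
The paper does not supply its own proof of this proposition; it is stated with references to F\"ollmer--Schied (Corollary~4.65) and Svindland (Theorem~2.1) and no proof environment follows. So there is nothing in the paper to compare against directly, only the cited sources.

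Your argument is sound and is essentially the mechanism used in those references: pass to the dual, show the minimal penalty $\alpha$ depends only on the law of the density and is monotone for the convex order of densities, and conclude $\alpha(Q')\le\alpha(Q)$ when $dQ'/dP=\E[dQ/dP\mid\G]$. Two small points deserve tightening. First, law invariance of $\alpha$ is not quite ``read off'' the acceptance-set formula; one typically uses the Hardy--Littlewood rearrangement inequality to rewrite $\sup\{\E^Q X:\rho(X)\le 0\}$ in terms of the quantile function of $dQ/dP$, which is where atomlessness enters. Second, the statement you attribute to Hardy--Littlewood--P\'olya---that $W'\le_{cx}W$ places $W'$ in the $L^1$-closed convex hull of the equidistribution class of $W$---is, in the nonatomic setting, a theorem of Ryff on doubly stochastic operators; the result is correct and the lower semicontinuity you need is available because $\alpha$ is a supremum of maps $W\mapsto\E[WX]-\rho(X)$ with $X\in L^\infty$, each $L^1$-continuous. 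Your deduction of the increasing-convex-order statement from the conditional inequality via Strassen's coupling is clean; as you note, on an atomless standard space the coupling can be realized on $(\Omega,\F,P)$ itself, so law invariance transfers the conclusion back without difficulty.
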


\section{Concentration inequalities and an integral criterion} \label{se:integralcriteria}

We now develop some necessary and sufficient conditions for \emph{concentration inequalities} of the form
\begin{align}
\rho(\lambda X) \le \gamma(\lambda), \text{ for all } \lambda \ge 0, \label{def:concentrationprototype}
\end{align}
where $\gamma : [0,\infty) \rightarrow [0,\infty]$ is a given nondecreasing function, $\rho$ a given risk measure on $L^1$, and $X \in L^1$.
We will extend the domain of $\gamma$ to $\R$ by setting $\gamma(\lambda) = \infty$ for all $\lambda < 0$. Then its convex conjugate is $\gamma^*(t) = \sup_{\lambda \ge 0}(t\lambda - \gamma(\lambda))$ and satisfies $\gamma^*(t) = -\gamma(0)$ for $t \le 0$. In a sense, it is without loss of generality that we assume $\gamma$ is convex and lower semicontinuous: Suppose that $\rho$ has the Fatou property (i.e., is lower semicontinuous by Theorem \ref{th:follmerschied}), and suppose that $\rho(\lambda X) \le \gamma(\lambda)$ for all $\lambda \ge 0$, where $\gamma$ is \emph{not necessarily convex}. Then, since the biconjugate $\gamma^{**}$ is the largest convex lower semicontinuous minorant of $\gamma$, we conclude that $\rho(\lambda X) \le \gamma^{**}(\lambda)$ for all $\lambda \ge 0$. It will soon become clear why we assume $\gamma \ge 0$, although this accepts some loss of generality.

\begin{definition}
A \emph{shape function} is any nondecreasing, convex, and lower semicontinuous function $\gamma : [0,\infty) \rightarrow [0,\infty]$ satisfying $\gamma(0) < \infty$.
\end{definition}

In the rest of this section, we specialize to shortfall risk measures and optimized certainty equivalents. We find that liquidity risk profiles for these types of risk measures encode useful information about the tail behavior of random variables, much like the moment generating function.

\subsection{Tails and shortfall risk measures} \label{se:tails-shortfall}

The main risk measures of interest are the \emph{entropic} risk measure $\rho(X) = \log\E[e^{X}]$ and its generalization to \emph{shortfall} risk measures:

\begin{definition} \label{def:lossfunction}
A \emph{loss function} is a convex and nondecreasing function $\ell : \R \rightarrow [0,\infty)$ satisfying $\ell(0) = 1 < \ell(x)$ for all $x > 0$. The \emph{shortfall risk measure} corresponding to $\ell$ is given by
\begin{align}
\rho(X) = \inf\{c \in \R : \E[\ell(X-c)] \le 1\}, \ X \in L^1. \label{def:shortfall}
\end{align}
\end{definition}

According to \cite[Theorem 4.106]{follmer-schied-book} the minimal penalty function of a shortfall risk measure $\rho$ is
\begin{align}
\alpha(Q) = \inf_{t > 0}\frac{1}{t}\left(1 + \E^P[\ell^*\left(tdQ/dP\right)]\right), \label{def:shortfallentropy}
\end{align}
where $\ell^*(x) = \sup_{y \in \R}(xy - \ell(y))$ is the convex conjugate. For example, taking $\ell(t) = (1+t)^+$, the conjugate is
\[
\ell^*(t) = \begin{cases}
-t &\text{if } t \in [0,1] \\
\infty &\text{otherwise}
\end{cases},
\]
and the corresponding minimal penalty function is $\alpha(Q) = \|dQ/dP\|_{L^\infty(P)} - 1$. On the other hand, if $p \in (1,\infty)$ and $q = p/(p-1)$ is the conjugate exponent, the conjugate of
\[
\ell(t) = ((1+t)^+)^p \quad \text{ is } \quad \ell^*(t) = \begin{cases}
\frac{p}{q}\left(\frac{t}{p}\right)^q - t &\text{if } t \ge 0 \\
\infty &\text{otherwise}
\end{cases},
\]
and the corresponding penalty function is exactly 
\[
\alpha(Q) = \|dQ/dP\|_{L^q} - 1 = \left[
\E^P[(dQ/dP)^q\right]^{1/q} - 1.
\]
Indeed, setting $c = (p/qp^q)\E[(dQ/dP)^q]$, we have from \eqref{def:shortfallentropy}
\[
\alpha(Q) = \inf_{t > 0}\left(\frac{1}{t} - 1 + ct^{q-1} \right).
\]
The infimum is attained by $t = [c(q-1)]^{1/q}$, and the infimum equals
\[
[c(q-1)]^{1/q} - c[c(q-1)]^{-(q-1)/q} - 1 = \kappa_pc^{1/q}-1,
\]
where $\kappa_p = (q-1)^{1/q} + (p-1)^{1/p}$. But
\[
\kappa_pc^{1/q} = \frac{(p-1)^{1/p} + (q-1)^{1/q}}{p^{1/p}q^{1/q}}\E^P[(dQ/dP)^q]^{1/q} = \E^P[(dQ/dP)^q]^{1/q}.
\]

\begin{proposition} \label{pr:shortfall-tail}
Let $\ell$ be a loss function with corresponding shortfall risk measure $\rho$ defined as in \eqref{def:shortfall}, and let $\gamma$ be a shape function. If $\rho(\lambda X) \le \gamma(\lambda)$ for all $\lambda > 0$, then
\[
P(X > t) \le \frac{1}{\ell(\gamma^*(t))}, \quad \forall t > 0.
\]
\end{proposition}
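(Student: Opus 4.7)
The plan is to run a Chernoff-style argument tailored to the shortfall functional $\ell$, exactly analogous to the classical exponential Markov bound one uses when $\ell(x)=e^x$.

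First I would unpack the hypothesis $\rho(\lambda X)\le\gamma(\lambda)$. Since $\ell$ is convex with values in $[0,\infty)$ it is continuous on $\R$, so by monotone/dominated convergence the infimum in the definition \eqref{def:shortfall} is attained (or at least the defining inequality passes to the limit). Concretely, for each $\lambda>0$ and each $c>\gamma(\lambda)$ one has $\E[\ell(\lambda X-c)]\le 1$, and letting $c\downarrow\gamma(\lambda)$ and using continuity of $\ell$ plus the fact that $\ell$ is bounded on compacts yields
\[
\E\bigl[\ell(\lambda X-\gamma(\lambda))\bigr]\le 1,\qquad \forall\,\lambda>0.
\]

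Next I would apply the elementary Markov inequality adapted to $\ell$. Because $\ell$ is nondecreasing and nonnegative, for any $s\in\R$ and any random variable $Y$,
\[
\ell(Y)\ge \ell(s)\,\mathbf{1}_{\{Y>s\}},
\]
so taking expectations with $Y=\lambda X-\gamma(\lambda)$ gives $\ell(s)\,P(\lambda X-\gamma(\lambda)>s)\le 1$. For a fixed $t>0$ and a fixed $\lambda>0$, choose $s=\lambda t-\gamma(\lambda)$; the event $\{\lambda X-\gamma(\lambda)>s\}$ becomes $\{X>t\}$, and we obtain
\[
\ell(\lambda t-\gamma(\lambda))\,P(X>t)\le 1,\qquad \forall\,\lambda>0.
\]

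Finally I would optimize over $\lambda>0$. Since $\ell$ is nondecreasing and continuous, $\sup_{\lambda>0}\ell(\lambda t-\gamma(\lambda))=\ell\bigl(\sup_{\lambda>0}(\lambda t-\gamma(\lambda))\bigr)=\ell(\gamma^*(t))$, where I use that $\gamma(0)\ge 0$ so restricting the supremum to $\lambda>0$ coincides with $\gamma^*(t)$ for $t>0$ (and if $\gamma^*(t)\le 0$ then $\ell(\gamma^*(t))\le 1$ and the bound is trivially true). Rearranging yields $P(X>t)\le 1/\ell(\gamma^*(t))$, as claimed.

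There is no real obstacle here; the only two subtleties worth highlighting in the write-up are (i) the justification that the hypothesis $\rho(\lambda X)\le\gamma(\lambda)$ actually implies the closed integral inequality $\E[\ell(\lambda X-\gamma(\lambda))]\le 1$ (a short continuity argument), and (ii) exchanging the supremum with $\ell$ in the last step, which is immediate from monotonicity and continuity of $\ell$ but should be stated explicitly.
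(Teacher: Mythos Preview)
Your proof is correct and follows essentially the same Chernoff-type argument as the paper: establish $\E[\ell(\lambda X-\gamma(\lambda))]\le 1$ from the hypothesis, apply Markov's inequality via the monotonicity of $\ell$, and optimize over $\lambda>0$. The only cosmetic difference is that the paper invokes attainment of the infimum in \eqref{def:shortfall} directly to get $\E[\ell(Y-\rho(Y))]\le 1$, whereas you pass to the limit $c\downarrow\gamma(\lambda)$; both are fine.
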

\begin{proof}
Note that by continuity of $\ell$ and monotone convergence, the infimum is always attained in the definition of $\rho$ in \eqref{def:shortfall}. In particular,
\[
\E[\ell(Y-\rho(Y))] \le 1, \ \forall Y \in L^1.
\]
Cash additivity implies $\rho(\lambda X - \gamma(\lambda)) \le 0$, which in turn implies
\[
\E\left[\ell(\lambda X - \gamma(\lambda))\right] \le 1.
\]
Since $\ell$ is nondecreasing, so is $x \mapsto \ell(\lambda x - \gamma(\lambda))$ for each $\lambda > 0$. By Markov's inequality,
\[
P(X > t) \le P\left[\ell(\lambda X - \gamma(\lambda)) \ge \ell(\lambda t - \gamma(\lambda))\right] \le \frac{1}{\ell(\lambda t - \gamma(\lambda))}.
\]
Optimizing over $\lambda > 0$ completes the proof.
\end{proof}

\subsection{Tails and optimized certainty equivalents} \label{se:optcerttail}

Shortfall risk measures are not the only ones which yield some control over tail behavior. Let $\phi : \R \rightarrow \R$ be convex and nondecreasing, satisfying $\phi^*(1) = \sup_{x \in \R}(x - \phi(x)) = 0$. The associated \emph{optimized certainty equivalent} is defined as in \cite{bental-teboulle-1986,bental-teboulle-2007} by 
\begin{align}
\rho(X) := \inf_{m \in \R}\left(\E[\phi(m+X)] - m\right), \ X \in L^1. \label{def:optimizedcertainty}
\end{align}
The minimal penalty function is the $\phi^*$-divergence,
\[
\alpha(Q) = \E^P[\phi^*(dQ/dP)].
\]
Some notable examples are as follows:
\begin{enumerate}
\item If $\phi(t) = e^{t-1}$, the conjugate is $\phi^*(t) = t\log t$ on $t \ge 0$. In this case, we have $\rho(X) = \log\E[e^X]$, and $\alpha$ is the usual relative entropy.
\item If $p \in (1,\infty)$ has conjugate exponent $q=p/(p-1)$, then $\phi(t) = 1 + (p/q)(t/p)^q1_{t \ge 0}$ has conjugate $\phi^*(t) = t^p -1$ on $t \ge 0$, and $\alpha(Q) = \|dQ/dP\|_{L^p}^p - 1$.
\item If $p \in (1,\infty)$ has conjugate exponent $q=p/(p-1)$, then $\phi(t) = (q-1) + (t/q)^q1_{t \ge 0}$ has conjugate $\phi^*(t) = (t^p -1)/(p-1)$ on $t \ge 0$, and $\alpha(Q) = (\|dQ/dP\|_{L^p}^p - 1)/(p-1)$ is known as R\'enyi's divergence of order $p$.
\end{enumerate}
The final example of R\'enyi's divergence was studied in connection with transport inequalities and concentration inequalities in recent papers of Bobkov and Ding \cite{bobkov-ding,ding2014wasserstein}.

\begin{proposition} \label{pr:optcerttail}
Suppose $\gamma$ is a nonnegative and strictly increasing shape function. Let $\phi$ be a strictly positive and nondecreasing convex function satisfying $\phi^*(1)=0$, and define the corresponding optimized certainty equivalent as in \eqref{def:optimizedcertainty}. If $X \in L^1$ satisfies $\rho(\lambda X) \le \gamma(\lambda)$ for all $\lambda > 0$, then 
\[
P(X > s) \le \sup_{m > 0} \frac{m}{\phi(m + \gamma^*(s))}, \text{ for } s > 0.
\]
\end{proposition}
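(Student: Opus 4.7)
The plan is to adapt the Markov-inequality argument behind Proposition \ref{pr:shortfall-tail} to the OCE setting, where the extra infimum over $m$ in the definition \eqref{def:optimizedcertainty} of $\rho$ will translate into the supremum over $m > 0$ on the right-hand side of the claim.

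First I would use cash additivity to rewrite the hypothesis as $\rho(\lambda X - \gamma(\lambda)) \le 0$, which by \eqref{def:optimizedcertainty} means that for every $\lambda, \epsilon > 0$ there exists $m = m(\lambda,\epsilon) \in \R$ with
\[
\E[\phi(m + \lambda X - \gamma(\lambda))] \le m + \epsilon.
\]
The strict positivity of $\phi$ makes the left side strictly positive, forcing $\mu := m + \epsilon > 0$. Monotonicity and nonnegativity of $\phi$ then justify Markov's inequality in the form
\[
\phi(m + \lambda s - \gamma(\lambda))\, P(X > s) \le \E[\phi(m + \lambda X - \gamma(\lambda))] \le \mu,
\]
which, after writing $m = \mu - \epsilon$, rearranges to
\[
P(X > s) \le \frac{\mu}{\phi(\mu + \lambda s - \gamma(\lambda) - \epsilon)}.
\]

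For each $\delta > 0$ I would then pick $\lambda$ with $\lambda s - \gamma(\lambda) \ge \gamma^*(s) - \delta$ (possible by the definition of the Fenchel conjugate, with the case $\gamma^*(s) = \infty$ handled by taking $\lambda s - \gamma(\lambda)$ arbitrarily large). Because $\phi$ is nondecreasing and $\mu > 0$, this yields
\[
P(X > s) \le \sup_{m > 0} \frac{m}{\phi(m + \gamma^*(s) - \delta - \epsilon)}.
\]

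The remaining step is to pass to the limit $\delta, \epsilon \downarrow 0$. Since $\phi$ is finite and convex on $\R$ it is automatically continuous, and a brief extraction argument — noting that $\phi^*(1) = 0$ forces $\phi(y) \ge y$ and hence either linear or superlinear growth at infinity — shows that the nonincreasing map $c \mapsto \sup_{m > 0} m/\phi(m+c)$ is continuous on $\R$, so its value at $\gamma^*(s) - \delta - \epsilon$ converges to its value at $\gamma^*(s)$. This gives the claimed bound. The main obstacle I foresee is precisely this final continuity step, which is routine but requires a bit of care; it can be bypassed entirely when the infimum in the OCE definition is attained (typical for superlinear $\phi$), in which case $\epsilon = 0$ may be taken from the outset.
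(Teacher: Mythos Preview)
Your proof is correct and follows essentially the same approach as the paper: cash additivity, a near-optimal $m$ in the OCE infimum, Markov's inequality via monotonicity and positivity of $\phi$, then optimization over $\lambda$ to produce $\gamma^*(s)$. The only cosmetic difference is the handling of the $\epsilon$-limit at the end: you package it as continuity of $c \mapsto \sup_{m>0} m/\phi(m+c)$, whereas the paper performs the explicit shift $m \mapsto m - \epsilon$ and splits $(m+2\epsilon)/\phi(m+\gamma^*(s))$ into $m/\phi(m+\gamma^*(s))$ plus a term bounded by $2\epsilon/\phi(\gamma^*(s)-2\epsilon) \to 0$. That splitting is precisely what one would do to verify your continuity claim, so the two arguments coincide; the paper's version has the minor advantage of making the ``brief extraction argument'' fully explicit rather than leaving it as an acknowledged obstacle.
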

\begin{proof}
Fix $\epsilon > 0$.
Since $\rho(\lambda X - \gamma(\lambda)) \le 0$, it holds by definition of $\rho$ that for all $\lambda \ge 0$ there exists $m_\lambda \in \R$ such that
\[
m_\lambda \ge \E\left[\phi\left(m_\lambda + \lambda X - \gamma(\lambda)\right)\right] - \epsilon.
\]
Since $\phi > 0$, note that $m_\lambda > -\epsilon$.
Since $\phi$ is nondeincreasing, using Markov's inequality we get for any $s,\lambda > 0$
\begin{align*}
P(X > s) &\le P\left\{\phi\left(m_\lambda + \lambda X - \gamma(\lambda)\right) \ge \phi\left(m_\lambda + \lambda s - \gamma(\lambda)\right)\right\} \\
	&\le \frac{\E\left[\phi\left(m_\lambda + \lambda X - \gamma(\lambda)\right)\right]}{\phi\left(m_\lambda + \lambda s - \gamma(\lambda)\right)} \\
	&\le \frac{m_\lambda + \epsilon}{\phi\left(m_\lambda + \lambda s - \gamma(\lambda)\right)} \\
	&\le \sup_{m > -\epsilon}\frac{m + \epsilon}{\phi\left(m + \lambda s - \gamma(\lambda)\right)}.
\end{align*}
By definition of $\gamma^*$, we can find $\lambda$ so that $\lambda s - \gamma(\lambda) > \gamma^*(s) - \epsilon$. Then, again since $\phi$ is nondecreasing, 
\begin{align*}
P(X > s) &\le \sup_{m > -\epsilon}\frac{m + \epsilon}{\phi\left(m + \gamma^*(s) - \epsilon\right)} = \sup_{m > -2\epsilon}\frac{m + 2\epsilon}{\phi\left(m + \gamma^*(s)\right)} \\
	&\le \sup_{m > -2\epsilon}\frac{m}{\phi\left(m + \gamma^*(s)\right)} + 2\epsilon\sup_{m > -2\epsilon}\frac{1}{\phi\left(m + \gamma^*(s)\right)} \\
	&\le \sup_{m > -2\epsilon}\frac{m}{\phi\left(m + \gamma^*(s)\right)} + \frac{2\epsilon}{\phi\left(-2\epsilon + \gamma^*(s)\right)}
\end{align*}
Sending $\epsilon \downarrow 0$ completes the proof, since $\phi > 0$ implies
\[
\sup_{m > -2\epsilon}\frac{m}{\phi\left(m + \gamma^*(s)\right)} = \sup_{m > 0}\frac{m}{\phi\left(m + \gamma^*(s)\right)}.
\]
\end{proof}

Let us return briefly to some of the examples above. If $\phi(t) = e^{t-1}$, then one easily checks that $\sup_{m > 0} m/\phi(m + a) = \exp(-a)$, and Proposition \ref{pr:optcerttail} recovers the usual result on subgaussian concentration. In the second example (with $p=2$), that is with $\phi(t) = t^2/4 + 1$ on $t \ge 0$ and $\phi(t) =1$ on $t < 0$, we have
\[
\sup_{m > 0} \frac{m}{\phi(m + a)} = \frac{2\sqrt{a^2 + 4}}{4 + a^2 + a\sqrt{a^2 + 4}}, \text{ for } a > 0.
\]

\subsection{An integral criterion for shortfall risk measures}

We have just seen that certain bounds on the curve $t \mapsto P(X > t)$ are necessary for certain concentration inequalities, for certain risk measures. This section explores some \emph{sufficient} conditions, in the form of what are traditionally called \emph{integral criteria}, for the case of shortfall risk measures.
A classical example is the integral criterion for subgaussianity: a random variable is subgaussian (meaning $\log\E[\exp(\lambda X)] \le c\lambda^2/2$ for all $\lambda \in \R$ and some $c > 0$) if and only if there exists $a > 0$ such that $\E[\exp(a|X|^2)] < \infty$.
Let us now identify a class of $\ell$ and $\gamma$ for which an analogous sort of converse to Proposition \ref{pr:shortfall-tail} holds.

\begin{definition} \label{def:LGamma}
Let $\mathrm{L}\Gamma$ denote the class of functions $(\ell,\gamma)$, where $\ell$ is a loss function and $\gamma$ is a shape function, such that the following hold:
\begin{enumerate}
\item $\gamma(x) < \infty$ for some $x > 0$.
\item $\lim_{x \rightarrow \infty}\ell(\gamma^*(x))/x^2 = \infty$.
\item One of the following conditions holds:
\begin{enumerate}
\item $\gamma(0) > 0$, and $\gamma$ is nonconstant.
\item $\gamma(0)=0$, $\ell$ is continuously differentiable with $\ell'(0) > 0$, $\gamma$ is continuously differentiable in a neighborhood of the origin with $\gamma'(0)>0$, and
\[
\lim_{x \rightarrow \infty}x\ell'(x)/\ell(\gamma^*(x)) = 0.
\]
\item $\gamma(0)=\gamma'(0)=0$, $\ell$ is twice continuously differentiable with $\ell'(0) > 0$,  $\ell''$ is nondecreasing, $\gamma$ is twice continuously differentiable in a neighborhood of the origin with $\gamma''(0)>0$, and 
\[
\lim_{x \rightarrow \infty}x^2\ell''(x)/\ell(\gamma^*(x)) = 0.
\]
\end{enumerate}
\end{enumerate}
In conditions (3b) and (3c), the \emph{neighborhood} is relative to $[0,\infty)$ and the derivatives at $0$ exist as right-limits; that is, (3b) requires $\gamma$ to be continuously differentiable on $(0,a)$ for some $a > 0$, and $\gamma'(0) = \lim_{\delta\downarrow 0}\gamma'(\delta) > 0$.
\end{definition}

\begin{theorem} \label{th:integralcriterion}
Suppose that $(\ell,\gamma) \in \mathrm{L}\Gamma$. Let $\kappa,M > 0$. There exists $n > 0$ such that, if $\Phi \subset L^1(P)$ satisfies 
\begin{align}
\sup_{X \in \Phi}\E[\ell(\gamma^*(\kappa X^+)] \le M, \label{th:integralcriterion-hypothesis}
\end{align}
then we have $\rho(\lambda(X - \E X)) \le \gamma(n \lambda)$ for all $\lambda \ge 0$ and $X \in \Phi$.
\end{theorem}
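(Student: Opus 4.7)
The plan is to translate the conclusion through the definition of the shortfall risk measure: writing $Y := X - \E X$, cash additivity reduces the desired bound $\rho(\lambda Y) \le \gamma(n\lambda)$ to
\[
F_X(\lambda) := \E[\ell(\lambda Y - \gamma(n\lambda))] \le 1,
\]
so the task becomes producing a single $n > 0$, depending only on $\kappa, M, \ell, \gamma$, for which this holds uniformly in $\lambda \ge 0$ and $X \in \Phi$. As a preliminary, the hypothesis yields uniform a priori moment control: since $\ell \circ \gamma^*$ is convex and nondecreasing, Jensen's inequality gives $\ell(\gamma^*(\kappa \E X^+)) \le M$, bounding $\E X^+$ uniformly on $\Phi$; condition~(2) of $\mathrm{L}\Gamma$, via $x^2 \le C \,\ell(\gamma^*(\kappa x))$ for large $x$, then bounds $\E[(X^+)^2]$, and, after the analogous treatment of $\E X^-$, also $\E[Y^2]$.

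For large $\lambda$ I would use Young's inequality $\lambda y - \gamma(n\lambda) \le \gamma^*(y/n)$ pointwise at $y = Y$, together with $\lambda y - \gamma(n\lambda) \le -\gamma(0) = \gamma^*(0)$ when $y \le 0$ and the fact that $\ell$ is nondecreasing, to obtain
\[
\ell(\lambda Y - \gamma(n\lambda)) \le \ell(\gamma^*(Y^+/n)).
\]
This yields $F_X(\lambda) \le \E[\ell(\gamma^*(Y^+/n))]$, a bound independent of $\lambda$. Dominated convergence with the hypothesis supplying a uniform envelope shows that as $n \to \infty$ the right-hand side tends to $\ell(-\gamma(0))$ uniformly over $\Phi$. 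In case~(3a), where $\gamma(0) > 0$, the limit $\ell(-\gamma(0)) < 1$ is strict and this single Young-type estimate already suffices for all $\lambda \ge 0$ once $n$ is taken large enough.

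In cases~(3b) and~(3c) the limit is only the non-strict $\ell(0) = 1$, so I would refine the analysis near $\lambda = 0$ by Taylor expanding $F_X$: a direct computation gives $F_X(0) = 1$, $F_X'(0) = -n\gamma'(0)\ell'(0)$ (using $\E Y = 0$), and, in case~(3c) where $\gamma'(0) = 0$, also $F_X''(0) = \ell''(0)\E[Y^2] - n^2\gamma''(0)\ell'(0)$. Under~(3b) one has $F_X'(0) < 0$; under~(3c), a sufficiently large $n$ forces $F_X''(0) < 0$ using the uniform bound on $\E[Y^2]$. Either way one obtains $F_X(\lambda) \le 1$ on a small interval $[0,\lambda_0]$ with $\lambda_0$ independent of $X \in \Phi$, and combining this with the large-$\lambda$ estimate (enlarging $n$ if necessary so that the two thresholds meet) covers $\lambda \ge 0$.

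The hard part is to make the Taylor remainder estimates uniform across $\Phi$. The relevant remainders involve integrals like $\E[\ell'(\lambda Y - \gamma(n\lambda))\,|Y|]$ in case~(3b) and $\E[\ell''(\lambda Y - \gamma(n\lambda))\,Y^2]$ in case~(3c); the at-first-peculiar $\mathrm{L}\Gamma$ asymptotics $x\ell'(x)/\ell(\gamma^*(x)) \to 0$ and $x^2\ell''(x)/\ell(\gamma^*(x)) \to 0$ are precisely what one needs to dominate such integrands by a small multiple of $\ell(\gamma^*(\kappa x))$ plus a constant, so that the hypothesis caps them uniformly on $\Phi$. This is also the step that ties together the choice of $n$: having fixed it large enough to make the small-$\lambda$ Taylor sign condition hold, one checks that the same (or a larger) $n$ still makes the large-$\lambda$ Young bound valid.
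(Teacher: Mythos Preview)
Your overall architecture matches the paper's: translate the conclusion to $\E[\ell(\lambda Y - \gamma(n\lambda))] \le 1$, handle case~(3a) with a single dominated-convergence argument valid for all $\lambda$, and in cases~(3b), (3c) split into a small-$\lambda$ regime handled by Taylor expansion (first order in~(3b), second order in~(3c)) and a large-$\lambda$ regime. Your identification of the $\mathrm{L}\Gamma$ asymptotics as precisely what controls the Taylor remainders uniformly over $\Phi$ is also right and is the heart of the paper's argument.

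The genuine gap is in your large-$\lambda$ estimate in cases~(3b) and~(3c). The Young bound $F_X(\lambda) \le \E[\ell(\gamma^*(Y^+/n))]$ is correct but useless here: since $\gamma(0)=0$ forces $\gamma^*(0)=0$, one has $\ell(\gamma^*(y)) \ge \ell(0)=1$ for every $y \ge 0$, and therefore $\E[\ell(\gamma^*(Y^+/n))] \ge 1$ for \emph{every} finite $n$ (with equality only if $Y^+ \equiv 0$). This bound can never certify $F_X(\lambda) \le 1$, so there is nothing with which to ``meet'' the small-$\lambda$ interval $[0,\lambda_0]$. The paper's fix exploits not $\gamma(0)$ but $\gamma(\delta) > 0$ for the already-fixed threshold $\delta$: for $\lambda \ge \delta$ one splits on $\{X > c/\kappa\}$ versus $\{X \le c/\kappa\}$; the tail piece is made $\le \epsilon$ via a quantitative ratio lemma $\sup_{x \ge c}\ell(\gamma^*(x/n))/\ell(\gamma^*(x)) \to 0$ together with the hypothesis, while on the bounded piece one checks $\sup_{\lambda \ge \delta}\ell(\lambda c/(\kappa n) - \gamma(\lambda)) = \ell(\delta c/(\kappa n) - \gamma(\delta)) \to \ell(-\gamma(\delta)) < 1$. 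Choosing $\epsilon = \tfrac12(1-\ell(-\gamma(\delta)))$ closes the estimate. Your Young inequality discards exactly this $\gamma(\delta)$ margin by taking the supremum over $\lambda$.

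A smaller issue: the claimed ``analogous treatment of $\E X^-$'' yielding a uniform bound on $\E[Y^2]$ is not supported by the hypothesis, which constrains only $X^+$. The paper never invokes a raw second-moment bound; in case~(3c) it controls the remainder quantity $\E[(X/n)^2\ell''(\delta X/n)]$ directly from the asymptotic $x^2\ell''(x)=o(\ell(\gamma^*(x)))$ and the integral hypothesis, rather than passing through $\E[X^2]$.
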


The proof of Theorem \ref{th:integralcriterion} is in the appendix. To complete the picture, we give a name to one more condition that permits an integral criterion to be deduced from a tail bound:

\begin{definition} \label{def:classH}
Let $\H$ denote the class of a functions $h : [0,\infty) \rightarrow [1,\infty)$ which are nondecreasing and absolutely continuous, and for which there exists $c > 0$ such that
\begin{align}
\int_0^\infty\frac{h'(t)}{h(t/c)}dt < \infty. \label{eq:classH}
\end{align}
\end{definition}

\begin{lemma} \label{le:tailtointegral}
Let $h \in \H$, and let $c > 0$ be such that \eqref{eq:classH} holds. If $\Phi \subset L^1$ is such that
\[
\sup_{X \in \Phi}P\left(X > t\right) \le \frac{C}{h(t)}, \quad \forall t > 0, \text{ for some } C > 0,
\]
then $\sup_{X \in \Phi}\E[h(cX^+)] < \infty$.
\end{lemma}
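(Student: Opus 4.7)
The plan is to apply the standard layer-cake / Fubini identity to rewrite $\E[h(cX^+)]$ as an integral against the tail of $X$, and then feed in the tail bound together with the integrability assumption \eqref{eq:classH}. The argument should be uniform in $X \in \Phi$ because the constants $C$, $c$, and $h(0)$ depend only on the class $\Phi$ and on $h$.

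More precisely, since $h : [0,\infty) \to [1,\infty)$ is nondecreasing and absolutely continuous, for any $y \ge 0$ we have $h(y) = h(0) + \int_0^y h'(t)\,dt$. Substituting $y = cX^+$, taking expectation, and applying Tonelli's theorem yields
\[
\E[h(cX^+)] = h(0) + \int_0^\infty P(cX^+ > t)\,h'(t)\,dt.
\]
For each $t > 0$, $\{cX^+ > t\} = \{X > t/c\}$, so by the tail bound hypothesis $P(cX^+ > t) \le C/h(t/c)$ for every $X \in \Phi$. Plugging this in gives
\[
\sup_{X \in \Phi}\E[h(cX^+)] \le h(0) + C\int_0^\infty \frac{h'(t)}{h(t/c)}\,dt,
\]
and the right-hand side is finite by the defining property \eqref{eq:classH} of the class $\H$.

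There is essentially no obstacle beyond verifying the layer-cake identity for an absolutely continuous nondecreasing $h$, which follows directly from the fundamental theorem of calculus for such $h$ combined with Tonelli. The only point meriting a brief sanity check is the change of variables at $t = 0$: since the tail bound is stated for $t > 0$ and the integrand $h'(t)$ is locally integrable near $0$, the value of the integrand at the single point $t = 0$ is irrelevant, and the bound $P(X > t/c) \le C/h(t/c)$ suffices on $(0,\infty)$.
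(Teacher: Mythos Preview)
Your proof is correct and follows essentially the same layer-cake argument as the paper: both rewrite $\E[h(cX^+)]$ as a constant plus an integral of $P(cX^+>t)h'(t)$ over $(0,\infty)$, then insert the tail bound and invoke \eqref{eq:classH}. Your version is marginally more direct in that you apply the fundamental theorem of calculus for the absolutely continuous $h$ and Tonelli right away, whereas the paper first writes $\E[h(cX^+)] = 1 + \int_1^\infty P(h(cX^+)>t)\,dt$ and then changes variables; the substance is identical.
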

\begin{proof}
Since $h$ is nondecreasing and $h \ge 1$, for each $X \in \Phi$ we have
\begin{align*}
\E[h(cX^+)] &= 1 + \int_1^\infty P(h(cX^+) > t) dt \le 1 + \int_0^\infty P(cX^+ > t)h'(t) dt \\
	&\le 1 + C\int_0^\infty \frac{h(t)}{h(t/c)} dt.
\end{align*}
\end{proof}

Combining the preceding results, along with a result (Proposition \ref{pr:dualinequality}) to be proven in the next section, we get the following result.

\begin{theorem} \label{th:bigequivalence}
Let $\ell$ be a loss function and $\gamma$ a shape function, and let $\rho$ be the shortfall risk measure corresponding to $\ell$. Suppose $\Phi \subset L^1$ satisfies $\E X = 0$ for all $X \in \Phi$. Consider the following statements:
\begin{enumerate}
\item There exists $c > 0$ such that, for all $\lambda \ge 0$,
\[
\sup_{X \in \Phi}\rho(\lambda X) \le \gamma(c\lambda).
\]
\item There exists $c > 0$ such that, for all $Q \in \P_P^\infty(\Omega)$,
\[
\gamma^*\left(c\sup_{X \in \Phi}\E^Q[X]\right) \le \inf_{t > 0}\frac{1}{t}\left\{1 + \E^P\left[\ell^*\left(t\frac{dQ}{dP}\right)\right]\right\}.
\]
\item There exists $c > 0$ such that, for all $t > 0$,
\[
\sup_{X \in \Phi}P(X > t) \le \frac{1}{\ell(\gamma^*(ct))}.
\]
\item There exists $c > 0$ such that
\[
\sup_{X \in \Phi}\E[\ell(\gamma^*(cX^+))] < \infty.
\]
\end{enumerate}
We have the implications $(1) \Leftrightarrow (2) \Rightarrow (3) \Leftarrow (4)$. If $\ell \circ \gamma^* \in \H$, then $(3) \Rightarrow (4)$. Lastly, if $(\ell,\gamma) \in \mathrm{L}\Gamma$, then $(4) \Rightarrow (1)$.
\end{theorem}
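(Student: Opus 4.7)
The plan is to dispatch the chain of implications by stringing together results already proved in the paper. The architecture is: $(1)\Leftrightarrow(2)$ is a pure convex-duality reformulation; $(1)\Rightarrow(3)$ is Proposition~\ref{pr:shortfall-tail}; $(4)\Rightarrow(3)$ is Markov's inequality; $(3)\Rightarrow(4)$ under the extra hypothesis $\ell\circ\gamma^*\in\H$ is Lemma~\ref{le:tailtointegral}; and the hard $(4)\Rightarrow(1)$ under $(\ell,\gamma)\in\mathrm{L}\Gamma$ is exactly Theorem~\ref{th:integralcriterion}. The last is the only substantive step; its proof is deferred to the appendix, so here I would cite it as a black box.

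For $(1)\Leftrightarrow(2)$ I would appeal to the forthcoming Proposition~\ref{pr:dualinequality}, whose content is that, by Fenchel conjugation of the convex function $\lambda\mapsto\sup_{X\in\Phi}\rho(\lambda X)$, the bound $\sup_{X\in\Phi}\rho(\lambda X)\le\tilde\gamma(\lambda)$ for all $\lambda\ge 0$ is equivalent to $\tilde\gamma^*(\sup_{X\in\Phi}\E^Q X)\le\alpha(Q)$ for all $Q\in\P_P^\infty(\Omega)$, where $\alpha$ is the minimal penalty function of $\rho$. Specializing $\tilde\gamma(\lambda)=\gamma(c\lambda)$, whose conjugate is $\gamma^*(t/c)$, and substituting the explicit shortfall penalty formula \eqref{def:shortfallentropy}, yields (2) up to reparametrizing $c$. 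Next, $(1)\Rightarrow(3)$ follows by applying Proposition~\ref{pr:shortfall-tail} to each $X\in\Phi$ with shape function $\lambda\mapsto\gamma(c\lambda)$. For $(4)\Rightarrow(3)$, since $\ell\circ\gamma^*$ is nondecreasing, Markov gives $P(X>t)\le M/\ell(\gamma^*(ct))$ for every $X\in\Phi$, where $M$ is the supremum witnessing (4); the multiplicative factor $M$ is absorbed by mildly shrinking $c$, using the growth of $\ell\circ\gamma^*$ at infinity (Definition~\ref{def:LGamma}(2)) and the triviality of the bound on a compact neighborhood of $0$ (where $P\le 1$ and $\ell\ge\ell(0)=1$ together suffice). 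For $(3)\Rightarrow(4)$ under $\ell\circ\gamma^*\in\H$, I would apply Lemma~\ref{le:tailtointegral} with $h(t)=\ell(\gamma^*(ct))$, rescaled as needed to match its hypothesis.

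The only genuinely hard implication is $(4)\Rightarrow(1)$ under $(\ell,\gamma)\in\mathrm{L}\Gamma$, which is exactly Theorem~\ref{th:integralcriterion}. I expect this to be the main obstacle: it must pass from a pointwise moment bound to a uniform concentration bound along the whole curve $\lambda\mapsto\rho(\lambda X)$. The technical conditions packaged into $\mathrm{L}\Gamma$---continuity and differentiability of $\ell$ and $\gamma$ near $0$, positivity of the leading derivatives, and the growth of $\ell$ relative to $\gamma^*$ at infinity---appear to be calibrated precisely so that the analysis can be split into small-$\lambda$ and large-$\lambda$ regimes and handled in each. At the level of the present theorem, however, nothing fresh needs to be reproved; the rest of the argument amounts to careful bookkeeping of constants.
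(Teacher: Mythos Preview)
Your proposal is correct and follows the paper's proof essentially line for line: the same five citations (Proposition~\ref{pr:dualinequality} for $(1)\Leftrightarrow(2)$, Proposition~\ref{pr:shortfall-tail} for $(1)\Rightarrow(3)$, Markov for $(4)\Rightarrow(3)$, Lemma~\ref{le:tailtointegral} for $(3)\Rightarrow(4)$, and Theorem~\ref{th:integralcriterion} for $(4)\Rightarrow(1)$) in the same order. Your extra care about absorbing the constant $M$ in $(4)\Rightarrow(3)$ goes slightly beyond what the paper writes---it just says ``follows easily from Markov's inequality''---but note that you should not invoke Definition~\ref{def:LGamma}(2) there, since that implication is asserted unconditionally; the paper is simply content to leave that bookkeeping implicit.
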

\begin{proof}
The equivalence of (1) and (2), with the same constant $c$, follows quickly from Proposition \ref{pr:dualinequality} of the next section. We saw in Proposition \ref{pr:shortfall-tail} that (1) implies (3). Since $\ell \circ \gamma^*$ is nondecreasing, it follows easily from Markov's inequality that (4) implies (3). If $(\ell,\gamma) \in \mathrm{L}\Gamma$, Theorem \ref{th:integralcriterion} says that (4) implies (1). If $\ell \circ \gamma^* \in \H$, it follows from Lemma \ref{le:tailtointegral} that (3) implies (4).
\end{proof}

\begin{remark} \label{re:meanzero}
If $\Phi \subset L^1$ in Theorem \ref{th:bigequivalence} does not consist solely of mean-zero random variables, then we may apply Theorem \ref{th:bigequivalence} with $\Phi$ replaced by $\Phi'= \{X - \E X : X \in \Phi\}$. In this sense, Theorem \ref{th:bigequivalence} speaks to the concentration of random variables \emph{around their means}. If $\Phi$ is symmetric in the sense that $\Phi = - \Phi$, then some changes can be made: Condition (1) can be replaced with $\rho(\lambda X) \le \gamma(c|\lambda|)$ for all $\lambda \in \R$, condition (3) with $P(|X| > t) \le 2/\ell(\gamma^*(ct))$ for $t > 0$, and condition (4) with $\sup_{X \in \Phi}\E[\ell(\gamma^*(c|X|))] < \infty$.
\end{remark}

\begin{remark}
There seems to be plenty of room for improvement in Theorem \ref{th:integralcriterion}, namely expanding the class $\mathrm{L}\Gamma$ of permissible $(\ell,\gamma)$. We were able to prove the following more abstract integral criterion, but only under the restrictive additional assumption that either $\gamma(0) >0$ or $\gamma'(0) > 0$:
Let $\rho$ be a risk measure on $L^1$ with the Fatou property, and let $\gamma$ be a shape function satisfying $\gamma(x) < \infty$ for some $x > 0$. Let $\Phi \subset L^1$ satisfy the following:
\begin{enumerate}[\quad\quad (i)]
\item For all $X \in \Phi$, $\lim_{\lambda \downarrow 0}\lambda^{-1}\rho(\lambda X) = \E X$.
\item $\Phi$ is uniformly integrable.
\item There exists $\kappa > 0$ such that $\sup_{X \in \Phi}\rho(\gamma^*(\kappa X^+)) < \infty$.
\end{enumerate}
Then there exists $c > 0$ such that $\rho(\lambda (X - \E X)) \le \gamma(c\lambda)$ for all $\lambda \ge 0$ and $X \in \Phi$. In certain cases it can be shown, for example, that for a shortfall risk measure, the moment bound of Theorem \ref{th:bigequivalence}(4) is enough to ensure that these conditions (i-iii) all hold, and a similar result is available for optimized certainty equivalents. The proof of this abstract integral criterion, like that of Theorem \ref{th:integralcriterion}, depends on a bound which essentially comes from Taylor's theorem; condition (i) is equivalent to
\[
\frac{d}{d\lambda}\rho(\lambda X)|_{\lambda = 0} = \E X, \text{ for } X \in \Phi,
\]
and this crucially allows us to bound the first order term in the expansion of $\lambda \mapsto \rho(\lambda X)$ around $\lambda = 0$. If $\gamma(0)=\gamma'(0)=0$, then this proof technique requires some information on the \emph{second} derivative of $\rho(\lambda X)$ near $\lambda = 0$, which is generally quite hard to obtain.
\end{remark}

As a consequence of Theorem \ref{th:bigequivalence}, we obtain a simplified integral criterion for a uniform concentration result, foreshadowing somewhat the discussion of transport inequalities in Section \ref{se:transportationinequalities}. When $\Omega$ is a metric space, we write $\mathrm{Lip}_1(\Omega)$ for the set of $1$-Lipschitz functions on $\Omega$, i.e., the set of $f : \Omega \rightarrow \R$ satisfying $|f(x)-f(y)| \le d(x,y)$ for all $x,y\in \Omega$. We write also $\P^1(\Omega)$ for the set of $P \in \P(\Omega)$ with $\int_\Omega d(x,x_0)P(dx) < \infty$, for some (equivalently, for every) $x_0 \in \Omega$.

\begin{corollary} \label{co:t1-shortfall}
Suppose $(\Omega,d)$ is a complete, separable metric space. Let $\ell$ be a loss function and $\gamma$ a shape function, and let $\rho$ denote the shortfall risk measure corresponding to $\ell$. Assume $P \in \P^1(\Omega)$. Consider the following statements:
\begin{enumerate}
\item There exists $c > 0$ such that for each $f \in \mathrm{Lip}_1(\Omega)$ and each $\lambda \in \R$ we have
\[
\rho\left(\lambda f\right) \le \gamma(|\lambda|) + \lambda\E f.
\]
\item There exists $c > 0$ such that for some (equivalently, for every) $x_0 \in \Omega$,
\[
\int_\Omega\ell(\gamma^*(cd(x,x_0)))P(dx) < \infty.
\]
\end{enumerate}
If $(\ell,\gamma) \in \mathrm{L}\Gamma$ (see Definition \ref{def:LGamma}), then $(2) \Rightarrow (1)$. If $\ell \circ \gamma^* \in \H$ (see Definition \ref{def:classH}), then $(1) \Rightarrow (2)$.
\end{corollary}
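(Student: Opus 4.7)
The plan is to reduce both implications to Theorem \ref{th:bigequivalence} applied to the family
\[
\Phi := \{f - \E f : f \in \mathrm{Lip}_1(\Omega)\},
\]
which is symmetric (so Remark \ref{re:meanzero} will let us handle $\lambda \in \R$). The bridge between (1)--(2) of the corollary and (1),(3),(4) of Theorem \ref{th:bigequivalence} is the elementary Lipschitz bound
\[
|f(x) - \E f| \le \int_\Omega |f(x) - f(y)|\,P(dy) \le \int_\Omega d(x,y)\,P(dy) \le d(x,x_0) + D,
\]
where $D := \int d(y,x_0)\,P(dy) < \infty$ by the hypothesis $P \in \P^1(\Omega)$. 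The corresponding equivalence under changing $x_0$ is a one-line triangle-inequality argument, since $d(x,x_1) \le d(x,x_0) + d(x_0,x_1)$ and shrinking the constant $c$ lets the additive shift be absorbed into $\ell \circ \gamma^*$ via monotonicity (this is the same mechanism that appears in both main steps below).

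For $(2) \Rightarrow (1)$, assume $(\ell,\gamma) \in \mathrm{L}\Gamma$ and fix $c>0$ with $\int \ell(\gamma^*(c\,d(x,x_0)))\,P(dx) < \infty$. The Lipschitz bound shows $(f-\E f)^+(x) \le d(x,x_0) + D$ for every $f \in \mathrm{Lip}_1(\Omega)$. Choosing any $c' > 0$ small enough that $c'(s+D) \le c s$ for all sufficiently large $s$ (the remaining bounded range contributes a finite term to the integral because $\ell \circ \gamma^*$ is finite and nondecreasing on $[0,\infty)$), I would conclude
\[
\sup_{X \in \Phi}\E[\ell(\gamma^*(c' X^+))] < \infty,
\]
which is condition (4) of Theorem \ref{th:bigequivalence} for the mean-zero family $\Phi$. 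Since $(\ell,\gamma) \in \mathrm{L}\Gamma$, that theorem delivers a constant $c'' > 0$ such that $\rho(\lambda(f - \E f)) \le \gamma(c''\lambda)$ for all $f \in \mathrm{Lip}_1$ and $\lambda \ge 0$. Cash additivity converts this to $\rho(\lambda f) \le \gamma(c''\lambda) + \lambda\E f$ for $\lambda \ge 0$, and applying the same bound to $-f \in \mathrm{Lip}_1$ handles $\lambda < 0$, giving (1).

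For $(1) \Rightarrow (2)$, assume $\ell \circ \gamma^* \in \H$ and specialize (1) to $f(x) = d(x,x_0)$, which is $1$-Lipschitz. Combining with cash additivity yields $\rho(\lambda(d(\cdot,x_0) - D)) \le \gamma(c\lambda)$ for all $\lambda \ge 0$. Proposition \ref{pr:shortfall-tail}, applied with the shape function $\lambda \mapsto \gamma(c\lambda)$ whose conjugate is $t \mapsto \gamma^*(t/c)$, gives
\[
P(d(\cdot,x_0) > t + D) \le \frac{1}{\ell(\gamma^*(t/c))}, \quad t > 0.
\]
To put this into the form required by Lemma \ref{le:tailtointegral}, observe that for any $\epsilon > 0$ and all sufficiently large $s$ one has $(s-D)/c \ge s/(c(1+\epsilon))$, so by monotonicity of $h := \ell \circ \gamma^*$,
\[
P(d(\cdot,x_0) > s) \le \frac{C}{h(s/c')}, \quad s > 0,
\]
for $c' := c(1+\epsilon)$ and an appropriate constant $C$ absorbing the small-$s$ regime. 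The rescaled function $\tilde h(s) := h(s/c')$ is still in $\H$ (a change of variables in the defining integral shows this), so Lemma \ref{le:tailtointegral} produces a $\tilde c > 0$ with $\E[h(\tilde c\,d(\cdot,x_0))] < \infty$, which is exactly (2).

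The main obstacle is the clean management of the additive shift $D$: neither $\ell$ nor $\gamma^*$ is assumed to have any doubling-type property, so one cannot directly move a constant inside $\ell(\gamma^*(\cdot))$. The workaround is the same on both sides — trade a constant additive shift inside $\gamma^*$ for a slight shrinkage of the multiplicative constant $c$, using monotonicity and the fact that this shift is swallowed at infinity. This is essentially bookkeeping, but it is the step where the constants $c$ in (1) and (2) become different, and it is the only nontrivial ingredient beyond the invocation of Theorem \ref{th:bigequivalence}, Proposition \ref{pr:shortfall-tail}, and Lemma \ref{le:tailtointegral}.
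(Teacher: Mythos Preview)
Your proof is correct and follows essentially the same route as the paper's. The only cosmetic difference is the normalization: the paper takes $\Phi = \{f \in \mathrm{Lip}_1(\Omega) : f(x_0)=0\}$, so that $|f(x)| \le d(x,x_0)$ directly, and only afterwards passes to $f - \E f$; you instead bound $|f-\E f| \le d(\cdot,x_0)+D$ from the start. Either way the same ``absorb an additive constant into a smaller multiplicative constant'' bookkeeping is needed, and both directions ultimately rest on Theorem~\ref{th:bigequivalence} (equivalently, Proposition~\ref{pr:shortfall-tail}, Lemma~\ref{le:tailtointegral}, and Theorem~\ref{th:integralcriterion}) applied to the single Lipschitz function $d(\cdot,x_0)$ for $(1)\Rightarrow(2)$ and to the centered Lipschitz family for $(2)\Rightarrow(1)$.
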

\begin{proof}
Since $x \mapsto d(x,x_0)$ is $1$-Lipschitz, it follows from Theorem \ref{th:bigequivalence} that (1) implies (2) when $\ell \circ \gamma^* \in \H$. To show (2) implies (1), let $\Phi = \{f \in \mathrm{Lip}_1(\Omega) : f(x_0)=0\}$, and note that $W_1(P,Q) = \sup_{f \in \Phi}(\E^Qf - \E^Pf)$ by Kantorovich duality. Then (2) implies
\[
\sup_{f \in \Phi}\int_\Omega\ell(\gamma^*(c|f(x)|)P(dx) \le \int_\Omega\ell(\gamma^*(cd(x,x_0)))P(dx) < \infty.
\]
Since $\ell \circ \gamma^*$ grows faster than $x^2$ at infinity, this implies also that $\sup_{f \in \Phi}\E^Pf < \infty$. This in turn implies
\[
\sup_{f \in \Phi}\int_\Omega\ell(\gamma^*(c|f(x) - \E^Pf|)P(dx)
\]
Hence, when $(\ell,\gamma) \in \mathrm{L}\Gamma$, (1) follows from Theorem \ref{th:integralcriterion}.
\end{proof}

\subsection{Applications to options} \label{se:options}
Here we apply Corollary \ref{co:t1-shortfall} to a problem of liquidity risk for options. Let $\Omega = \R_+^n$ represent the space of possible payoffs of $n$ assets at some given maturity. The probability measure $P$ on $\Omega$ represents the joint distribution of payoffs. Let $C^i_k(x) = (x_i-k)^+$ and $P^i_k(x) = (k-x_i)^+$ denote the payoffs of call and put options struck at $k > 0$ on the underlying asset $i$. These are of course the most basic of options, and we will show that their liquidity risk profiles control the liquidity risk profiles of \emph{all Lipschitz options}, i.e. Lipschitz functions of the $n$ underlying assets. Lipschitz functions cover a wide range of option payoffs and investment strategies, such as baskets, straddles, spreads, etc., and a nice feature of this result is that, in a sense, it holds almost independently choice of shortfall risk measure.

\begin{proposition} \label{pr:options}
Suppose $P \in \P^1(\R^n_+)$. Let $\ell$ be a shortfall function, and let $\rho$ be the corresponding shortfall risk measure on $L^1=L^1(\R^n_+,P)$. Let $\gamma$ be a shape function. Assume $(\ell,\gamma) \in \mathrm{L}\Gamma$ and $\ell \circ \gamma^* \in \H$ (see Definitions \ref{def:LGamma} and \ref{def:classH}). The following are equivalent:
\begin{enumerate}
\item There exist $c,k > 0$ such that
\begin{align}
\max_{i=1,\ldots,n}\rho(\lambda(C^i_k - \E C^i_k)) \vee \rho(\lambda(P^i_k - \E P^i_k)) \le \gamma(c\lambda), \ \forall \lambda \ge 0. \label{def:optionmax}
\end{align}
\item There exists $c > 0$ such that for every $f \in \mathrm{Lip}_1(\R^n_+)$ we have
\begin{align}
\rho(\lambda(f - \E f)) \le \gamma(c\lambda), \ \forall \lambda \ge 0. \label{def:optionbound}
\end{align}
\item There exists $c > 0$ such that $\int_{\R^n_+}\ell(\gamma^*(c|x|))P(dx) < \infty$
\end{enumerate}
\end{proposition}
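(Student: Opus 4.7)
The plan is to establish a triangle of implications: $(2)\Leftrightarrow(3)$ via the metric-space result already in hand, $(2)\Rightarrow(1)$ by specialization, and $(1)\Rightarrow(3)$ by converting option-tail bounds into a tail bound on $|x|$ and then applying the integral criterion. Both of the standing hypotheses $(\ell,\gamma)\in\mathrm{L}\Gamma$ and $\ell\circ\gamma^*\in\H$ will be used, one for each direction of the corollary.

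First, $(2)\Leftrightarrow(3)$ is essentially Corollary \ref{co:t1-shortfall} applied to the metric measure space $(\R^n_+,|\cdot|,P)$: under both our assumptions, both directions of that corollary are available, and (3) is exactly its integrability condition. Second, $(2)\Rightarrow(1)$ is immediate: each $C^i_k$ and $P^i_k$ is $1$-Lipschitz on $\R^n_+$ in the Euclidean norm (they are $1$-Lipschitz in the single coordinate $x_i$, and $x_i$ is $1$-Lipschitz in $|\cdot|$), so applying (2) to these specific $f$ yields (1) with the same constant $c$.

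The main content is $(1)\Rightarrow(3)$. The key elementary observation is that on $\R^n_+$ one has $x_i \le k + C^i_k(x)$ pointwise, hence
\[
|x| \le \sum_{i=1}^n x_i \le nk + \sum_{i=1}^n C^i_k(x),
\]
so for $t > nk$,
\[
P(|x| > t) \le \sum_{i=1}^n P\!\left(C^i_k > (t-nk)/n\right).
\]
Applying Proposition \ref{pr:shortfall-tail} to the mean-zero variables $C^i_k - \E C^i_k$ (for which (1) provides exactly the needed bound $\rho(\lambda(C^i_k-\E C^i_k))\le\gamma(c\lambda)$) yields $P(C^i_k-\E C^i_k > s)\le 1/\ell(\gamma^*(cs))$ for $s>0$, and hence $P(C^i_k > s)\le 1/\ell(\gamma^*(c(s-\E C^i_k)))$ once $s$ exceeds $\E C^i_k$. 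Chaining these gives $P(|x|>t) \le n/\ell(\gamma^*(c' t))$ for all $t$ past some threshold $t_0$. Since $\ell\circ\gamma^*\ge 1$ is nondecreasing, this bound extends to all $t>0$ at the cost of enlarging the constant prefactor to absorb the trivial estimate $P(|x|>t)\le 1$ for $t\le t_0$. The hypothesis $\ell\circ\gamma^*\in\H$ then lets us apply Lemma \ref{le:tailtointegral} with $h=\ell\circ\gamma^*$ to conclude $\E[\ell(\gamma^*(c''|x|))]<\infty$ for some $c''>0$, which is (3).

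The main obstacle is the bookkeeping of constants in this last step: the subtraction of the means $\E C^i_k$ shifts the tail, the splitting into $n$ terms forces a scaling by $1/n$ inside $\gamma^*$, and the bound is only valid for $t$ large, so one must carefully consolidate all of these into a single uniform estimate of the form $P(|x|>t) \le C/h(at)$ required by Lemma \ref{le:tailtointegral}. Note that the put options $P^i_k$ are not strictly needed for $(1)\Rightarrow(3)$ since $x_i\ge 0$ already bounds $|x|$ from above using only calls; they appear in (1) for symmetry and are trivially controlled via (2) in the reverse direction.
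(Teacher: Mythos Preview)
Your argument is correct, and the triangle $(2)\Leftrightarrow(3)$, $(2)\Rightarrow(1)$, $(1)\Rightarrow(3)$ closes cleanly. The paper's proof organizes things differently: rather than chaining tail bounds by hand for $(1)\Rightarrow(3)$, it invokes Theorem~\ref{th:bigequivalence} directly on the family $\{C^i_k-\E C^i_k,\ P^i_k-\E P^i_k\}$ to pass from the concentration inequality \eqref{def:optionmax} to the moment bound $\max_i \E[\ell(\gamma^*(cC^i_k))]\vee\E[\ell(\gamma^*(cP^i_k))]<\infty$, then uses the identity $C^i_k+P^i_k=|x_i-k|$ together with the equivalence of $\ell^1$ and $\ell^2$ norms on $\R^n$ and convexity of $\ell\circ\gamma^*$ to show this moment bound is equivalent to $\int\ell(\gamma^*(c|x-\vec{k}|))\,dP<\infty$, hence to (3). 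So the paper goes \emph{concentration $\to$ moments $\to$ moments}, whereas you go \emph{concentration $\to$ tails $\to$ moments}; your route unpacks the very implications (Proposition~\ref{pr:shortfall-tail} and Lemma~\ref{le:tailtointegral}) that are already bundled inside Theorem~\ref{th:bigequivalence}, which makes it slightly longer but perfectly sound. Your observation that the puts are superfluous for $(1)\Rightarrow(3)$ on $\R^n_+$ (via $x_i\le k+C^i_k(x)$) is a genuine simplification that the paper's symmetric treatment around $\vec{k}$ does not exploit. One small slip: after rescaling, the tail bound from Proposition~\ref{pr:shortfall-tail} reads $P(X>s)\le 1/\ell(\gamma^*(s/c))$, not $1/\ell(\gamma^*(cs))$, but since you immediately absorb this into a fresh constant $c'$ the argument is unaffected.
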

\begin{proof}
By Theorem \ref{th:bigequivalence}, \eqref{def:optionmax} is equivalent to the existence of $c > 0$ such that
\begin{align}
\max_{i=1,\ldots,n}\E[\ell(\gamma^*(c C^i_k))] \vee \E[\ell(\gamma^*(c P^i_k))] < \infty. \label{pf:optionbound1}
\end{align}
Note that $C^i_k(x) + P^i_k(x) = |x_i-k|$ for $x \in \R^n$. For $x\in\R^n$ and for $\vec{k}$ denoting the vector $(k,\ldots,k) \in \R^n$, we have
\begin{align*}
|x - \vec{k}| &= \left[\sum_{i=1}^n|x_i-k|^2\right]^{1/2} \le \left[2\sum_{i=1}^nC^i_k(x)^2 + P^i_k(x)^2\right]^{1/2} \\
	&\le \sqrt{2}\sum_{i=1}^nC^i_k(x) + P^i_k(x) = \sqrt{2}\sum_{i=1}^n|x_i-k| \\
	&\le \sqrt{2n}|x-\vec{k}|.
\end{align*}
Using convexity of $\ell \circ \gamma^*$, we then check easily that \eqref{pf:optionbound1} is equivalent to the existence of $c > 0$ such that
\[
\int_{\R^n_+}\ell(\gamma^*(c_4|x-\vec{k}|))P(dx) < \infty.
\]
Using convexity of $\ell \circ \gamma^*$ once again, this is easily seen to be equivalent to (3).  Finally, Corollary \ref{co:t1-shortfall} implies that (2) and (3) are equivalent.
\end{proof}

\begin{remark}
We would like to be able to set $\gamma(\lambda)$ equal to the left-hand side of \eqref{def:optionmax}, to make \eqref{def:optionbound} as sharp as possible. We cannot do this, for the purely technical reason that we cannot verify in general that $(\ell,\gamma) \in \mathrm{L}\Gamma$ and $\ell \circ \gamma^* \in \H$. 
\end{remark}

Lastly, let us note that two simpler arguments can yield weaker forms of the equivalence of (1) and (2) in Proposition \ref{pr:options}. Fix a $1$-Lipschitz function $f$ on $\R^n_+$, and find $y \in \R^n_+$ such that $f(y) = \E f$, which is possible since the range of $f$ is connected (and thus convex). Then $f(x) - \E f = f(x) - f(y) \le |x-y| \le \sqrt{2}(C_y(x) + P_y(x))$ by the above argument, where we define
\[
C_y(x) = \sum_{i=1}^nC^i_{y_i}(x) = \sum_{i=1}^n(x_i-y_i)^+,
\]
and define $P_y$ similarly. Thus
\[
\rho(\lambda(f-\E f)) \le \rho(\lambda (C_y + P_y)), \ \forall \lambda \ge 0.
\]
Since $C_y \ge 0$ and $P_y \ge 0$, this always exceeds $\rho(\lambda (C_y +P_y- \E C_y- \E P_y))$. More importantly, the choice of strikes $y_1,\ldots,y_n$ depends here on the choice of $f$, whereas the strike $k$ of Proposition \ref{pr:options} was more or less arbitrary. For second attempt, note that
\[
f(x) - \E f = \int_{\R_+^n}(f(x)-f(y))P(dy) \le \int_{\R_+^n}(C_y(x) + P_y(x))P(dy).
\]
Formally using convexity of $\rho$, we conclude that
\[
\rho(\lambda(f-\E f)) \le \int_{\R_+^n}\rho(\lambda (C_y + P_y))P(dy), \ \forall \lambda \ge 0.
\]
Now we're averaging over a range of strikes, and if $\sup_{y \in \R_+^n}\rho(\lambda (C_y + P_y))$ is attained by some $y^* \in \R_+^n$ then we can bound this further by $\rho(\lambda(C_{y^*} + P_{y^*}))$. But this is the \emph{worst} case strike and suffers both of the same problems of the previous argument.

\subsection{Examples} \label{se:examples1}
Let us briefly illustrate how to apply Theorem \ref{th:bigequivalence} to compute some concentration bounds. There is an interesting tradeoff between $\ell$ and $\gamma$; points (3) and (4) of Theorem \ref{th:bigequivalence} depend on $\ell$ and $\gamma$ only through $\ell \circ \gamma^*$. Given two loss functions $\ell_1$ and $\ell_2$ with $\ell_2$ growing faster at infinity than $\ell_1$, the moment condition (4) of Theorem \ref{th:bigequivalence} is a stronger constraint for $\ell_2$, and accordingly the concentration inequality $\rho(\lambda X) \le \gamma(c\lambda)$ is less easily satisfied when $\rho$ corresponds to $\ell_2$ than when it corresponds to $\ell_1$. However, we may \emph{decrease} $\gamma^*$ (or equivalently increase $\gamma$) to compensate for increasing $\ell$, where the terms ``increase'' and ``decrease'' refer loosely to their growth rate at infinity. This tradeoff between $\ell$ and $\gamma$ will be exploited again in Corollary \ref{co:tpequivalence}.

\subsubsection{Subgaussian random variables}
A well known case is when $\rho(X) = \log\E[e^X]$ is the entropic risk measure. The concentration inequality $\rho(\lambda X) \le c\lambda^2/2$ is so pervasive it has been given a name; such an $X$ is called \emph{subgaussian}. The equivalent condition $\E[\exp(c(X^+)^2)] < \infty$ is a well known particular case of Theorem \ref{th:bigequivalence}. Well known subgaussian random variables include bounded random variables (by Hoeffding's lemma) and, of course, normal random variables themselves.

\subsubsection{Moment bounds}
Consider the family
\[
\Phi = \left\{X \in L^1 : \E[(X^+)^{2p}] < \infty\right\},
\]
where $p \ge 2$ is fixed.
Set $\gamma(x) = x^2/2$ and $\ell(x) = [(1+x)^+]^p$, and note that $\gamma^*=\gamma$ on $[0,\infty)$. By construction, $\sup_{X \in \Phi}\E[\ell(\gamma^*(X^+))] < \infty$. We easily check that $(\ell,\gamma) \in \mathrm{L}\Gamma$. Thus, if $\rho$ is the shortfall risk measure corresponding to $\ell$, we may find $c > 0$ such that 
\begin{align}
\rho(\lambda X) \le c^2\lambda^2/2, \text{ for all } \lambda \ge 0 \text{ and } X \in \Phi. \label{ex:moments}
\end{align}
The lognormal law is a common choice for an asset price because of its positivity, so let us see what sorts of concentration inequalities it satisfies. Suppose $X = e^Z$, where $Z$ is a standard normal. Since $X$ has finite moments of all orders, it satisfies $\rho(\lambda X) \le c^2\lambda^2/2$ for the above choice of $\ell(x) = [(1+x)^+]^p$. A sharper choice is $\ell(x) = \exp(a|\log(x)|^2)$ for some $a < 1/2$, for $x > 2$, say, and $\ell$ is extended to remain convex and nondecreasing and to achieve $\ell(0)=1$. Indeed, then $\E[\ell(c|X|^2)]$ is controlled in terms of $\E[\exp(a|Z|^2)]$, which is well known to be finite.

\subsubsection{Bounded random variables}
Suppose $\rho$ is a shortfall risk measure corresponding to some loss function $\ell$.
Note that if $\gamma_c(\lambda) := c\lambda$, then the conjugate is 
\[
\gamma_c^*(x) = \begin{cases}
0 &\text{for } x \le c \\
\infty &\text{for } x > c.
\end{cases}
\]
Hence, if $\rho(\lambda (X) \le c\lambda$ for all $\lambda \ge 0$ and for some $c > 0$, then Proposition \ref{pr:shortfall-tail} implies
\[
P(X > t) \le 1/\ell(\gamma_c^*(t)), \text{ for all } t > 0,
\]
which in turn implies $X \le c$ a.s. On the other hand, if $X \le c$ a.s. then monotonicity of $\rho$ implies $\rho(\lambda X) \le c\lambda$ for all $\lambda \ge 0$. Hence, $\rho(\lambda X) \le c\lambda$ for all $\lambda \ge 0$ if and only if $X \le c$ a.s. But it is worth noting that for bounded random variables a linear shape function typically will not be sharp for small values of $\lambda$. For example, when $\rho(X) = \log\E[e^X]$ is the entropic risk measure, and when $a\le X \le b$ a.s. with $\E X = 0$, Hoeffding's inequality yields $\rho(\lambda X) \le (b-a)^2\lambda^2/8$.

\section{Tensorization and time consistency} \label{se:tensorization}

We now investigate what can be reasonably called the \emph{tensorization} of concentration inequalities. Supposing that two random variables $X_1$ and $X_2$ satisfy $\rho(\lambda X_i) \le \gamma_i(\lambda)$ for all $\lambda \ge 0$ and $i=1,2$, what can be said about the liquidity risk profile of combinations of $X_1$ and $X_2$, in the form $f(X_1,X_2)$? 
Ultimately, we find that it is rather difficult to do any tensorization with risk measures other than the entropic one, or modest perturbations thereof. On the one hand, we will see that this makes the the entropic risk measure almost uniquely convenient for liquidity risk analysis. On the other hand, the analysis of this section will also give some hints for how to sharpen the mathematical results; see Remark \ref{re:sharpen}.

The rest of the section works exclusively with \emph{law invariant risk measures} defined on $L^1(\Omega,\F,P)$, and we assume henceforth that the underlying probability space is nonatomic and standard Borel. Recalling that $\P^1(\R)$ is the subset of $\P(\R)$ consisting of measures with finite first moments, define $\tilde{\rho} : \P^1(\R) \rightarrow (-\infty,\infty]$ by $\tilde{\rho}(P \circ X^{-1}) = \rho(X)$, which is possible thanks to law-invariance. We may then define, for any $\sigma$-field $\G \subset \F$ in $\Omega$ and any $X \in L^1$,
\[
\rho(X | \G) := \tilde{\rho}(P(X \in \cdot | \G)).
\]
This is a $\G$-measurable random variable, defined uniquely up to a.s. equality. Similary, for a random variable $Y$, write $\rho(X | Y) := \rho(X | \sigma(Y))$. Note that if $Y$ is $\G$-measurable then
\[
\rho(X +Y | \G) = \rho(X | \G) + Y, \ a.s.,
\]
for any random variable $X$. If $X$ and $Y$ are independent, then it is straightforward to check that
\[
\rho(f(X,Y) | X) = \rho(f(x,Y))|_{x = X}.
\]
The key definition is the following, and most of this section is devoted to its elaboration:

\begin{definition}
We say that a law-invariant risk measure $\rho$ is \emph{acceptance consistent} if
\[
\rho(X) \le \rho(\rho(X | \G)),
\]
for every sub-$\sigma$-fields $\G \subset \F$ and every $X \in L^1$ such that $\rho(X | \G) \in L^1$. If the inequality is reversed, we say $\rho$ is rejection consistent. If $\rho$ is both acceptance consistent and rejection consistent, we say $\rho$ is \emph{time consistent}.
\end{definition}

See \cite{lacker-lawinvariant} for a thorough discussion of this property as well as several alternative characterizations.
It follows from the results of Kupper and Schachermayer \cite{kupper-schachermayer} (see Corollary 4.7 of \cite{lacker-lawinvariant}) that the only time consistent law invariant risk measure (up to dilations, i.e., replacing $\rho(X)$ by $t^{-1}\rho(tX)$) is the entropic risk measure.
If we seek merely acceptance consistency,
it is shown in  \cite{lacker-lawinvariant} that a shortfall risk measure is acceptance consistent if the loss function is \emph{log-subadditive} in the sense that $\ell(x+y)\le\ell(x)\ell(y)$ for all $x,y \in \R$, and it seems that this sufficient condition is not far from necessary. Aside from the exponential function $\ell(x)=e^x$ and close relatives thereof, not many functions satisfy both this property and those required in the definition of a loss function. Similarly, an optimized certainty equivalent is acceptance consistent if the function $\phi$ in its definition satisfies $x\phi^*(y) + y\phi^*(x) \le \phi^*(xy)$ for $x,y \ge 0$. This holds when $\phi^*(x) = x\log x$ on $x \ge 0$ and $\phi^* = \infty$ on $(-\infty,0)$ or equivalently $\phi(x) = e^{x-1}$. Alternatively, if $\phi^*(x) = x\ell^{-1}(x)$ for a loss function $x$, this property of $\phi^*$ is essentially equivalent to log-subadditivity of $\ell$, indicating that this property of $\phi^*$ is no easier to come by.

While not many risk measures beyond the entropic one are acceptance consistent, the results of this section serve as an illustration of why acceptance consistence is so useful in the study of liquidity risk. Indeed, its relevance is seen first in the following simple but useful result:

\begin{proposition} \label{pr:acc-subadditive}
Let $\rho$ be a acceptance consistent law-invariant risk measure. Suppose $X_1,X_2 \in L^1$ are any independent (real-valued) random variables. Then
\[
\rho(X_1 + X_2) \le \rho(X_1) + \rho(X_2).
\]
Similarly, if $\rho$ is rejection consistent, then $\rho(X_1+X_2) \ge \rho(X_1)+\rho(X_2)$.
\end{proposition}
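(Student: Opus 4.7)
The plan is to combine acceptance consistency with the independence-factorization formula $\rho(f(X,Y) \mid X) = \rho(f(x,Y))|_{x=X}$ stated just above the definition of time consistency. The whole argument should reduce to a one-line computation of the conditional risk, so I expect no real obstacle beyond bookkeeping.

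First I would condition on $\G := \sigma(X_1)$. Applying the independence formula with $f(x_1,x_2) = x_1 + x_2$ (and using cash additivity inside, since constants factor out), we get
\[
\rho(X_1 + X_2 \mid X_1) = \rho(x_1 + X_2)\big|_{x_1 = X_1} = X_1 + \rho(X_2), \quad \text{a.s.}
\]
Since $\rho(X_2) \in \R$, the right-hand side lies in $L^1$, so the acceptance-consistency hypothesis $\rho(X_1+X_2 \mid \G) \in L^1$ is satisfied.

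Next, I would invoke acceptance consistency for $X = X_1 + X_2$ with this choice of $\G$:
\[
\rho(X_1 + X_2) \;\le\; \rho\bigl(\rho(X_1 + X_2 \mid X_1)\bigr) \;=\; \rho\bigl(X_1 + \rho(X_2)\bigr) \;=\; \rho(X_1) + \rho(X_2),
\]
where the last equality is cash additivity. This yields the asserted subadditivity. The rejection-consistent case is identical with the inequality reversed at the single step where acceptance consistency was used, giving $\rho(X_1 + X_2) \ge \rho(X_1) + \rho(X_2)$.

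The only subtlety worth flagging is justifying the identity $\rho(X_1+X_2 \mid X_1) = X_1 + \rho(X_2)$ cleanly: one should note that the independence formula is applied pointwise in the frozen variable $x_1$, and cash additivity (which holds deterministically on $\R$) is then applied inside to pull the constant $x_1$ out of $\rho(x_1 + X_2)$ before substituting back $x_1 = X_1$. No Fatou property or duality is needed, and law invariance is used only implicitly via the well-definedness of $\rho(\,\cdot \mid \G)$.
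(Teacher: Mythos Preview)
Your proof is correct and follows essentially the same approach as the paper: compute $\rho(X_1+X_2\mid X_1)=X_1+\rho(X_2)$ via the independence formula and cash additivity, note this lies in $L^1$, then apply acceptance consistency and cash additivity once more. The paper's proof is identical in structure and content, only slightly terser.
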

\begin{proof}
Note that
\[
\rho(X_1 + X_2 | X_1) = \rho(x_1 +  X_2)_{x_1=X_1} = X_1 + \rho(X_2).
\]
Since this is clearly in $L^1$, acceptance consistency implies
\begin{align*}
\rho(X_1 + X_2) &\le \rho\left(\rho(X_1 + X_2 | X_1)\right) \\
	&= \rho\left(X_1+\rho(X_2)\right) \\
	&= \rho(X_1) + \rho(X_2).
\end{align*}
The second claim is proven similarly.
\end{proof}

Note that this may of course be applied inductively. Suppose $X_1,\ldots,X_n$ are independent random variables.
If $\rho(\lambda X_i) \le \gamma_i(\lambda)$ for some shape functions $\gamma_i$, and if $\rho$ is acceptance consistent, then
\[
\rho\left(\lambda\sum_{i=1}^nX_i\right) \le \sum_{i=1}^n\rho(\lambda X_i) \le \sum_{i=1}^n\gamma_i(\lambda).
\]
In other words, if $\overline{X}_n = \frac{1}{n}\sum_{i=1}^nX_i$, then
\[
\rho(\lambda \overline{X}_n) \le \sum_{i=1}^n\gamma_i(\lambda/n).
\]
For example, suppose $\gamma_i = \gamma$ does not depend on $i$ and also that $\gamma(0)=0$. Convexity of $\gamma$ then implies $\gamma(\lambda/n) \le \gamma(\lambda)/n$, and so
\[
\rho(\lambda \overline{X}_n) \le \gamma(\lambda).
\]
This bound is dimension-free, in the sense that the right-hand side does not depend on $n$. In other words, when the risk measure is acceptance consistent, averaging independent losses does not worsen the liquidity risk profile.
But more is true:

\begin{proposition} \label{pr:general-acc-tensorization}
Let $\rho$ be an acceptance consistent law-invariant risk measure. For each $i=1,\ldots,n$ let $X_i$ be an $E_i$-valued random variable, where $(E_i,d_i)$ is a complete separable metric space. Suppose $X_1,\ldots,X_n$ are independent, with respective laws $\mu_i \in \P^1(E_i)$.
Suppose that for each $i$, each $\lambda \ge 0$, and each $1$-Lipschitz function $f$ on $(E_i,d_i)$, we have
\begin{align}
\rho(\lambda(f(X_i)-\E f(X_i)))\le \gamma_i(\lambda). \label{hyp:general-acc-tensorization}
\end{align}
Equip $E = E_1 \times \cdots \times E_n$ with the $\ell^1$-metric
\[
d(x,y) = \sum_{i=1}^nd_i(x_i,y_i).
\]
Then, for any $1$-Lipschitz function $f$ on $E$, we have
\[
\rho\left[\lambda\left(f(X_1,\ldots,X_n) - \E f(X_1,\ldots,X_n)\right)\right] \le \sum_{i=1}^n\gamma_i(\lambda), \text{ for all } \lambda \ge 0.
\]
\end{proposition}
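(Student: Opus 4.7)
The plan is to prove the statement by induction on $n$, with the base case $n=1$ reducing immediately to the hypothesis \eqref{hyp:general-acc-tensorization}. For the inductive step, the idea is to integrate out the last variable and split the centered functional into two pieces, one handled by the one-dimensional hypothesis and one handled by the inductive assumption, then glue them together via acceptance consistency.

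Concretely, assume the claim holds for $n-1$, and define
\[
h(x_1,\ldots,x_{n-1}) := \E[f(x_1,\ldots,x_{n-1},X_n)].
\]
First I would verify two Lipschitz facts exploiting the $\ell^1$ structure: (a) for every fixed $(x_1,\ldots,x_{n-1})$, the map $x_n \mapsto f(x_1,\ldots,x_{n-1},x_n) - h(x_1,\ldots,x_{n-1})$ is $1$-Lipschitz on $(E_n,d_n)$ and has zero mean under $\mu_n$; and (b) $h$ itself is $1$-Lipschitz on $E_1\times\cdots\times E_{n-1}$ with the $\ell^1$-metric. Both follow from a routine application of the triangle inequality inside the expectation defining $h$.

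Next, write the centered functional as
\[
f(X_1,\ldots,X_n) - \E f(X_1,\ldots,X_n) = \bigl[f(X_1,\ldots,X_n) - h(X_1,\ldots,X_{n-1})\bigr] + \bigl[h(X_1,\ldots,X_{n-1}) - \E h(X_1,\ldots,X_{n-1})\bigr],
\]
and set $\G := \sigma(X_1,\ldots,X_{n-1})$. Since the second bracket is $\G$-measurable, cash additivity of $\rho(\cdot\,|\,\G)$ gives
\[
\rho\bigl(\lambda(f(X_1,\ldots,X_n) - \E f(X_1,\ldots,X_n))\,\big|\,\G\bigr) = \rho\bigl(\lambda(f(X_1,\ldots,X_n) - h(X_1,\ldots,X_{n-1}))\,\big|\,\G\bigr) + \lambda(h(X_1,\ldots,X_{n-1}) - \E h).
\]
By the independence identity $\rho(g(Y,Z)\,|\,Y) = \rho(g(y,Z))|_{y=Y}$ applied with $Y=(X_1,\ldots,X_{n-1})$ and $Z=X_n$, together with fact (a) above and the hypothesis \eqref{hyp:general-acc-tensorization} in coordinate $n$, the first term on the right is bounded almost surely by $\gamma_n(\lambda)$. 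Hence the whole conditional expression is at most $\gamma_n(\lambda) + \lambda(h(X_1,\ldots,X_{n-1}) - \E h)$.

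Finally, invoke acceptance consistency to pass from the conditional bound to an unconditional one, then use monotonicity, cash additivity, fact (b), and the inductive hypothesis applied to $h$:
\[
\rho\bigl(\lambda(f - \E f)\bigr) \le \rho\bigl(\gamma_n(\lambda) + \lambda(h(X_1,\ldots,X_{n-1}) - \E h)\bigr) = \gamma_n(\lambda) + \rho\bigl(\lambda(h - \E h)\bigr) \le \sum_{i=1}^n \gamma_i(\lambda).
\]
The only subtle step is the application of acceptance consistency, which needs $\rho(\lambda(f-\E f)\,|\,\G)$ to lie in $L^1$; this should follow from the shape-function bound $\gamma_n(\lambda) + \lambda(h-\E h)$ we just derived together with the integrability of $h$ (itself a $1$-Lipschitz function of the $X_i$'s with finite first moments by assumption $\mu_i \in \P^1(E_i)$). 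I expect this integrability bookkeeping to be the main technical nuisance, whereas the conceptual content rests entirely on the two Lipschitz facts and the conditional cash-additivity identity.
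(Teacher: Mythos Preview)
Your proposal is correct and follows essentially the same route as the paper: condition on all but one coordinate, apply the one-dimensional hypothesis to the remaining coordinate, then use acceptance consistency and monotonicity to pass to the unconditional bound and induct. The only detail you leave slightly open is the $L^1$ check for $\rho(\lambda(f-\E f)\,|\,\G)$; the paper handles this by pairing your upper bound with the lower bound $\rho(\lambda f(x,X_n)) \ge \lambda \E[f(x,X_n)]$ coming from Proposition~\ref{pr:convexorder}, which sandwiches the conditional risk between two integrable quantities.
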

\begin{proof}
We prove the case of $n=2$, as the general case follows by an easy induction.
For fixed $x$, the function $y \mapsto f(x,y)$ is Lipschitz, and thus \eqref{hyp:general-acc-tensorization} implies
\[
\rho(\lambda f(x,X_2)) \le \gamma_2(\lambda) + \lambda\E f(x,X_2).
\]
On the other hand, thanks to Proposition \ref{pr:convexorder}, $\rho(\lambda f(x,X_2)) \ge \lambda \E[f(x,X_2)]$. Combining these upper and lower bounds shows that $x \mapsto \rho(\lambda f(x,X_2))$ is $\mu_1$-integrable, or equivalently $\rho(\lambda f(X_1,X_2) | X_1)$ is in $L^1$.
The function $x \mapsto \E f(x,X_2)$ is $1$-Lipschitz, and thus acceptance consistency and monotonicity yield
\begin{align*}
\rho(\lambda f(X_1,X_2)) &\le \rho(\rho(\lambda f(x,X_2))|_{x = X_1}) \le \gamma_2(\lambda) + \rho(\lambda\E[f(X_1,X_2) | X_1]) \\
	&\le \gamma_1(\lambda) + \gamma_2(\lambda) + \lambda\E[f(X,Y)].
\end{align*}
\end{proof}

\begin{remark}
In Corollary \ref{co:t1}, we will see that the hypothesis \eqref{hyp:general-acc-tensorization} is equivalent to the \emph{transport inequality}
\[
\gamma^*_i(W_1(\nu,\mu_i)) \le \alpha(\nu | \mu_i), \text{ for all } \nu \in \P_{\mu_i}^\infty(E_i),
\]
where $W_1$ is the Wasserstein distance defined in \eqref{def:wasserstein}, and
where $\alpha(\cdot | \mu_i)$ is defined as the minimal penalty function of the risk measure on $L^1(E_i,\mu_i)$ given by $f \mapsto \rho(f(X_i))$. See \cite{lacker-lawinvariant} for more discussion of this functional $\alpha(\cdot | \cdot)$; in particular, the acceptance consistency of $\rho$ is equivalent to this functional $\alpha$ satisfying a property called \emph{superadditivity}, which is simply an inequality form of the chain rule for relative entropy. In fact, Proposition \ref{pr:general-acc-tensorization} could be alternatively proven using this superadditivity property along with now-classical arguments for tensorization of transport inequalities originating in \cite{marton-bounding,talagrand-transportation}. Following the ideas of \cite[Proposition 1.8]{gozlan-leonard-survey}, we could extend this to various classes of non-Lipschitz functions, which corresponds on the dual side to choosing a different optimal transport cost in place of $W_1$ (see Corollary \ref{co:tc}), but we omit this for the sake of brevity.
\end{remark}

We may extend these arguments beyond the independent case, as illustrated by the following analog of Proposition \ref{pr:acc-subadditive}:

\begin{proposition} \label{pr:tensorization-dependent}
Let $\rho$ be an acceptance consistent law-invariant risk measure. Suppose $X_1$ and $X_2$ are any (not necessarily independent) random variables satisfying
\begin{align}
\rho(\lambda X_1) &\le \gamma_1(\lambda), \label{hyp:tensorization-dependent1} \\
\rho(\lambda X_2 | X_1) &\le \gamma_2(\lambda), \ a.s., \label{hyp:tensorization-dependent2}
\end{align}
for all $\lambda \ge 0$, where $\gamma_1$ and $\gamma_2$ are shape functions.
Then
\[
\rho(\lambda(X_1 + X_2)) \le \gamma_1(\lambda) + \gamma_2(\lambda).
\]
\end{proposition}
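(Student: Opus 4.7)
The plan is to mirror the proof of Proposition \ref{pr:acc-subadditive}, using the conditional risk measure and acceptance consistency. The intuition is that, given $X_1$, the inequality \eqref{hyp:tensorization-dependent2} lets us conditionally replace $\lambda X_2$ by the deterministic bound $\gamma_2(\lambda)$, and then \eqref{hyp:tensorization-dependent1} handles the remaining $\lambda X_1$ term.

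First, I would use conditional cash additivity to compute
\[
\rho(\lambda(X_1+X_2)\,|\,X_1) \;=\; \lambda X_1 + \rho(\lambda X_2\,|\,X_1) \;\le\; \lambda X_1 + \gamma_2(\lambda) \quad \text{a.s.},
\]
where the equality is the rule $\rho(X+Y\,|\,\G) = \rho(X\,|\,\G)+Y$ recalled just before the definition of acceptance consistency (with $\G = \sigma(X_1)$ and $Y = \lambda X_1$), and the inequality is \eqref{hyp:tensorization-dependent2}.

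Next I need to verify that $\rho(\lambda(X_1+X_2)\,|\,X_1) \in L^1$ so that acceptance consistency applies. The upper bound $\lambda X_1 + \gamma_2(\lambda)$ is in $L^1$ since $X_1 \in L^1$ and $\gamma_2(\lambda) < \infty$. For the lower bound, Proposition \ref{pr:convexorder} gives $\rho \ge \E$ for a law-invariant risk measure, and the same inequality holds conditionally by Jensen/convex order applied fibrewise, so
\[
\rho(\lambda(X_1+X_2)\,|\,X_1) \;\ge\; \E[\lambda(X_1+X_2)\,|\,X_1] \;=\; \lambda X_1 + \lambda \E[X_2\,|\,X_1],
\]
which is in $L^1$. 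Hence $\rho(\lambda(X_1+X_2)\,|\,X_1) \in L^1$.

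Finally, I would chain acceptance consistency, monotonicity, cash additivity, and \eqref{hyp:tensorization-dependent1}:
\[
\rho(\lambda(X_1+X_2)) \;\le\; \rho\bigl(\rho(\lambda(X_1+X_2)\,|\,X_1)\bigr) \;\le\; \rho(\lambda X_1 + \gamma_2(\lambda)) \;=\; \rho(\lambda X_1) + \gamma_2(\lambda) \;\le\; \gamma_1(\lambda) + \gamma_2(\lambda).
\]
The only nontrivial step is the $L^1$ check above; beyond that the argument is a direct unwinding of the definitions, and the proof is parallel to that of Proposition \ref{pr:acc-subadditive} (which is the special case where $X_1, X_2$ are independent and $\gamma_1(\lambda) = \rho(\lambda X_1)$, $\gamma_2(\lambda) = \rho(\lambda X_2)$).
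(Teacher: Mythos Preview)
Your proof is correct and follows essentially the same route as the paper's: acceptance consistency, conditional cash additivity, monotonicity, then cash additivity and \eqref{hyp:tensorization-dependent1}. You additionally supply the $L^1$ verification that the paper leaves implicit; just note that if $\gamma_2(\lambda)=\infty$ the claimed bound is vacuous, so assuming $\gamma_2(\lambda)<\infty$ is harmless.
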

\begin{proof}
Use the same calculation as in the proof of Proposition \ref{pr:acc-subadditive} along with monotonicity of $\rho$:
\begin{align*}
\rho(\lambda(X_1 + X_2)) &\le \rho\left(\rho(\lambda(X_1 + X_2) | X_1)\right) \\
	&= \rho\left(\lambda X_1+\rho(\lambda X_2 | X_1)\right) \\
	&= \rho(\lambda X_1 + \gamma_2(\lambda)) \\
	&\le \rho(\lambda X_1) + \gamma_2(\lambda) \\
	&\le \gamma_1(\lambda) + \gamma_2(\lambda).
\end{align*}
\end{proof}

Proposition \ref{pr:tensorization-dependent} could naturally apply to Markov chains, by interpreting the two hypotheses \eqref{hyp:tensorization-dependent1} and \eqref{hyp:tensorization-dependent2} as conditions on the initial distribution and transition kernel, respectively. More generally, a natural analog of Proposition \ref{pr:general-acc-tensorization} is available in this setting with dependence, but we will not write this out.

\begin{remark} \label{re:sharpen}
Let us note why the inequalities of this section are generally not very sharp beyond the entropic case, $\rho(X) = \log\E[e^X]$. Only for the entropic risk measure (the essentially unique time consistent risk measure) does the conclusion of Proposition  \ref{pr:acc-subadditive} hold with equality.
Generally, when $\rho$ is acceptance consistent, are losing something at each step of the induction in Proposition \ref{pr:general-acc-tensorization}, namely with the inequality
\[
\rho(\lambda f(X_1,X_2)) \le \rho(\rho(\lambda f(x,X_2))|_{x = X_1}).
\]
Our inequalities remain tight if we instead work directly with the right-hand side above, i.e., with iterations of $\rho$. This idea will be explored in a follow-up paper.
\end{remark}

\section{Dual inequalities and concentration of measure} \label{se:dualinequalities}

We now turn toward a more general discussion of concentration inequalities and their \emph{duals}, of the type described in the following simple proposition, building on the idea of Bobkov and G\"otze \cite{bobkov-gotze}. In fact, in its present form, this is essentially Theorem 3.5 of \cite{gozlan-leonard-survey}.

\begin{proposition} \label{pr:dualinequality}
Let $\rho$ be a risk measure on $L^1$ with a penalty function $\alpha$. Let $\gamma$ be a shape function, and let $X \in L^1$.
Then
\[
\rho(\lambda X) \le \gamma(\lambda), \text{ for all } \lambda \ge 0,
\]
if and only if 
\[
\gamma^*(\E^Q[X])  \le \alpha(Q), \text{ for all } Q \in \P_P^\infty(\Omega).
\]
\end{proposition}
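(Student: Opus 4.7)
The proposition is a clean instance of the order-reversing behavior of convex conjugation, combined with the dual representation of $\rho$ via its penalty function. My plan is to unpack both sides using the duality and then simply swap the order of the two suprema (one over $\lambda \ge 0$, one over $Q$).

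First I would use the defining identity $\rho(Y) = \sup_{Q \in \P_P^\infty(\Omega)}\bigl(\E^Q[Y] - \alpha(Q)\bigr)$ applied to $Y = \lambda X$, so that the inequality $\rho(\lambda X) \le \gamma(\lambda)$ for all $\lambda \ge 0$ is equivalent to the family of scalar inequalities
\begin{equation*}
\lambda\,\E^Q[X] - \alpha(Q) \le \gamma(\lambda), \quad \forall\,\lambda \ge 0,\ \forall\,Q \in \P_P^\infty(\Omega).
\end{equation*}
Since this is a statement about each pair $(\lambda,Q)$, it can be rearranged as $\lambda\,\E^Q[X] - \gamma(\lambda) \le \alpha(Q)$ for all $\lambda\ge 0$ and $Q$, which is in turn equivalent to the single inequality
\begin{equation*}
\sup_{\lambda \ge 0}\bigl(\lambda\,\E^Q[X] - \gamma(\lambda)\bigr) \le \alpha(Q), \quad \forall\,Q \in \P_P^\infty(\Omega).
\end{equation*}

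Next I would identify the left-hand side as $\gamma^*(\E^Q[X])$. Because $\gamma$ is extended by $+\infty$ on $(-\infty,0)$, its Legendre conjugate is $\gamma^*(t) = \sup_{\lambda \ge 0}(\lambda t - \gamma(\lambda))$, exactly the expression appearing in the display above with $t = \E^Q[X]$. This delivers the desired equivalence with $\gamma^*(\E^Q[X]) \le \alpha(Q)$ for all $Q \in \P_P^\infty(\Omega)$.

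There is no serious obstacle: the only subtlety is the convention $\gamma(\lambda)=+\infty$ on negative $\lambda$, which ensures that the conjugate taken over $\lambda \in \R$ coincides with the one-sided supremum over $\lambda \ge 0$, so that the rearrangement of quantifiers goes through for both directions simultaneously. Since $\gamma$ is already assumed lower semicontinuous and convex, no biconjugate step is needed, and no integrability issue arises because $X \in L^1$ and $dQ/dP \in L^\infty$ imply $\E^Q[X]$ is a finite real number. The proof therefore reduces to one line of quantifier manipulation, with both implications obtained at once.
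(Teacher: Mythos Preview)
Your proof is correct and follows essentially the same approach as the paper: both arguments reduce the equivalence to swapping the order of the suprema over $\lambda \ge 0$ and over $Q \in \P_P^\infty(\Omega)$, using the convention $\gamma \equiv +\infty$ on $(-\infty,0)$ to identify the one-sided supremum with $\gamma^*$. The paper merely runs the chain of equivalences in the opposite order, starting from the dual inequality and ending with $\rho(\lambda X) \le \gamma(\lambda)$.
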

\begin{proof}
Recall that $\gamma^*(t) = \sup_{\lambda \ge 0}(\lambda t - \gamma(\lambda))$, since $\gamma = \infty$ on $(-\infty,0)$ by convention.
Thus $\gamma^*(\E^Q[X])  \le \alpha(Q)$ for all $Q \in \P_P^\infty(\Omega)$ if and only if
\[
\sup_{\lambda \ge 0}\left(\lambda\E^Q[X] - \gamma(\lambda)\right) \le \alpha(Q), \text{ for all } \in \P_P^\infty(\Omega),
\]
which holds if and only if
\[
\sup_{Q \in \P_P^\infty(\Omega)}\left(\lambda\E^Q[X] - \alpha(Q)\right) \le \gamma(\lambda), \text{ for all } \lambda \ge 0.
\]
Recognize that the left-hand side is equal to $\rho(\lambda X)$.
\end{proof}

More generally, we extend this to cover general starting positions as follows. Suppose we already face a loss of $Y$, and we wish to understand the risk of investing some additional quantity into a loss of $X$. Then the relevant liquidity risk profile is that of $X$ relative to $Y$, i.e., $(\rho(\lambda X + Y) -\rho(Y))_{\lambda \ge 0}$. More generally, we consider a class of starting positions $Y$ instead of a single one. It is quite a lot to ask of a random variable $X$ to verify the concentration inequality
\begin{align}
\sup_{Y \in \Phi}\rho(\lambda X + Y) -\rho(Y) \le \gamma(\lambda), \ \forall \lambda \ge 0, \label{def:concentration-startingposition}
\end{align}
when the class $\Phi \subset L^1$ is large. For example, as we will see in Corollary \ref{co:sensitiveinitialposition}, if $\gamma'(0)=\gamma(0)=0$ and $\Phi = L^1$, then $X$ can  satisfy \eqref{def:concentration-startingposition} only if $X \le 0$ a.s.

\begin{proposition} \label{pr:generalizeddualinequality}
Let $\rho$ be a risk measure on $L^1$ with a penalty function $\alpha$. Let $\Phi \subset L^1$.
Then the functional 
\begin{align}
\bar{\rho}(X) := \sup_{Y \in \Phi}\left(\rho(X+Y)-\rho(Y)\right) \label{pr:generalizeddualinequality1}
\end{align}
is a risk measure on $L^1$ with a penalty function given by
\[
\bar{\alpha}(Q) := \alpha(Q) - \sup_{Y \in \Phi}\left(\E^Q[Y] - \rho(Y)\right).
\]
If $\alpha$ is the minimal penalty function for $\rho$, then $\bar{\alpha}$ is the minimal penalty function for $\bar{\rho}$.
\end{proposition}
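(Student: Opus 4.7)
The plan is to verify in order (i) the risk--measure axioms for $\bar\rho$, (ii) the dual representation via $\bar\alpha$, (iii) nonnegativity of $\bar\alpha$, and (iv) minimality of $\bar\alpha$ when $\alpha$ is minimal.

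For (i), axioms (R1)--(R4) are inherited from $\rho$: monotonicity and cash additivity pass through the supremum over $Y\in\Phi$ since the additive term $-\rho(Y)$ is independent of $X$; normalization is immediate from $\rho(Y)-\rho(Y)=0$; and convexity follows by applying convexity of $\rho$ to $t(X_1+Y)+(1-t)(X_2+Y) = (tX_1+(1-t)X_2)+Y$, subtracting $\rho(Y)$, and taking the sup over $Y$. For (ii), substitute the dual formula $\rho(X+Y)=\sup_Q(\E^Q[X+Y]-\alpha(Q))$ into the definition of $\bar\rho$ and interchange the two (unrestricted) suprema to obtain
\[
\bar\rho(X) = \sup_{Y\in\Phi}\sup_{Q}\bigl(\E^Q[X]+\E^Q[Y]-\alpha(Q)-\rho(Y)\bigr) = \sup_Q\bigl(\E^Q[X]-\bar\alpha(Q)\bigr).
\]
For (iii), the inequality $\E^Q[Y]-\rho(Y)\le \alpha(Q)$ for every $Y\in L^1$ is a direct rearrangement of the dual formula for $\rho$; taking the sup over $Y\in\Phi\subset L^1$ yields $\bar\alpha(Q)\ge 0$.

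The main work is (iv). I would invoke Theorem \ref{th:follmerschied}, which expresses the minimal penalty of $\bar\rho$ as
\[
\alpha^{\min}_{\bar\rho}(Q)=\sup\{\E^Q[X]:X\in L^1,\ \bar\rho(X)\le 0\}.
\]
The condition $\bar\rho(X)\le 0$ is equivalent, by cash additivity, to $X+Y-\rho(Y)$ lying in the acceptance set $A_\rho=\{Z:\rho(Z)\le 0\}$ for every $Y\in\Phi$. Since $\bar\alpha$ is a penalty function by (ii), one has $\alpha^{\min}_{\bar\rho}\le\bar\alpha$ automatically, so the remaining task is the reverse inequality $\bar\alpha\le\alpha^{\min}_{\bar\rho}$. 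Fixing $Q$ and $\epsilon>0$, and using the minimality formula $\alpha(Q)=\sup\{\E^Q[Z]:Z\in A_\rho\}$, I would select near--maximizers $Z_\epsilon\in A_\rho$ with $\E^Q[Z_\epsilon]\ge\alpha(Q)-\epsilon$ and $Y_\epsilon\in\Phi$ with $\E^Q[Y_\epsilon]-\rho(Y_\epsilon)\ge\sup_{Y\in\Phi}(\E^Q[Y]-\rho(Y))-\epsilon$, and try the candidate $X:=Z_\epsilon+\rho(Y_\epsilon)-Y_\epsilon$, which achieves $\E^Q[X]\ge\bar\alpha(Q)-2\epsilon$ by direct computation.

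The main obstacle is then verifying $\bar\rho(X)\le 0$, i.e.\ $\rho(X+Y)\le\rho(Y)$ for \emph{every} $Y\in\Phi$ and not merely for $Y=Y_\epsilon$: with $Y=Y_\epsilon$, $\rho(X+Y_\epsilon)=\rho(Z_\epsilon)+\rho(Y_\epsilon)\le\rho(Y_\epsilon)$ is automatic, but no such cancellation occurs for a generic $Y\in\Phi$. This is a genuine minimax--interchange issue, essentially asking whether $\sup_X\inf_{Y\in\Phi}$ and $\inf_{Y\in\Phi}\sup_X$ coincide for the relevant bilinear--type expression in $(X,Y)$. This is where I would concentrate the effort, either by refining the construction so that acceptability is uniform in $Y\in\Phi$, or by exploiting structure of $\Phi$ (for instance, a saturation or cone property) that makes the swap go through.
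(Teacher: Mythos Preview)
Your steps (i)--(iii) are correct and coincide with the paper's argument: the paper also obtains the dual representation of $\bar\rho$ by swapping the two suprema, exactly as you do, and leaves the risk--measure axioms and the nonnegativity of $\bar\alpha$ implicit.

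For step (iv), the minimality claim, you have correctly identified the crux as a minimax interchange. The paper's own proof does not address this: it simply asserts that the formula $\bar\alpha(Q)=\sup_{X\in L^1}(\E^Q[X]-\bar\rho(X))$ is ``straightforward'' to verify. But this formula is precisely the statement $\inf_{Y\in\Phi}\sup_{X}\bigl(\E^Q[X]-\rho(X+Y)+\rho(Y)\bigr)=\sup_X\inf_{Y\in\Phi}(\cdots)$, which is the hard direction of a minimax equality; the function $Y\mapsto \rho(Y)-\rho(X+Y)$ has no convexity in $Y$, so standard minimax theorems do not apply.

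In fact the minimality claim appears to be \emph{false} as stated, and your difficulty is intrinsic rather than technical. Observe that $\bar\alpha(Q)=\alpha(Q)-\sup_{Y\in\Phi}(\E^Q[Y]-\rho(Y))$ is a difference of two convex functions of $Q$ and need not be convex, whereas the minimal penalty $\sup_X(\E^Q[X]-\bar\rho(X))$ is always convex. For a concrete example, take $\Omega=\{a,b\}$ with $P(a)=P(b)=1/2$, the entropic risk measure $\rho(X)=\log\E[e^X]$, and $\Phi=\{1_{\{a\}},1_{\{b\}}\}$. Parametrize $Q$ by $q=Q(a)$. Then $\sup_{Y\in\Phi}(\E^Q[Y]-\rho(Y))=\max(q,1-q)-\log\tfrac{e+1}{2}$, so $\bar\alpha(q)=H(Q\,|\,P)-\max(q,1-q)+\log\tfrac{e+1}{2}$. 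At $q=1/2$ the one--sided derivatives of $\bar\alpha$ are $+1$ from the left and $-1$ from the right, so $\bar\alpha$ has a strict local maximum there and is not convex. Hence $\bar\alpha$ cannot equal the minimal penalty of $\bar\rho$; only its lower semicontinuous convex envelope $\bar\alpha^{**}$ does. So your instinct was right: the obstruction you found in (iv) is genuine, and no refinement of the construction will overcome it without additional hypotheses on $\Phi$.
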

\begin{proof}
It is clear that \eqref{pr:generalizeddualinequality1} defines a risk measure. 
To see that $\bar{\alpha}$ is a penalty function for $\bar{\rho}$:
\begin{align*}
\sup_{Q \in \P_P^\infty(\Omega)}\left(\E^Q[X] - \bar{\alpha}(X)\right) &= \sup_{Q \in \P_P^\infty(\Omega)}\left[\E^Q[X] - \alpha(Q) + \sup_{Y \in \Phi}\left(\E^Q[Y] - \rho(Y)\right)\right] \\
	&= \sup_{Y \in \Phi}\left[\sup_{Q \in \P_P^\infty(\Omega)}\left(\E^Q[X+Y] - \alpha(Q) \right) -\rho(Y)\right] \\
	&= \sup_{Y \in \Phi}\left(\rho(X+Y)-\rho(Y)\right).
\end{align*}
If $\alpha$ is the \emph{minimal} penalty function for $\rho$, then by Theorem \ref{th:follmerschied} it suffices to verify the formula
\[
\bar{\alpha}(Q) = \sup_{X \in L^1}\left(\E^Q[X] - \bar{\rho}(X)\right).
\]
But this is straightforward.
\end{proof}

Combining Propositions \ref{pr:dualinequality} and \ref{pr:generalizeddualinequality} yields:

\begin{corollary} \label{co:generalizeddualinequality}
Let $\rho$ be a risk measure on $L^1$ with a penalty function $\alpha$. For $\Phi \subset L^1$, the following are equivalent:
\begin{enumerate}
\item $\rho(\lambda X + Y) - \rho(Y) \le \gamma(\lambda)$, for all $\lambda > 0$ and $Y \in \Phi$.
\item 
\[
\gamma^*(\E^Q[X])  \le \alpha(Q) - \sup_{Y \in \Phi}\left(\E^Q[Y] - \rho(Y)\right), \text{ for all } Q \in \P_P^\infty(\Omega).
\]
\end{enumerate}
\end{corollary}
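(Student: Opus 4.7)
The proof is essentially a one-line composition of the two preceding propositions, so my plan is to carry out that composition cleanly while being explicit about the small bookkeeping required.

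First, I would invoke Proposition \ref{pr:generalizeddualinequality} to define the auxiliary risk measure
\[
\bar{\rho}(X) := \sup_{Y \in \Phi}\bigl(\rho(X+Y)-\rho(Y)\bigr),
\]
which is a risk measure on $L^1$ with penalty function
\[
\bar{\alpha}(Q) = \alpha(Q) - \sup_{Y \in \Phi}\bigl(\E^Q[Y]-\rho(Y)\bigr).
\]
The point is that statement (1) of the corollary is, by the very definition of $\bar{\rho}$ and after taking the supremum over $Y \in \Phi$ on the left-hand side, simply the concentration inequality
\[
\bar{\rho}(\lambda X) \le \gamma(\lambda), \quad \forall\,\lambda \ge 0.
\]

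Next, I would apply Proposition \ref{pr:dualinequality} to the risk measure $\bar{\rho}$ with penalty function $\bar{\alpha}$. That proposition states that the above concentration inequality is equivalent to
\[
\gamma^*(\E^Q[X]) \le \bar{\alpha}(Q), \quad \forall\,Q \in \P_P^\infty(\Omega).
\]
Substituting the explicit form of $\bar{\alpha}$ yields exactly statement (2) of the corollary.

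The only minor subtlety is the passage from $\lambda > 0$ (as stated in (1)) to $\lambda \ge 0$ (needed to apply Proposition \ref{pr:dualinequality}), but this is harmless: at $\lambda = 0$ the inequality $\bar{\rho}(0) \le \gamma(0)$ holds trivially because $\bar{\rho}(0)=0$ by normalization and $\gamma(0)\ge 0$ since $\gamma$ is a shape function. I do not expect any real obstacle here; the result is a direct corollary and requires no new ideas beyond unwinding notation.
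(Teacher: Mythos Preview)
Your proposal is correct and matches the paper's approach exactly: the paper simply states that the corollary follows by combining Propositions \ref{pr:dualinequality} and \ref{pr:generalizeddualinequality}, which is precisely what you do. Your handling of the $\lambda > 0$ versus $\lambda \ge 0$ detail is a harmless piece of bookkeeping that the paper leaves implicit.
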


\begin{corollary} \label{co:sensitiveinitialposition}
Let $\rho$ be a risk measure on $L^1$ which admits a penalty function (or equivalently has the Fatou property). Let $\gamma$ be a shape function satisfying $\gamma(0)=0$. Then $X \in L^1$ satisfies
\[
\rho(\lambda X + Y) - \rho(Y) \le \gamma(\lambda), \text{ for all } \lambda > 0, \ Y \in L^1,
\]
if and only if $X \le \gamma'(0)$ a.s., where $\gamma'(0)$ is the right-derivative.
\end{corollary}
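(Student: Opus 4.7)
The plan is to dualize via Corollary \ref{co:generalizeddualinequality} and then exploit the elementary fact that, for a convex $\gamma$ with $\gamma(0)=0$, one has $\gamma^*(t) \le 0$ if and only if $t \le \gamma'(0)$. The sufficiency direction is essentially axiomatic: if $X \le \gamma'(0)$ a.s., then monotonicity and cash additivity give $\rho(\lambda X + Y) \le \rho(\lambda \gamma'(0) + Y) = \lambda \gamma'(0) + \rho(Y)$, and the affine support inequality $\gamma(\lambda) \ge \gamma(0) + \lambda \gamma'(0) = \lambda \gamma'(0)$, valid by convexity with $\gamma(0)=0$, closes the loop.

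For necessity, let $\alpha$ be the minimal penalty function of $\rho$, provided by Theorem \ref{th:follmerschied}. I would apply Corollary \ref{co:generalizeddualinequality} with $\Phi = L^1$ to see that the hypothesis is equivalent to
$\gamma^*(\E^Q X) \le \alpha(Q) - \sup_{Y \in L^1}(\E^Q Y - \rho(Y))$
for all $Q \in \P_P^\infty(\Omega)$. By \eqref{def:minimalpenalty}, the supremum on the right equals $\alpha(Q)$, so whenever $\alpha(Q) < \infty$ the inequality collapses to $\gamma^*(\E^Q X) \le 0$. Since $\gamma^*$ attains the value $0$ at $\gamma'(0)$ (because $\lambda\gamma'(0) \le \gamma(\lambda)$ for all $\lambda \ge 0$), while $\gamma^*(t) > 0$ whenever $t > \gamma'(0)$ (since $\lambda t - \gamma(\lambda) > 0$ for $\lambda > 0$ sufficiently small), this gives $\E^Q X \le \gamma'(0)$ for every $Q$ in the effective domain of $\alpha$.

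The final step, which I anticipate as the main obstacle, is to upgrade this pointwise-in-$Q$ dual bound to the almost-sure pointwise bound $X \le \gamma'(0)$. Concretely, one needs $\sup\{\E^Q X : Q \in \P_P^\infty(\Omega),\, \alpha(Q) < \infty\} = \esssup X$. The natural line of attack is to show that for every measurable $A$ with $P(A) > 0$ one can find $Q \in \P_P^\infty(\Omega)$ concentrated on $A$ with $\alpha(Q) < \infty$, so that every conditional expectation $\E[X \mid A]$ is bounded by $\gamma'(0)$; taking $A$ to be sets of the form $\{X > \gamma'(0) + \varepsilon\}$ then forces these sets to have $P$-measure zero, yielding $X \le \gamma'(0)$ a.s.
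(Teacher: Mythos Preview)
Your approach is the same as the paper's: dualize via Corollary~\ref{co:generalizeddualinequality} with $\Phi = L^1$, identify $\{t : \gamma^*(t) \le 0\} = (-\infty,\gamma'(0)]$, and conclude. The sufficiency argument is identical to the paper's.

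Where you diverge is in the handling of the final step. The paper simply writes
\[
\alpha(Q) - \sup_{Y \in L^1}\bigl(\E^Q Y - \rho(Y)\bigr) = 0
\]
for \emph{every} $Q \in \P_P^\infty(\Omega)$, and from $\E^Q X \le \gamma'(0)$ for all such $Q$ the a.s.\ bound follows immediately (test against $Q = P(\cdot \mid A)$). You are more careful, restricting to $Q$ with $\alpha(Q) < \infty$ before cancelling, and you correctly flag the residual obstacle: one needs the effective domain of $\alpha$ to be rich enough to recover $\esssup X$.

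Your caution is warranted, and in fact your proposed resolution cannot work in general. Take $\rho = \E^P$: this is a risk measure on $L^1$ with the Fatou property, and its minimal penalty satisfies $\alpha(P)=0$, $\alpha(Q)=\infty$ for $Q \ne P$. Any mean-zero $X \in L^1$ that is not a.s.\ nonpositive then satisfies $\rho(\lambda X + Y) - \rho(Y) = \lambda\E^P X = 0 \le \gamma(\lambda)$ for all $\lambda > 0$ and $Y \in L^1$, yet $X \not\le \gamma'(0)$ whenever $\gamma'(0)=0$. Thus the corollary is false at the stated level of generality; the paper's proof tacitly treats $\infty - \infty$ as $0$, and the obstacle you anticipated is real. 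An additional hypothesis on $\rho$ (for instance, that the effective domain of the minimal penalty contains $P(\cdot \mid A)$ for every $A$ of positive measure, or some equivalent sensitivity condition) is needed for both arguments to go through.
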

\begin{proof}
Let $\alpha$ be the minimal penalty function of $\rho$. Corollary \ref{co:generalizeddualinequality} with $\Phi = L^1$ implies that for all $Q \in \P_P^\infty(\Omega)$
\[
\gamma^*(\E^Q[X])  \le \alpha(Q) - \sup_{Y \in L^1}\left(\E^Q[Y] - \rho(Y)\right) = 0.
\]
Since $\gamma \ge 0$ and $\gamma(0)=0$, if $x \in \R$ satisfies $\gamma^*(x) = 0$ then necessarily $tx \le \gamma(t)$ for all $t > 0$, which implies $x \le \inf_{t > 0}\gamma(t)/t = \gamma'(0)$.
Thus $\E^Q[X] \le \gamma'(0)$ for all $Q \in \P_P^\infty(\Omega)$, which implies $X \le \gamma'(0)$ a.s. Conversely, if $X \le \gamma'(0)$ a.s., then cash additivity and monotonicity yield $\rho(\lambda X + Y)-\rho(Y) \le \lambda \gamma'(0) \le \gamma(\lambda)$, where the last inequality follows from convexity of $\gamma$.
\end{proof}

\subsection{Transport inequalities} \label{se:transportationinequalities}

Proposition \ref{pr:dualinequality} helps us understand how to control the concentration of many random variables uniformly. One special but important case of this is the class of \emph{transport  inequalities}, which we discuss for two reasons: to provide an alternative perspective on concentration inequalities, and to illustrate how our study of shortfall risk measures yields an interesting new characterization of some classical transport inequalities. 
Suppose that $\Omega$ is equipped with a metric $d$ and that $(\Omega,d)$ is a complete, separable metric space.\footnote{More generally, we may assume $\Omega$ is a Polish space and $d$ is a measurable metric on $\Omega$, not necessarily compatible with the topology.} Define the Wasserstein distances $W_p$ for $p \ge 1$ by
\begin{align}
W_p(P,Q) := \left[\inf_{\pi \in \Pi(P,Q)}\int_{\Omega \times \Omega}d(x,y)^p\pi(dx,dy)\right]^{1/p}, \label{def:wasserstein}
\end{align}
where $\Pi(P,Q)$ is the set of probability measures on $\Omega \times \Omega$ with first and second marginals equal to $P$ and $Q$, respectively. Of course, $W_p(P,Q)$ is defined only for $P,Q \in \P^p(\Omega)$, where for an arbitrary fixed $x_0 \in \Omega$ we define
\[
\P^p(\Omega) := \left\{P \in \P(\Omega) : \int_\Omega d(x,x_0)^pP(dx) < \infty \right\}.
\]
The relative entropy between two measures $P,Q \in \P(\Omega)$ is defined as usual by
\[
H(Q | P) = \begin{cases}
\int \log\frac{dQ}{dP}dQ &\text{if } Q \ll P, \\
\infty &\text{otherwise}
\end{cases}.
\]
Classically, one says that the measure $P$ verifies the transport inequality $\mathbf{T}_p$ if there exists $c > 0$ such that
\[
W_p(P,Q) \le \sqrt{cH(Q | P)}, \text{ for all } Q \in \P^p(\Omega).
\]
These inequalities have been thoroughly studied in connection with concentration of measure, as we will elaborate on shortly.
The well known Kantorovich dual description of $W_1$ is
\[
W_1(P,Q) = \sup\left\{\int_\Omega\phi\,d(P-Q) : \phi \in \mathrm{Lip}_1(\Omega)\right\},
\]
where $\mathrm{Lip}_1(\Omega)$ denotes the set of $1$-Lipschitz functions from $(\Omega,d)$ to $\R$. With this in mind, we get the following corollary of Proposition \ref{pr:dualinequality}.

\begin{corollary} \label{co:t1}
Suppose $\rho$ is a risk measure with a penalty function $\alpha$, and let $\gamma$ be a shape function.
Suppose $(\Omega,d)$ is a complete separable metric space and $P \in \P^1(\Omega)$.
The following are equivalent:
\begin{enumerate}
\item For each $f \in \mathrm{Lip}_1(\Omega)$ and each $\lambda \in \R$, we have
\begin{align}
\rho\left(\lambda f\right) \le \gamma(|\lambda|) + \lambda\E f. \label{co:transport-basic}
\end{align}
\item For each $Q \in \P_P^\infty(\Omega)$, we have
\[
\gamma^*(W_1(P,Q)) \le \alpha(Q).
\]
\end{enumerate}
\end{corollary}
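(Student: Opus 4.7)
The plan is to reduce the statement to a direct application of Proposition~\ref{pr:dualinequality} combined with Kantorovich duality for $W_1$, with the main technical care being the interchange of a supremum with $\gamma^*$.

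First I would collapse the two-sided quantifier $\lambda \in \R$ in (1) to a one-sided one: since $\mathrm{Lip}_1(\Omega)$ is closed under negation, the inequality $\rho(\lambda f) \le \gamma(|\lambda|) + \lambda \E f$ for $\lambda < 0$ applied to $f$ is the same as the inequality for $-\lambda > 0$ applied to $-f$. So (1) is equivalent to the assertion that $\rho(\lambda f) \le \gamma(\lambda) + \lambda \E f$ for all $\lambda \ge 0$ and $f \in \mathrm{Lip}_1(\Omega)$. Note that $f$ is $P$-integrable because $|f(x)| \le |f(x_0)| + d(x,x_0)$ and $P \in \P^1(\Omega)$, so $\E f$ is finite. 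By cash additivity (R2), this is equivalent to $\rho(\lambda(f - \E f)) \le \gamma(\lambda)$ for every $\lambda \ge 0$ and every $f \in \mathrm{Lip}_1(\Omega)$.

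Next I would apply Proposition~\ref{pr:dualinequality} to $X = f - \E f$ for each fixed $f$. This converts the previous condition into the dual statement: for all $f \in \mathrm{Lip}_1(\Omega)$ and all $Q \in \P_P^\infty(\Omega)$,
\[
\gamma^*(\E^Q f - \E^P f) \le \alpha(Q).
\]
Taking the supremum over $f \in \mathrm{Lip}_1(\Omega)$ on the left, and using the Kantorovich dual representation $W_1(P,Q) = \sup_{f \in \mathrm{Lip}_1(\Omega)}(\E^Q f - \E^P f)$, the claim is that this supremum can be moved inside $\gamma^*$ to yield $\gamma^*(W_1(P,Q)) \le \alpha(Q)$. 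Observe that $W_1(P,Q) < \infty$ here, since $dQ/dP \in L^\infty(P)$ together with $P \in \P^1(\Omega)$ forces $Q \in \P^1(\Omega)$.

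The step I expect to require the most care is commuting the sup with $\gamma^*$. Because $\gamma$ is a shape function with $\gamma = \infty$ on $(-\infty,0)$, the conjugate $\gamma^*$ is automatically convex, nondecreasing, and lower semicontinuous on $\R$; and any nondecreasing lsc function on $\R$ is left-continuous, since $t_n \uparrow t$ yields $\gamma^*(t_n) \le \gamma^*(t)$ by monotonicity while lsc forces $\liminf \gamma^*(t_n) \ge \gamma^*(t)$. Thus, taking any sequence $f_n \in \mathrm{Lip}_1(\Omega)$ with $\E^Q f_n - \E^P f_n \uparrow W_1(P,Q)$, I obtain $\gamma^*(\E^Q f_n - \E^P f_n) \uparrow \gamma^*(W_1(P,Q))$, which together with the trivial monotone bound in the other direction gives $\sup_f \gamma^*(\E^Q f - \E^P f) = \gamma^*(W_1(P,Q))$. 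This completes the equivalence of (1) and (2).
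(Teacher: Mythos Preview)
Your proof is correct and follows essentially the same route as the paper: reduce to $\lambda \ge 0$ via $\mathrm{Lip}_1(\Omega) = -\mathrm{Lip}_1(\Omega)$, use cash additivity, apply Proposition~\ref{pr:dualinequality}, and finish with Kantorovich duality. Your extra care in justifying the commutation of the supremum with $\gamma^*$ (via monotonicity and left-continuity of $\gamma^*$) is a detail the paper leaves implicit.
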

\begin{proof}
Note that (1) is equivalent to the same statement but with the restriction that $\lambda \ge 0$, since $\mathrm{Lip}_1(\Omega) = -\mathrm{Lip}_1(\Omega)$.
Thanks to the cash additivity of $\rho$,  the inequality \eqref{co:transport-basic} is equivalent to
\[
\rho\left(\lambda \left(f - \E f\right)\right) \le \gamma(\lambda).
\]
According to Proposition \ref{pr:dualinequality}, this is equivalent to
\[
\gamma^*(\E^Qf - \E^P f) \le \alpha(Q), \ \forall Q \in \P_P^\infty(\Omega).
\]
Taking the supremum over $f \in \mathrm{Lip}_1(\Omega)$, the claim follows from the dual formula for $W_1$.
\end{proof}

Let us generalize this discussion somewhat, in the spirit of \cite{gozlan-leonard-largedeviation,gozlan-leonard-survey}, by considering more general transport costs. Suppose $c : \Omega^2 \rightarrow [0,\infty]$ is lower semicontinuous, and assume for normalization purposes that $\inf_yc(x,y)=0$ for all $x$. Let 
\[
\T_c(P,Q) := \inf_{\pi \in \Pi(P,Q)}\int_{\Omega \times \Omega}c(x,y)\pi(dx,dy)
\]
Let $\Phi_c$ denote the set of pairs of bounded measurable functions $(\phi,\psi)$ on $\Omega$ satisfying $\phi(x) + \psi(y) \le c(x,y)$ for all $x,y \in \Omega$.
By Kantorovich duality (see, e.g., \cite{villani-book}), we have
\begin{align*}
\T_c(P,Q) = \sup_{(\phi,\psi) \in \Phi_c}\E^Q[\psi] + \E^P[\phi].
\end{align*}
With this in mind, we arrive at a characterization of more general transport inequalities, as a corollary of Proposition \ref{pr:dualinequality}:

\begin{corollary} \label{co:tc}
Suppose $\rho$ is a risk measure with penalty function $\alpha$, and let $\gamma$ be a shape function.
The following are equivalent:
\begin{enumerate}
\item For each $(\phi,\psi) \in \Phi_c$ and each $\lambda \ge 0$, we have
\begin{align*}
\rho\left(\lambda \psi\right) \le \gamma(\lambda) - \lambda\E^P[ \phi].
\end{align*}
\item For each $Q \in \P_P^\infty(\Omega)$, we have
\[
\gamma^*(T_c(P,Q)) \le \alpha(Q).
\]
\end{enumerate}
\end{corollary}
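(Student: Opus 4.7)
The plan is to reduce Corollary \ref{co:tc} to Proposition \ref{pr:dualinequality} exactly as Corollary \ref{co:t1} was, using Kantorovich duality in place of the special $W_1$ duality. The key observation is that for a fixed admissible pair $(\phi,\psi) \in \Phi_c$, the bound in (1) can be rewritten via cash additivity: since $\E^P[\phi]$ is just a constant, $\rho(\lambda\psi) \le \gamma(\lambda) - \lambda\E^P[\phi]$ is equivalent to $\rho(\lambda(\psi + \E^P[\phi])) \le \gamma(\lambda)$. The random variable $X := \psi + \E^P[\phi]$ is bounded (so in $L^1$), so I can feed it directly into Proposition \ref{pr:dualinequality}.

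First I would establish the equivalence pair-by-pair. Applying Proposition \ref{pr:dualinequality} to each such $X$, the inequality $\rho(\lambda X) \le \gamma(\lambda)$ for all $\lambda \ge 0$ is equivalent to $\gamma^*(\E^Q[X]) \le \alpha(Q)$ for all $Q \in \P_P^\infty(\Omega)$, that is,
\[
\gamma^*\bigl(\E^Q[\psi] + \E^P[\phi]\bigr) \le \alpha(Q), \quad \forall Q \in \P_P^\infty(\Omega), \ (\phi,\psi) \in \Phi_c.
\]
Thus it only remains to show this joint-in-$(\phi,\psi)$ statement is equivalent to $\gamma^*(\T_c(P,Q)) \le \alpha(Q)$ for every $Q$.

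The direction $(2) \Rightarrow (1)$ is the easy one: for any $(\phi,\psi) \in \Phi_c$, Kantorovich duality gives $\E^Q[\psi] + \E^P[\phi] \le \T_c(P,Q)$, and since $\gamma^*$ is nondecreasing on $\R$ (it is a supremum over $\lambda \ge 0$ of the affine maps $t \mapsto \lambda t - \gamma(\lambda)$), we get $\gamma^*(\E^Q[\psi] + \E^P[\phi]) \le \gamma^*(\T_c(P,Q)) \le \alpha(Q)$. The converse requires pushing a supremum through $\gamma^*$: fix $Q$ and choose a sequence $(\phi_n,\psi_n) \in \Phi_c$ with $\E^Q[\psi_n] + \E^P[\phi_n] \uparrow \T_c(P,Q)$. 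Here I expect the only mild obstacle: I need lower semicontinuity of $\gamma^*$ to conclude $\gamma^*(\T_c(P,Q)) \le \liminf_n \gamma^*(\E^Q[\psi_n] + \E^P[\phi_n]) \le \alpha(Q)$. But $\gamma^*$ is a supremum of affine functions and hence automatically convex and lsc on $\R$, so this is immediate.

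In short, the whole corollary is a one-line consequence of Proposition \ref{pr:dualinequality} plus Kantorovich duality, once one uses cash additivity to absorb $\lambda\E^P[\phi]$ into the argument of $\rho$. The only subtlety is the monotonicity/lower semicontinuity of $\gamma^*$ needed to commute the supremum with $\gamma^*$, which comes for free from $\gamma^*$ being a convex conjugate of a function supported on $[0,\infty)$.
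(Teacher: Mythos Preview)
Your proposal is correct and follows exactly the approach the paper intends: the paper does not write out a proof of Corollary \ref{co:tc} but presents it explicitly as a corollary of Proposition \ref{pr:dualinequality}, and your argument mirrors the proof of Corollary \ref{co:t1} with Kantorovich duality for $\T_c$ replacing the $W_1$ duality. The cash-additivity rewriting, the pairwise application of Proposition \ref{pr:dualinequality}, and the passage of the supremum through $\gamma^*$ via its monotonicity and lower semicontinuity are all sound.
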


Recall that in Theorem \ref{th:bigequivalence}, points (3) and (4) depend on $\ell$ and $\gamma$ only through $\ell \circ \gamma^*$. By varying $\ell$ and $\gamma$ while keeping the composition $\ell \circ \gamma^*$ unchanged, the implications surrounding points (1) and (2) in Theorem \ref{th:bigequivalence} are rather interesting. What follows is an application to transport inequalities, characterizing $\textbf{T}_p$ in terms of a seemingly weaker inequalities, up to a change in constant which we do not track carefully.

\begin{corollary} \label{co:tpequivalence}
Fix $p \in \{1,2\}$.
The $\mathbf{T}_p$ inequality
\[
W_p(P,Q) \le c\sqrt{H(Q | P)}, \ \forall Q \in \P_P^\infty(\Omega) \label{co:t1-def}
\]
holds for some $c > 0$ if and only if
\begin{align}
W_p(P,Q) \le c\sqrt{\log\left\|\frac{dQ}{dP}\right\|_{L^\infty(P)}}, \ \forall Q \in \P_P^\infty(\Omega) \label{co:t1log}
\end{align}
for some $c > 0$.
\end{corollary}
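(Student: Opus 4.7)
The easy direction is identical for $p=1$ and $p=2$: for $Q\in\P_P^\infty(\Omega)$,
\[
H(Q\,|\,P)=\E^Q[\log(dQ/dP)]\le\log\|dQ/dP\|_{L^\infty(P)},
\]
so $\mathbf{T}_p$ with constant $c$ immediately implies \eqref{co:t1log} with the same constant.

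For the converse in the case $p=1$, my plan is to rewrite each of the two inequalities as the Kantorovich dual of a Lipschitz concentration inequality in the sense of Corollary \ref{co:t1}, \emph{but for two different risk measures}, and then exploit the observation in Theorem \ref{th:bigequivalence} (and Corollary \ref{co:t1-shortfall}) that the associated moment criterion depends on $(\ell,\gamma)$ only through the composite $\ell\circ\gamma^*$. Concretely, Corollary \ref{co:t1} applied to the entropic risk measure $\rho_e(X)=\log\E[e^X]$ (penalty $H(\cdot\,|\,P)$) with $\gamma(\lambda)=c^2\lambda^2/4$, so $\gamma^*(t)=t^2/c^2$, is exactly the statement of Lipschitz subgaussian concentration dual to $\mathbf{T}_1$; whereas Corollary \ref{co:t1} applied to the shortfall risk measure $\rho_s$ with loss function $\ell(x)=(1+x)^+$ (whose minimal penalty is $\|dQ/dP\|_{L^\infty}-1$) and the shape function with $\tilde\gamma^*(t)=e^{t^2/c^2}-1$ is the dual of \eqref{co:t1log} for $W_1$. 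A direct calculation shows that in both setups the composite $\ell\circ\gamma^*$ equals the function $t\mapsto e^{t^2/c^2}$. After verifying that the two pairs both lie in $\mathrm{L}\Gamma$ and that $\ell\circ\gamma^*\in\H$, Corollary \ref{co:t1-shortfall} reduces both concentration inequalities to the same Gaussian moment condition $\int e^{cd(x,x_0)^2}\,P(dx)<\infty$, thereby proving their equivalence up to a change of constant.

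For $p=2$, the same template applies, except that Corollary \ref{co:t1} is replaced by Corollary \ref{co:tc} with the quadratic cost $c(x,y)=d(x,y)^2$ (so $T_{d^2}=W_2^2$). Now $\mathbf{T}_2$ corresponds to the entropic pairing with $\gamma^*(t)=t/c^2$ (so $\gamma$ is $0$ on $[0,1/c^2]$ and $+\infty$ elsewhere), while the $L^\infty$-version corresponds to the shortfall pairing with $\tilde\gamma^*(t)=e^{t/c^2}-1$; once again the composite $\ell\circ\gamma^*(t)=e^{t/c^2}$ is common to both. The equivalence then becomes a matter of identifying the two concentration inequalities dual to these pairings --- essentially inf-convolution inequalities over the class $(\phi,\psi)\in\Phi_{d^2}$ of Kantorovich potentials --- via an integral criterion analogous to Corollary \ref{co:t1-shortfall} but adapted to the quadratic cost.

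I expect the main obstacle to be this last step for $p=2$: no such quadratic analog of Corollary \ref{co:t1-shortfall} is developed in the excerpt, and any naive formulation must be subtle enough to capture $\mathbf{T}_2$, which is known to be strictly stronger than the Gaussian integrability that suffices for $\mathbf{T}_1$. The resolution must therefore exploit the full class $\Phi_{d^2}$ (invoking a Hamilton--Jacobi / inf-convolution semigroup to play the role Lipschitz potentials play for $W_1$) rather than mere Lipschitz observables. A secondary, purely bookkeeping, obstacle is verifying that the non-standard pairings $\ell(x)=(1+x)^+$ with $\tilde\gamma^*$ of doubly exponential type lie in $\mathrm{L}\Gamma$ and satisfy $\ell\circ\gamma^*\in\H$, which amounts to checking the derivative conditions (3a)--(3c) in Definition \ref{def:LGamma} and the integrability \eqref{eq:classH}.
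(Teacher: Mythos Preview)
Your argument for the easy direction and for $p=1$ is essentially the paper's proof: for $p=1$ the Kantorovich potentials coincide with $1$-Lipschitz functions, so routing through Corollary~\ref{co:t1-shortfall} and the single-point Gaussian moment is equivalent to the paper's direct use of Theorem~\ref{th:bigequivalence} on that class.

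For $p=2$, however, you are anticipating an obstacle that the paper simply sidesteps. The paper does not attempt to reduce to a single-point integral criterion of the form $\int\ell(\gamma^*(cd(x,x_0)))\,dP<\infty$ (which, as you correctly observe, would be too weak to characterize $\mathbf{T}_2$), nor does it invoke any Hamilton--Jacobi semigroup. Instead it applies Theorem~\ref{th:bigequivalence} directly to the abstract class $\Phi=\{\psi+\E^P\phi:(\phi,\psi)\in\Phi_{d^2}\}$ of quadratic Kantorovich potentials. By Corollary~\ref{co:tc}, condition~(2) of Theorem~\ref{th:bigequivalence} for this $\Phi$ is precisely the transport inequality in question, whether the penalty is relative entropy or $\|dQ/dP\|_{L^\infty}-1$. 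The bridge between the two is the \emph{abstract} moment condition~(4) of Theorem~\ref{th:bigequivalence}, namely $\sup_{f\in\Phi}\E^P[\ell(\gamma^*(cf^+))]<\infty$, which depends only on the composite $\ell\circ\gamma^*$ and is therefore identical for the entropic pair $(\ell,\gamma)$ and the $(1+x)^+$ pair $(\ell_1,\gamma_1)$. The inf-convolution structure you want to capture via a semigroup is already encoded in the class $\Phi_{d^2}$ through Kantorovich duality; no additional analytic input is required.

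On your bookkeeping point: the two implications use different hypotheses. One only needs $\ell_1\circ\gamma_1^*\in\H$ to pass from the $L^\infty$ transport inequality (i.e.\ condition~(2) for $(\ell_1,\gamma_1)$) to the moment condition~(4), and one only needs $(\ell,\gamma)\in\mathrm{L}\Gamma$ to pass from~(4) back to the entropic condition~(1). In particular $(\ell_1,\gamma_1)$ with its doubly-exponential $\tilde\gamma^*$ need not lie in $\mathrm{L}\Gamma$ at all.
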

\begin{proof}
The first direction is trivial, since if $\nu \ll \mu$ then 
\[
H(\nu | \mu) = \int \log\frac{d\nu}{d\mu}d\nu \le \log\left\|\frac{d\nu}{d\mu}\right\|_{L^\infty(\mu)}.
\]
To prove the converse, set $c(x,y) = |x-y|^p$ so that $\T_c = W_p^p$. Set
\[
\Phi = \{\psi + \E^P[\phi] : (\phi,\psi) \in \Phi_c \}.
\]
Set $\gamma_1^*(x) = \exp(t^{2/p})-1$. For a new constant $c > 0$ we may rewrite \eqref{co:t1log} as
\[
\gamma_1^*\left(c\sup_{f \in \Phi}\E^Q[f]\right) \le \|dQ/dP\|_{L^\infty(P)} - 1 = \alpha_1(Q), \ \forall Q \in \P_P^\infty(\Omega),
\]
where $\alpha_1$ is the minimal penalty function of the shortfall risk measure associated to the loss function $\ell_1(t) = (1+t)^+$ (see Section \ref{se:tails-shortfall}). Since the function $\ell_1 \circ \gamma_1^*(t) = \exp(t^{2/p})$ is in $\H$, we conclude from Theorem \ref{th:bigequivalence} that (for some $c$)
\[
\sup_{f \in \Phi}\E^P\left[\exp\left(c|f^+|^{2/p}\right)\right] = \sup_{f \in \Phi}\E^P\left[\ell_1(\gamma_1^*(cf^+))\right] < \infty.
\]
Now, set $\ell(t) = e^t$ and $\gamma(t) = t^{2/p}$, noting that either $\gamma'(0)=1>0$ (if $p=2$) or $\gamma''(0)=2>0$ (if $p=1$). This pair $(\ell,\gamma)$ is in $\mathrm{L}\Gamma$ and $\ell \circ \gamma^* = \ell_1 \circ \gamma_1^*$, and so Theorem \ref{th:bigequivalence} shows that this implies (again for a new constant $c$)
\[
\gamma^*\left(c\sup_{f \in \Phi}\E^Q[f]\right) \le \alpha(Q), \ \forall Q \in \P_P^\infty(\Omega),
\]
where $\alpha$ is the minimal penalty function of the shortfall risk measure associated to the loss function $\ell$. But then $\alpha(Q) = H(Q | P)$, and we arrive at the desired $\mathbf{T}_p$ inequality.
\end{proof}

\subsection{Concentration functions}
Let us briefly recast the preceding results in a perspective more familiar in the theory of concentration of measure.
Concentration of measure is classically defined in terms of enlargements of sets, and we will explain this briefly, following the presentation of Ledoux \cite{ledoux-concentration}.
Suppose throughout the section that $\Omega$ is a complete separable metric space.
The following function $C_P : (0,\infty) \rightarrow [0,1/2]$ is the \emph{concentration function} of the measure $P$:
\[
C_P(r) := \sup\left\{1 - P(A^r) : A \in \F, \ P(A) \ge 1/2\right\},
\]
where $A^r := \{\omega \in \Omega : d(\omega,A) < r\}$. Note that $C_P$ decreases to $0$ as $r \rightarrow \infty$. Gaussian concentration bounds of the form $C_P(r) \le c_1\exp(-c_2r^2)$ are strikingly ubiquitous, and this well known phenomenon is called \emph{concentration of measure}.

Given a real-valued random variable $f$ defined on $\Omega$, we say $m_f \in \R$ is a median of $f$ if $P(f \ge m_f) \ge 1/2$ and $P(f \le m_f) \ge 1/2$. Let $\text{Lip}_1(\Omega)$ denote the set of all $1$-Lipschitz function on $(\Omega,d)$. It is known \cite[Proposition 1.3]{ledoux-concentration} that
\[
C_P(r) = \sup\left\{P(f \ge m_f + r) : f \in \text{Lip}_1(\Omega), \ m_f \text{ is a median of } f\right\}.
\]
Moreover, by \cite[Proposition 1.7]{ledoux-concentration}, we can replace the median by the mean in the following sense:
\[
C_P(r) \le \sup\left\{P\left(f \ge \E^Pf + r/2\right) : f \in \text{Lip}_1(\Omega)\right\}.
\]
This leads us to a new description of concentration functions in terms of transport inequalities, which follows from  Theorem \ref{th:bigequivalence} and Corollary \ref{co:t1}:

\begin{proposition} \label{pr:concentrationfunction}
Suppose $\gamma$ is a shape function, $\ell$ is a loss function, $E$ is a Polish space, and $\mu \in \P^1(E)$. Let $\alpha$ the minimal penalty function of the shortfall risk measure $\rho$ associated to $\ell$. Consider the following two statements:
\begin{enumerate}
\item There exists $c > 0$ such that $P$ satisfies the transport inequality
\[
\gamma^*(cW_1(Q,P)) \le \inf_{t > 0}\frac{1}{t}\left\{1 + \E^P\left[\ell^*\left(t\frac{dQ}{dP}\right)\right]\right\}, \text{ for } \nu \ll \mu.
\]
\item There exists $c > 0$ such that $C_P(r) \le 1/\ell(\gamma^*(cr))$ for all $r > 0$.
\end{enumerate}
Then (1) implies (2), and the constant in (2) is half of that of (1). If  $(\ell,\gamma) \in \mathrm{L}\Gamma$ with $\ell \circ \gamma^* \in \H$ (see Definitions \ref{def:LGamma} and \ref{def:classH}), then (2) implies (1).
\end{proposition}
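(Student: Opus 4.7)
The plan is to route both implications through the existing machinery: Corollary \ref{co:t1} (which identifies transport inequalities with concentration inequalities on Lipschitz functions), Proposition \ref{pr:shortfall-tail} (tail bounds from concentration), Corollary \ref{co:t1-shortfall} (integral criteria from tail bounds), together with Ledoux's mean vs.\ median comparison.

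For $(1) \Rightarrow (2)$, I would first reformulate condition (1). Writing $\beta(t) := \gamma^*(ct)$, the hypothesis becomes $\beta(W_1(P,Q)) \le \alpha(Q)$ for all $Q \in \P_P^\infty(\Omega)$; since $\beta^*(\lambda) = \gamma(\lambda/c)$, Corollary \ref{co:t1} yields
\[
\rho(\lambda(f - \E f)) \le \gamma(\lambda/c), \text{ for all } \lambda \ge 0, \ f \in \mathrm{Lip}_1(\Omega).
\]
Proposition \ref{pr:shortfall-tail} then gives $P(f - \E f > r) \le 1/\ell(\gamma^*(cr))$ for every $r>0$ and every $1$-Lipschitz $f$. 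Ledoux's inequality $C_P(r) \le \sup_{f \in \mathrm{Lip}_1}P(f \ge \E f + r/2)$ recalled just above the Proposition then produces $C_P(r) \le 1/\ell(\gamma^*(cr/2))$, which is (2) with constant $c/2$.

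For $(2) \Rightarrow (1)$, I would test the hypothesis on the single $1$-Lipschitz function $f(x) := d(x,x_0)$. If $m_f$ is a median of $f$, the median-based description of $C_P$ gives $P(f \ge m_f + r) \le C_P(r)$ and $P(f \le m_f - r) \le C_P(r)$, hence
\[
P(|f - m_f| \ge r) \le 2 C_P(r) \le 2/\ell(\gamma^*(cr)), \ r > 0.
\]
Since $\ell \circ \gamma^* \in \H$, Lemma \ref{le:tailtointegral} upgrades this tail bound to a moment bound $\E[\ell(\gamma^*(c'|f-m_f|))] < \infty$ for some $c' > 0$. Convexity and monotonicity of $\ell \circ \gamma^*$, together with $d(x,x_0) \le |f(x) - m_f| + m_f$, then give $\int \ell(\gamma^*(c''d(x,x_0)))\,P(dx) < \infty$ for some $c'' > 0$. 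With $(\ell,\gamma) \in \mathrm{L}\Gamma$, Corollary \ref{co:t1-shortfall} promotes this integral criterion to the Lipschitz concentration inequality $\rho(\lambda f) \le \gamma(c'''|\lambda|) + \lambda \E f$ for all $f \in \mathrm{Lip}_1(\Omega)$ and $\lambda \in \R$, and a final invocation of Corollary \ref{co:t1} (applied to the rescaled shape function $\lambda \mapsto \gamma(c'''\lambda)$) translates this back to the transport inequality in (1).

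The main obstacle is the $(2) \Rightarrow (1)$ direction: the concentration function only controls deviations from the median, whereas the transport inequality demands control around the mean. The key trick is that the integral criterion in Corollary \ref{co:t1-shortfall} depends only on $\int \ell(\gamma^*(cd(\cdot,x_0)))\,dP$, so a single Lipschitz test function $d(\cdot,x_0)$ suffices, and the median-vs-mean discrepancy is absorbed into a constant via the convexity estimate on $\ell \circ \gamma^*$. Everything else is bookkeeping on constants through the chain of equivalences already established.
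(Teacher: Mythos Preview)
Your proof is correct and matches the paper's approach: the paper simply states that the proposition ``follows from Theorem \ref{th:bigequivalence} and Corollary \ref{co:t1},'' and your argument spells out precisely this route, using the components of Theorem \ref{th:bigequivalence} (Proposition \ref{pr:shortfall-tail}, Lemma \ref{le:tailtointegral}, and Corollary \ref{co:t1-shortfall}) together with Corollary \ref{co:t1} and the Ledoux mean/median comparison. The only added ingredient in your $(2)\Rightarrow(1)$ direction---extracting the integral criterion from $C_P$ via the single test function $d(\cdot,x_0)$ and absorbing the median shift by convexity---is exactly the natural way to close the loop and is implicit in the paper's citation of Theorem \ref{th:bigequivalence}.
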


An alternative, arguably more direct argument is available for a result similar to the implication $(1) \Rightarrow (2)$ of Proposition \ref{pr:concentrationfunction}, originally due to Marton \cite{marton-blowingup,marton-bounding} and extended by several authors since, as in \cite[Theorem 1.7]{gozlan-leonard-survey}. Suppose $P$ satisfies the transport inequality of Proposition \ref{pr:concentrationfunction}(1).
Fix a measurable set $A \subset \Omega$ with $P(A) \ge 1/2$, and set $B := \Omega \backslash A^r$ where $r > 0$ is fixed. For a measurable set $C$ define the new probability measure $P_C := P(\cdot \cap C)/P(C)$. Formally  interchanging the order of infimum and supremum, we compute
\begin{align}
\alpha(P_A ) &= \inf_{t > 0}\frac{1}{t}\left(1 + P(A)\ell^*(t/P(A))\right) = \sup_{s \in \R}\inf_{t > 0}\frac{1}{t}\left(1 + P(A)[st/P(A) - \ell(s)]\right) \label{def:marton-infsup} \\
	&= \sup\{s : P(A)\ell(s) \le 1\} = \ell^{-1}(1/P(A)) \le \ell^{-1}(2). \nonumber
\end{align}
Similarly,
\begin{align*}
\alpha(P_B) &= \ell^{-1}(1/P(B)) = \ell^{-1}\left(\frac{1}{1 - P(A^r)}\right).
\end{align*}
Since $d(A,B) \ge r$ and any coupling of $P_A$ and $P_B$ is concentrated on $A \times B$, we have $W_1(P_A,P_B) \ge r$. The triangle inequality for $W_1$ yields
\begin{align*}
r &\le W_1(P_A,P_B) \le W_1(P_A,P) + W_1(P_B,P) \\
	&\le (\gamma^*)^{-1}(\alpha(P_A)) + (\gamma^*)^{-1}(\alpha(P_B)) \\
		&\le (\gamma^*)^{-1}(\ell^{-1}(2)) + (\gamma^*)^{-1}\left(\ell^{-1}\left(\frac{1}{1 - P(A^r)}\right)\right).
\end{align*}
It follows that, if $r_0 := (\gamma^*)^{-1}(\ell^{-1}(2))$, then
\[
P(A^r) \ge 1 - \frac{1}{\ell \circ \gamma^*(r - r_0)},
\]
which in turn implies $C_P(r) \le 1/\ell(\gamma^*(r-r_0))$. Of course, the step \eqref{def:marton-infsup} remains to be justified, but it suffices in fact to show that if $a,b > 0$ then $at \le 1 + b\ell^*(t/b)$ for all $t > 0$ if and only if $a \le \ell^{-1}(1/b)$. But this is straightforward, since the former condition is equivalent to
\[
1 \ge \sup_{t > 0}(at - b\ell^*(t/b)) = b\sup_{t > 0}(at/b - \ell^*(t/b)) = b\ell(a).
\]
The last equality holds because $\ell^*(t) = \infty$ for $t < 0$ thanks to nonnegativity of $\ell$.

\subsection{Extensions} \label{se:examples}
In connection with transport inequalities, relative entropy is by far the most well studied of the examples of penalty functions mentioned above. However, the work of Guillin et al. \cite{guillin-leonard-wu-yao} successfully used the Donsker-Varadhan information (also known as the Fisher information) as a substitute for relative entropy in transport inequalities, motivated by rates of convergence of Markov processes. Indeed, the convex conjugate of the Donsker-Varadhan information is a risk measure related to the log moment generating function of a time-integrated Markov process, and this indeed underlies the proofs in \cite{guillin-leonard-wu-yao}. 
In connection with random matrix theory, Biane and Voiculescu \cite{biane-voiculescu} (see also \cite[Section 12]{gozlan-leonard-survey}) showed that the semi-circle law satisfies a quadratic transport inequality involving the so-called ``free entropy.''
The potential applications of dual inequalities (defined in full generality in Proposition \ref{pr:dualinequality}) are as countless as the convex functionals of probability measures (or equivalently, their conjugates: risk measures) appearing in various applications.

To mention one final example: Suppose $\Omega=\R^n$ and a penalty function $\alpha(Q)$ is defined as the \emph{martingale optimal transport cost} from $P$ to $Q$, recently introduced in \cite{beiglbock-henrylabordere-penkner}. The corresponding risk measure $\rho$ can be computed using a form of Kantorovich duality. In particular, let $c : \Omega^2 \rightarrow [0,\infty)$ be lower semicontinuous, and fix $\mu \in \P(\Omega)$ with finite first moment. Define
\[
\rho(f) = -\int_\Omega(c(x,\cdot) - f)^{**}(x)\mu(dx),
\]
where $\phi^{**}$ denotes the biconjugate or convex envelope of $\phi$. It is easily checked that $\rho$ is a risk measure (though not necessarily normalized), and its minimal penalty function is
\begin{align}
\alpha(\nu) = \inf_{\pi \in \Pi_m(\mu,\nu)}\int_{\Omega^2}c\,d\pi, \label{def:mtgkantorovich}
\end{align}
where $\Pi_m(\mu,\nu)$ denotes the (possibly empty) set of martingale couplings, i.e., the set of laws of random vectors $(X,Y)$ where $X \sim \mu$, $Y \sim \nu$, and $\E[Y | X] = X$. Indeed, the duality formula \eqref{def:mtgkantorovich} is studied carefully in several recent papers \cite{beiglboeck-juillet,beiglbock-nutz-touzi}
, with the latter paper \cite[Remark 7.9]{beiglbock-nutz-touzi} discussing the simplified form we presented in \eqref{def:mtgkantorovich}. Proposition \ref{pr:dualinequality} tells us that, if $f \in B(\Omega)$, 
\[
\rho(\lambda f) \le \gamma(\lambda), \text{ for all } \lambda \ge 0,
\]
if and only if 
\[
\gamma^*\left(\int_\Omega f\,d\nu\right) \le \alpha(\nu), \text{ for all } \nu \ll \mu.
\]
This indeed encompasses simplified forms of several martingale inequalities; for example, if $c(x,y) = -(x \vee y)^2$ and $f(y) = -4y^2$, this is nothing but Doob's inequality.
One should of course extend this idea to martingales with a larger time index set, but, conceivably, similar convex-analytic methods would apply. 
We cut the discussion short here, as any further exploration of this idea is beyond the scope of this paper.

\subsection*{Acknowledgements}
The author would like to thank Ramon van Handel, Zach Feinstein, Alexander Schied, and Thaleia Zariphopoulou for helpful discussions and invaluable advice regarding the organization of the paper.

\appendix

\section{Proof of Theorem \ref{th:integralcriterion}} \label{ap:proof}

\begin{lemma} \label{le:condition5}
If $(\ell,\gamma) \in \mathrm{L}\Gamma$, then
\[
\lim_{c \rightarrow \infty}\limsup_{n \rightarrow\infty}\sup_{x \ge c}\frac{\ell(\gamma^*(x/n))}{\ell(\gamma^*(x))} = 0.
\]
\end{lemma}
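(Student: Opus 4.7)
The plan is to exploit the convexity of the composition $F := \ell \circ \gamma^*$. Since $\gamma^*(t) = \sup_{\lambda \ge 0}(\lambda t - \gamma(\lambda))$ is a supremum of nondecreasing affine functions of $t$, it is convex and nondecreasing, and since $\ell$ is convex and nondecreasing, so is $F$. Moreover $F(0) = \ell(-\gamma(0))$ is finite because $\ell$ is $\R$-valued, and condition (2) of Definition~\ref{def:LGamma} forces $F(x) \to \infty$ as $x \to \infty$ (in fact, faster than $x^2$).

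The key step is to apply convexity of $F$ to the trivial identity $x/n = (1/n)\cdot x + (1 - 1/n)\cdot 0$, valid for $n \ge 1$, which gives
\[
F(x/n) \le \frac{1}{n} F(x) + \left(1 - \frac{1}{n}\right) F(0) \le \frac{F(x)}{n} + F(0).
\]
Dividing by $F(x)$ and taking the supremum over $x \ge c$, using that $F$ is nondecreasing and hence $F(x) \ge F(c)$ on this range, yields
\[
\sup_{x \ge c} \frac{F(x/n)}{F(x)} \le \frac{1}{n} + \frac{F(0)}{F(c)}.
\]
Passing to $\limsup_{n \to \infty}$ leaves $F(0)/F(c)$, which vanishes as $c \to \infty$ since $F(c) \to \infty$.

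There is essentially no obstacle — the proof is a one-line application of convexity, and none of the finer differentiability/growth conditions (3a)--(3c) of Definition~\ref{def:LGamma} is needed; only conditions (1) and (2) are used (condition (1) ensures $\gamma^*$, and hence $F$, is proper so that the inequality is nontrivial). The only point to watch is the possibility that $F$ takes the value $+\infty$ on part of $[0,\infty)$, but the supremum can be restricted to the interior of the finite domain of $F$, which contains all sufficiently small $x/n$, so the argument is unaffected.
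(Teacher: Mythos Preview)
Your proof is correct and in fact cleaner than the paper's. The paper argues that the supremum $\sup_{x \ge c} F(x/n)/F(x)$ (with $F=\ell\circ\gamma^*$) is \emph{attained} at the left endpoint $x=c$, invoking a monotonicity claim on the difference $F(x/n)-F(x)$; it then sends $n\to\infty$ using continuity of $\ell$ and $\gamma^*$ to obtain $\ell(-\gamma(0))/\ell(\gamma^*(c)) \le 1/\ell(\gamma^*(c))$, and finally lets $c\to\infty$. Your route sidesteps the monotonicity step entirely: the convexity bound $F(x/n)\le F(x)/n + F(0)$ gives directly $\sup_{x\ge c}F(x/n)/F(x)\le 1/n + F(0)/F(c)$, which after $n\to\infty$ yields the very same quantity $F(0)/F(c)=\ell(-\gamma(0))/\ell(\gamma^*(c))$ without ever locating the maximizer. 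Both arguments use only conditions (1) and (2) of Definition~\ref{def:LGamma}, as you note; the paper's version additionally appeals to continuity of $\ell$ and $\gamma^*$ to pass to the limit in $n$, whereas your inequality makes that limit explicit. The only caveat---the possibility that $F=+\infty$ on part of $[0,\infty)$---is handled no more carefully in the paper than in your sketch, so your treatment is on equal footing there.
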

\begin{proof}
For a fixed $n > 1$ and $c > 0$, we have $\ell(\gamma^*(c/n)) \le \ell(\gamma^*(c))$. But clearly $\ell(\gamma^*(x/n))-\ell(\gamma^*(x)$ is increasing in $x$, and so
\[
\sup_{x \ge c}\frac{\ell(\gamma^*(x/n))}{\ell(\gamma^*(x))} = \frac{\ell(\gamma^*(c/n))}{\ell(\gamma^*(c))}.
\]
Since $\ell$ and $\gamma^*$ are continuous with $\gamma^*(0)=-\gamma(0) \le 0$ and $\ell(0)=1$, we have
\[
\limsup_{n \rightarrow\infty}\sup_{x \ge c}\frac{\ell(\gamma^*(x/n))}{\ell(\gamma^*(x))} = \frac{\ell(-\gamma(0))}{\ell(\gamma^*(c))} \le \frac{1}{\ell(\gamma^*(c))}
\]
for each $c > 0$. Since $\gamma(x) < \infty$ for some $x > 0$, it follows that $\gamma^*(x)$ tends to infinity as $x \rightarrow \infty$. Thus so does $\ell(\gamma^*(x))$ (since $\ell(x) > 1$ for all $x > 0$ by definition of a loss function), and this completes the proof.
\end{proof}

Let us now turn toward the proof of Theorem \ref{th:integralcriterion}. Let us check first that
without loss of generality we may assume $\E X = 0$ for each $X \in \Phi$. Given a general $\Phi$, set $\Phi' := \{X - \E X : X \in \Phi\}$. The integrability condition
\[
\sup_{X \in \Phi}\E[\ell(\gamma^*(\kappa X^+))] < \infty
\]
implies one for $\Phi'$, albeit with a different $\kappa$, since
\[ 
\E[\ell(\gamma^*(\kappa (X - \E X)^+/2))] \le \frac{1}{2}\left\{\E[\ell(\gamma^*(\kappa X^+))] + \E[\ell(\gamma^*(\kappa (- \E X)^+))]\right\}.
\]
Since $\ell \circ \gamma^*$ is nonconstant, we have $\sup_{X \in \Phi}\E|X| < \infty$, and we conclude that 
\[
\sup_{X \in \Phi'}\E[\ell(\gamma^*(\kappa X^+/2))] < \infty.
\]
Hence, in the rest of the proof we assume $\E X = 0$ for all $X \in \Phi$.

Now, to prove Theorem \ref{th:integralcriterion}, 
it suffices to show that there exists $n > 0$ such that $\rho(\lambda X) \le \gamma(n \lambda)$ for all $\lambda \ge 0$, or equivalently $\rho(\lambda X/n - \gamma(\lambda))$ for all $\lambda \ge 0$. This is equivalent to finding $n > 0$ such that
\begin{align}
\sup_{X \in \Phi}\sup_{\lambda \ge 0}\E\left[\ell\left(\frac{\lambda X}{n} - \gamma(\lambda)\right)\right] \le 1. \label{pf:integralcriterion1}
\end{align}
The main idea is that the integral bound \eqref{th:integralcriterion-hypothesis} yields some uniform integrability, in the sense that $\ell(\lambda X/n -\gamma(\lambda)) \le \ell(\gamma^*(X/n))$; for each $X$ and $\lambda$ we can then show that $\E[\ell(\lambda X/n -\gamma(\lambda))] \rightarrow \ell(-\gamma(\lambda))$ as $n\rightarrow \infty$. But $\ell(-\gamma(\lambda)) < \ell(0) = 1$ for all $\lambda > 0$. For fixed $\delta > 0$, we can make this limit uniform over $\lambda \ge \delta$ to conclude that \eqref{pf:integralcriterion1} holds for some $n$ as long as $\lambda \ge \delta$, but small values of $\lambda$ require more care.

\textbf{Case 1:} Suppose condition (3a) of Definition \ref{def:LGamma} holds. Thanks to Lemma \ref{le:condition5}, we may find $c,N > 0$ such that for $n \ge N$
\[
\sup_{x \ge c}\frac{\ell(\gamma^*(x/(\kappa n)))}{\ell(\gamma^*(x))} \le \frac{\epsilon}{M}.
\]
Since $\ell$ and $\gamma^*$ are nondecreasing, for all $\lambda >0$ and  $X \in \Phi$ we have
\[
\ell\left(\frac{\lambda X}{n} - \gamma(\lambda)\right) \le \ell(\gamma^*(X^+/n)).
\]
Thus, using the hypothesis \eqref{th:integralcriterion-hypothesis} ,
\begin{align*}
\E\left[\ell\left(\frac{\lambda X}{n} - \gamma(\lambda)\right)1_{X > c/\kappa}\right] &\le \E\left[\ell(\gamma^*(\kappa X^+))\frac{\ell(\gamma^*(X^+/n))}{\ell(\gamma^*(\kappa X^+))}1_{X >  c/\kappa}\right] \nonumber \\
	&\le \frac{\epsilon}{M}\E\left[\ell(\gamma^*(\kappa X^+))\right] \le \epsilon. 
\end{align*}
On the other hand,
\[
\E\left[\ell\left(\frac{\lambda X}{n} - \gamma(\lambda)\right)1_{|X| \le c/\kappa}\right] \le \ell\left(\frac{\lambda c}{\kappa n} - \gamma(\lambda)\right) \le \ell(\gamma^*(c/\kappa n)).
\]
Thus
\begin{align}
\E\left[\ell\left(\frac{\lambda X}{n} - \gamma(\lambda)\right)\right] \le \epsilon + \ell(\gamma^*(c/\kappa n)). \label{pf:integralcriterion2}
\end{align}
Now note that $\gamma^*(0) = \sup_{t \ge 0}(-\gamma(t)) = -\gamma(0) < 0$. Since $\gamma$ is nonconstant, the domain $\{t \ge 0 : \gamma^*(t) < \infty\}$ has nonempty interior, on which $\gamma^*$ is continuous. We may thus find $t_0 > 0$ such that $\gamma^*(t_0) < 0$. Choose $n > c/(\kappa t_0)$ to get $\gamma^*(c/\kappa n) \le \gamma^*(t_0)$ by monotonicity of $\gamma^*$. Since $\ell'(0)$ exists and is strictly positive, and since $\ell(0)=1$, we know that $\ell(x) < 1$ for all $x < 0$. Finally, choose
\[
\epsilon := [1 - \ell(\gamma^*(t_0))] > 0,
\]
and conclude from \eqref{pf:integralcriterion2} that \eqref{pf:integralcriterion1} holds for sufficiently large $n$.

\textbf{Case 2, Step 1:} Suppose condition (3b) of Definition \ref{def:LGamma} holds. We first take care of small values of $\lambda$. We wish to find $\delta > 0$  and $N > 0$ such that
\begin{align}
\sup_{n \ge N}\sup_{\lambda \le \delta}\sup_{X \in \Phi}\E\left[\ell\left(\frac{\lambda X}{n} - \gamma(\lambda)\right)\right] \le 1.  \label{pf:integralcriterion3}
\end{align}
Note that 
\begin{align*}
\frac{d}{d\lambda}\ell\left(\frac{\lambda X}{n} - \gamma(\lambda)\right) &= \ell'\left(\frac{\lambda X}{n} - \gamma(\lambda)\right)\left(\frac{X}{n} - \gamma'(\lambda)\right).
\end{align*}
By the mean value theorem, for $\lambda > 0$ we may find $t_\lambda \in [0,\lambda]$ such that
\begin{align*}
\E\left[\ell\left(\frac{\lambda X}{n} - \gamma(\lambda)\right)\right] &= 1 + \lambda\E\left[\ell'\left(\frac{t_\lambda X}{n} - \gamma(t_\lambda)\right)\left(\frac{X}{n} - \gamma'(t_\lambda)\right) \right].
\end{align*}
To prove \eqref{pf:integralcriterion3} it now suffices to show
\begin{align}
\limsup_{\delta\downarrow 0}\limsup_{n\rightarrow\infty}\sup_{\lambda \in [0,\delta]}\sup_{X \in \Phi}\E\left[\ell'\left(\frac{t_\lambda X}{n} - \gamma(t_\lambda)\right)\left(\frac{X}{n} - \gamma'(t_\lambda)\right) \right] < 0. \label{pf:integralcriterion4}
\end{align}
Fix $\epsilon > 0$ and $\delta > 0$. By assumption (3b), we may find $x_0 > 0$ such that $x\ell'(x) \le \epsilon\ell(\gamma^*(x))$ for all $x \ge x_0$
Note also that $\ell'$ is nonnegative and nondecreasing, by the definition of a loss function. For each $\lambda \le \delta$ we have
\[
\frac{X}{n}\ell'\left(\frac{t_\lambda X}{n} - \gamma(t_\lambda)\right)1_{X \ge x_0} \le \frac{X^+}{n}\ell'\left(\frac{\delta X^+}{n}\right)1_{X \ge x_0} \le \frac{\epsilon}{\delta}\ell \circ \gamma^*\left(\frac{\delta X^+}{n}\right).
\]
Since $\gamma^*(0)=0$ and $\ell(0)=1$, convexity of $\ell\circ\gamma^*$ implies
\begin{align}
\ell \circ \gamma^*(t x) \le t\ell \circ \gamma^*(x) + (1-t). \label{pf:integralcriterion5}
\end{align}
Thus, whenever $n\kappa \ge \delta$,
\[
\frac{X}{n}\ell'\left(\frac{t_\lambda X}{n} - \gamma(t_\lambda)\right) \le \frac{\epsilon}{\kappa n}\ell( \gamma^*(\kappa X^+)) + \frac{\epsilon}{\delta}\left(1 - \frac{\delta}{\kappa n}\right).
\]
On the other hand,
\[
\frac{X}{n}\ell'\left(\frac{t_\lambda X}{n} - \gamma(t_\lambda)\right)1_{|X| < x_0} \le \frac{x_0}{n}\ell'\left(\frac{t_\lambda x_0}{n} - \gamma(t_\lambda)\right) \le \frac{x_0}{n}\ell'(\delta x_0/n),
\]
which clearly tends to zero as $n\rightarrow\infty$.
We conclude from the last two displays along with the hypothesis \eqref{th:integralcriterion-hypothesis} that
\[
\limsup_{n\rightarrow\infty}\sup_{\lambda \in [0,\delta]}\sup_{X \in \Phi}\E\left[\frac{X}{n}\ell'\left(\frac{t_\lambda X}{n} - \gamma(t_\lambda)\right)\right] \le  \epsilon/\delta.
\]
Since this holds for each $\epsilon > 0$, the limsup is in fact zero, and we get 
\[
\limsup_{\delta\downarrow 0}\limsup_{n\rightarrow\infty}\sup_{\lambda \in [0,\delta]}\sup_{X \in \Phi}\E\left[\ell'\left(\frac{t_\lambda X}{n} - \gamma(t_\lambda)\right)\frac{X}{n}\right] \le 0.
\]
The proof of \eqref{pf:integralcriterion4} will be complete as soon as we check that
\begin{align}
\limsup_{\delta\downarrow 0}\limsup_{n\rightarrow\infty}\sup_{\lambda \in [0,\delta]}\sup_{X \in \Phi}-\gamma'(t_\lambda)\E\left[\ell'\left(\frac{t_\lambda X}{n} - \gamma(t_\lambda)\right)\right] < 0. \label{pf:integralcriterion6}
\end{align}
Note that $\gamma'(t_\lambda) \ge \gamma'(0) > 0$ and also
\[
\ell'\left(\frac{t_\lambda X}{n} - \gamma(t_\lambda)\right) \ge \ell'\left(-\frac{\delta |X|}{n} - \gamma(\delta)\right).
\]
Since $\ell'$ is continuous, nondecreasing, and nonnegative, the random variables on the right-hand side are bounded, uniformly in $X$ and $n$, and increase pointwise to the constant $\ell'(-\gamma(\delta))$ for each $X$. The convergence is uniform by Dini's theorem, and thus
\begin{align}
\lim_{n\rightarrow\infty}\sup_{X \in \Phi}\E\left[\ell'\left(-\frac{\delta |X|}{n} - \gamma(\delta)\right)\right] = \ell'(- \gamma(\delta)). \label{pf:integralcriterion7}
\end{align}
Thus
\[
\limsup_{\delta\downarrow 0}\limsup_{n\rightarrow\infty}\sup_{\lambda \in [0,\delta]}\sup_{X \in \Phi}-\gamma'(t_\lambda)\E\left[\ell'\left(\frac{t_\lambda X}{n} - \gamma(t_\lambda)\right)\right] \le -\gamma'(0)\limsup_{\delta\downarrow 0}\ell'(- \gamma(\delta)) = -\gamma'(0)\ell'(0).
\]
By hypothesis, this quantity is strictly negative, proving \eqref{pf:integralcriterion6}.

\textbf{Case 2, Step 2:} We now turn to large values of $\lambda$, now that $\delta > 0$ has been fixed, and show that there exists $N > 0$ such that
\begin{align}
\sup_{\lambda \ge \delta}\sup_{n \ge N}\E\left[\ell\left(\frac{\lambda X}{n} - \gamma(\lambda)\right)\right] \le 1.  \label{pf:integralcriterion8}
\end{align}
Since $\ell'(0)$ exists and is strictly positive, and since $\ell(0)=1$, we know that $\ell(x) < 1$ for all $x < 0$. Thus $\ell(-\gamma(\delta)) < 1$, and with great foresight define
\[
\epsilon := [1 - \ell(-\gamma(\delta))]/2 > 0.
\]
Thanks to Lemma \ref{le:condition5}, we may find $c,N > 0$ such that for $n \ge N$
\[
\sup_{x \ge c}\frac{\ell(\gamma^*(x/(\kappa n)))}{\ell(\gamma^*(x))} \le \frac{\epsilon}{M}.
\]
Note also that, for all $\lambda >0$,
\[
\ell\left(\frac{\lambda X}{n} - \gamma(\lambda)\right) \le \ell(\gamma^*(|X|/n)).
\]
Thus, using the hypothesis \eqref{th:integralcriterion-hypothesis},
\begin{align}
\E\left[\ell\left(\frac{\lambda X}{n} - \gamma(\lambda)\right)1_{X > c/\kappa}\right] &\le \E\left[\ell(\gamma^*(\kappa X^+))\frac{\ell(\gamma^*(X^+/n))}{\ell(\gamma^*(\kappa X^+))}1_{X >  c/\kappa}\right] \nonumber \\
	&\le \frac{\epsilon}{M}\E\left[\ell(\gamma^*(\kappa X^+))\right] \le \epsilon. \label{pf:integralcriterion9}
\end{align}
Now note that for any $t > 0$, the supremum in $\sup_{\lambda \ge \delta}(\lambda t - \gamma(\lambda))$ is attained 
by $\lambda = \inf\{s \ge \delta : \gamma'(s) \ge t\}$, where $\gamma'$ is the right-derivative of $\gamma$. So if $t \le \gamma'(\delta)$, we have $\sup_{\lambda \ge \delta}\ell(\lambda t - \gamma(\lambda)) = \ell(\delta t - \gamma(\delta))$. Since $\gamma(x) > 0 = \gamma(0)$ for all $x > 0$, and since $\gamma$ is increasing and convex, we have $\gamma'(\delta) > 0$. Hence, if
\[
n \ge \frac{c}{\kappa\gamma'(\delta)},
\]
then
\begin{align}
\sup_{\lambda \ge \delta}\E\left[\ell\left(\frac{\lambda X^+}{n} - \gamma(\lambda)\right)1_{X \le c/\kappa}\right] \le \ell\left(\frac{\delta c}{n} - \gamma(\delta)\right). \label{pf:integralcriterion10}
\end{align}
Since the righthand side converges to $\ell(-\gamma(\delta))$ as $n\rightarrow\infty$, for sufficiently large $n$ (again depending only on $c$ and $\delta$) we have
\[
\ell\left(\frac{\delta c}{n} - \gamma(\delta)\right) \le \ell(-\gamma(\delta)) + \epsilon.
\]
Combining this with \eqref{pf:integralcriterion9} and \eqref{pf:integralcriterion10} shows that for sufficiently large $n$
\[
\sup_{\lambda \ge \delta}\E\left[\ell\left(\frac{\lambda X^+}{n} - \gamma(\lambda)\right)\right] \le 2\epsilon + \ell(-\gamma(\delta)) = 1.
\]
Combining this with \eqref{pf:integralcriterion3} completes the proof of \eqref{pf:integralcriterion1}, and thus finishes Case 2.

\textbf{Case 3, Step 1:} Suppose condition (3b) of Definition \ref{def:LGamma} holds. Again, we first take care of small values of $\lambda$, by finding $\delta > 0$ and $N > 0$ such that \eqref{pf:integralcriterion3} holds.
Note that
\begin{align*}
\frac{d^2}{d\lambda^2}\ell\left(\frac{\lambda X}{n} - \gamma(\lambda)\right) &= \ell''\left(\frac{\lambda X}{n} - \gamma(\lambda)\right)\left[\frac{X}{n} - \gamma'(\lambda)\right]^2 - \gamma''(\lambda)\ell'\left(\frac{\lambda X}{n} - \gamma(\lambda)\right).
\end{align*}
By Taylor's theorem, for $\lambda > 0$ we may find $t_\lambda \in [0,\lambda]$ such that
\begin{align}
\E\left[\ell\left(\frac{\lambda X}{n} - \gamma(\lambda)\right)\right] &= 1 - \lambda\gamma'(0)\ell'(0) + \frac{\lambda^2}{2}\E\left[A(\lambda,X) - B(\lambda,X)\right], \label{pf:taylor}
\end{align}
where
\begin{align*}
A_n(\lambda,X) &:= \E\left[\ell''\left(\frac{t_\lambda X}{n} - \gamma(t_\lambda)\right)\left(\frac{X}{n^2} - \gamma'(t_\lambda)\right)^2\right], \\
B_n(\lambda,X) &:= \gamma''(t_\lambda)\E\left[\ell'\left(\frac{t_\lambda X}{n} - \gamma(t_\lambda)\right)\right].
\end{align*}
and where we used $\E X = 0$ to simplify the first order term.
Now, fix $\delta > 0$ to be specified later. Thanks to \eqref{pf:taylor}, to prove \eqref{pf:integralcriterion3} it suffices to show that
\begin{align}
\limsup_{\delta\downarrow 0}\limsup_{n\rightarrow\infty}\sup_{\lambda \in [0,\delta]}\sup_{X \in \Phi}\E\left[A_n(\lambda,X) - B_n(\lambda,X)\right] < 0, \label{pf:integralcriterion11}
\end{align}
for sufficiently large $n$. 
Fix $\epsilon,\delta > 0$, with $\delta$ to be determined later.
If $\lambda \le \delta$, then since $\ell''$ is nondecreasing we have
\begin{align*}
\E[A_n(\lambda,X)] &\le 2\E\left[\left(\frac{X^2}{n^2} + |\gamma'(t_\lambda)|^2\right)\ell''(t_\lambda X/n - \gamma(t_\lambda))\right] \\
	&\le 2\E\left[\left(\frac{X^2}{n^2} + |\gamma'(\delta)|^2\right)\ell''(\delta X/n)\right].
\end{align*}
Now note that by assumption (3c) there exists $x_0 > 0$ such that $x^2\ell''(x) \le \epsilon\ell(\gamma^*(x))$ for all $x \ge x_0$.
Thus
\begin{align*}
\frac{X^2}{n^2}\ell''(\delta X/n)1_{|X| \ge x_0} &\le \frac{\epsilon}{\delta^2}\ell(\gamma^*(\delta|X|/n))1_{|X| \ge x_0} \le \frac{\epsilon}{\delta \kappa n}\ell(\gamma^*(\kappa|X|)) + \frac{\epsilon}{\delta^2}\ell(\gamma^*(\delta|X|/n))\left(1 - \frac{\delta}{\kappa n}\right),
\end{align*}
where the last inequality follows from convexity of $\ell \circ \gamma^*$ as in \eqref{pf:integralcriterion5} and holds whenever $n\kappa \ge \delta$.
On the other hand, 
\[
\frac{X^2}{n^2}\ell''(\delta X/n)1_{|X| < x_0} \le  \frac{x_0^2}{n^2}\ell''(\delta x_0/n),
\]
which clearly tends to zero as $n\rightarrow\infty$.
Combining the above two inequalities yields
\[
\limsup_{n \rightarrow \infty}\sup_{X \in \Phi}\E\left[\frac{X^2}{n^2}\ell''(\delta X/n)\right] \le \epsilon/\delta.
\]
Since $\epsilon > 0$ was arbitrary, the above limsup is in fact a limit equal to zero.
Thus
\[
\limsup_{n \rightarrow \infty}\sup_{X \in \Phi}\E[A_n(\lambda,X)] \le 2|\gamma'(\delta)|\sup_{X \in \Phi}\E[\ell''(\delta|X|)],
\]
A similar truncation argument (using $x_0$) shows that the supremum over $X \in \Phi$ on the right-hand side is finite, and thus since $\gamma'(\delta)=0$ we get
\begin{align}
\limsup_{\delta\downarrow 0}\limsup_{n \rightarrow \infty}\sup_{X \in \Phi}\E[A_n(\lambda,X)] = 0. \label{pf:integralcriterion12}
\end{align}
On the other hand, for the $B$ term, note that
\begin{align*}
\E[B_n(\lambda,X)] &\ge \gamma''(t_\delta)\E\left[\ell'\left(-\frac{\delta |X|}{n} - \gamma(\delta)\right)\right] \ge I_\delta  \E\left[\ell'\left(-\frac{\delta |X|}{n} - \gamma(\delta)\right)\right],
\end{align*}
where $I_\delta := \inf_{t \in [0,\delta]}\gamma''(t)$ is strictly positive for sufficiently small $\delta$. Thanks to \eqref{pf:integralcriterion7} from Case 2 above, we have
\[
\liminf_{n\rightarrow\infty}\sup_{X \in \Phi}\E[B_n(\lambda,X)] \ge I_\delta\ell'(-\gamma(\delta)),
\]
and thus
\[
\liminf_{\delta \downarrow 0}\liminf_{n\rightarrow\infty}\sup_{X \in \Phi}\E[B_n(\lambda,X)] \ge \gamma'(0)\ell'(0).
\]
Finally, from this and \eqref{pf:integralcriterion12} we conclude 
\begin{align*}
\limsup_{\delta\downarrow 0}\limsup_{n\rightarrow\infty}\sup_{\lambda \in [0,\delta]}\sup_{X \in \Phi}\E\left[A_n(\lambda,X) - B_n(\lambda,X)\right] \le -\gamma''(0)\ell'(0) < 0.
\end{align*}

\textbf{Case 3, Step 2:} To deal with large values of $\lambda$, we follow exactly the same proof as in Step 2 of Case 2 to show that there exists $N > 0$ such that \eqref{pf:integralcriterion8} holds.
Combining this with \eqref{pf:integralcriterion3} completes the proof of \eqref{pf:integralcriterion1}, and thus the theorem.

\bibliographystyle{amsplain}
\bibliography{../riskmeasures-bib}

\end{document}